\title{%
  \MakeUppercase{A universality theorem for allowable sequences with applications}%
}
\author{%
  Udo~Hoffmann%
  \thanks{\affil{Universit\'e libre de Bruxelles}, 
          \email{hoffmann.odu@gmail.com, keno.merckx@ulb.ac.be}}\,
  and Keno~Merckx\footnotemark[1]%
}
\theoremstyle{plain}
\newtheorem{theorem}{Theorem}
\newtheorem{corollary}[theorem]{Corollary}
\newtheorem{lemma}[theorem]{Lemma}
\newtheorem{proposition}[theorem]{Proposition}
\newtheorem{remark}[theorem]{Remark}
\theoremstyle{definition}
\newtheorem{observation}[theorem]{Observation}
\DeclareMathOperator{\conv}{conv}
\DeclareMathOperator{\sgn}{sgn}
\newcommand{\R}{\mathbb{R}}
\newcommand{\ER}{\ensuremath{\exists\mathbb{R}}}
\newcommand{\CCC}{\mathcal{C}{}}
\newcommand{\PPP}{\mathcal{P}{}}
\begin{document}
\maketitle

\begin{abstract}
  Order types are a well known abstraction of combinatorial properties of a
  point set. By Mn\"ev's universality theorem for each semi-algebraic set $V$ there
  is an order type with a realization space that is \emph{stably equivalent} to $V$.
  We consider realization spaces of \emph{allowable sequences}, a refinement of order types.
  We show that the realization spaces of allowable sequences
  are \emph{universal} and consequently deciding the realizability is complete in the \emph{existential theory of the reals} (\ER).
  This result  holds even if the realization space 
  of the order type induced by the allowable sequence is non-empty.
  Furthermore, we argue that our result is a useful tool for further geometric reductions.
  We support this by giving \ER-hardness proofs for the realizability of abstract convex geometries and for the recognition problem
  of visibility graphs of polygons with holes using the hardness result for allowable sequences. This solves two longstanding open problems.
\end{abstract}

\section{Introduction}
Combinatorial abstractions of properties of point sets are a useful tool in computational geometry. For example, many algorithms on point sets only require the \emph{order type} instead of the exact coordinates of the points, which makes many algorithms in computational geometry purely combinatorial.
We recall the definition of an \emph{(abstract) order type}.
A \emph{chirotope} of a point set $P$ in the plane is the mapping
 $\chi: P^{3} \rightarrow \{-1,0,1\}$, where 
\begin{align*}
\chi(p_1,p_2,p_3)=  \sgn\left( \begin{vmatrix}
   p_{1_x} & p_{1_y} & 1 \\
   p_{2_x} & p_{2_y} & 1 \\
   p_{3_x} & p_{3_y} & 1
\end{vmatrix}
\right).
\end{align*} 
The map $\chi$ encodes an orientation (clockwise, collinear, or counterclockwise) for triples of points.
For a set $E$, the pair $(E,\chi)$ is  an \emph{abstract order type}  if $\chi:E^{3} \rightarrow \{-1,0,1\}$ satisfies the \emph{rank 3 chirotope axioms}: for any $p_1,p_2,p_3,q_1,q_2,q_3$ in $E$,
\begin{enumerate}
\item $\chi$  is not identically zero.
\item $\chi(p_{\sigma(1)},p_{\sigma(2)},p_{\sigma(3)})=\sgn(\sigma) \chi(p_1,p_2,p_3)$ for $p_1,p_2,p_3 \in E$ and any permutation $\sigma$.
\item If  $\chi(q_1,p_1,p_2)\chi(p_1,q_2,q_3) \geq 0$ and $\chi(q_2,p_1,p_2)\chi(q_1,p_1,q_3) \geq 0$ and  $\chi(q_3,p_1,p_2)\chi(q_1,q_2,p_1) \geq 0$ then $\chi(p_1,p_2,p_3)\chi(q_1,q_2,q_3) \geq 0$.
\end{enumerate}
An abstract order type $(P,\chi)$  is \emph{realizable} if there exists a point set in the plane with order type $\chi$. The sign conditions given for three points translates geometrically into a clockwise ($\chi(p_1,p_2,p_3)$ negative) or counterclockwise ($\chi(p_1,p_2,p_3)$ positive) orientation of the points $p_1,p_2,p_3$. 
The concept of (abstract) order types appears under different names in the literature, for example
oriented matroids and CC-systems~\cite{knuth1992axioms} (counterclockwise-systems).

It is a natural question to ask how the point sets that agree with the combinatorial description -- the \emph{realization space} -- looks like.
For order types this question was answered by Mn\"ev~\cite{mnev1988universality} with his famous universality theorem:
For each \emph{primary semi-algebraic} set $S$ (a set described by strict polynomial inequalities and polynomial equalities) there is an order type whose realization
space is \emph{stably equivalent} to $S$. So the realization space can be as complex as possible.
From the computational point of view Mn\"ev's universality theorem implies that deciding if an
order type is \emph{realizable} (i.e., the realization space is non-empty) is complete in the
\emph{existential theory of the reals} (\ER), thus at least NP-hard.

In this paper we consider another combinatorial description of a point set, the \emph{allowable sequence} or \emph{circular sequence}. We use $[n]$ to denote $\{1,\ldots,n\}$.
An allowable sequence is a sequence of permutations $\pi_1,\dots,\pi_k$ of the elements of $[n]$
such that every pair of elements is reversed exactly once in the sequence, and two consecutive permutations
differ by a \emph{move}, the reversal of disjoint substrings of the permutation.
An allowable sequence can be obtained from a point set in the following way.
The orthogonal projection of the point set onto an oriented line leads to a permutation.
Rotating this line by a continuous motion by $180^\circ$ leads to a sequence of permutations which is an allowable sequence.

\begin{figure}[ht]
  \centering
  \includegraphics[width=.5\textwidth]{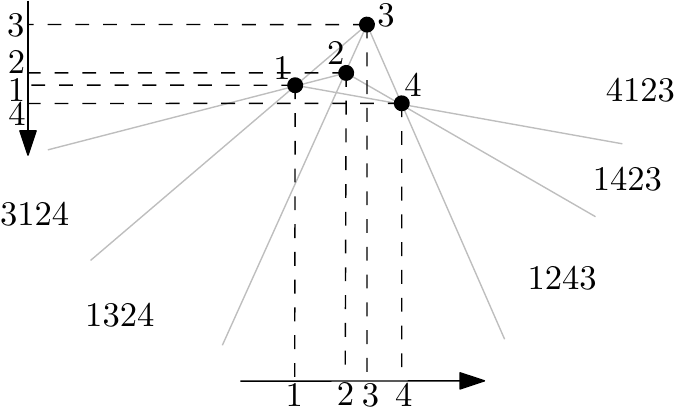}
  \caption{\label{fig:seq} An example of an allowable sequence of a point set.}
\end{figure}

We show that the same type of universality as for order types also holds for allowable sequences. 
While this is not a surprising result since allowable sequences are a refinement of order types
(i.e., the allowable sequence determines the order type~\cite{goodman1980combinatorial}),
we show that this result holds even if the order type determined
by the allowable sequence is realizable.
We generalize the \ER-completeness result to \emph{simple} allowable sequences,
i.e., sequences where each move between two permutations is a \emph{swap} of two adjacent elements.
Realizable simple allowable sequences correspond to point sets in general position with the additional condition that no two lines spanned by different pairs of points are parallel.

Furthermore, we argue that the \ER-hardness of allowable sequence realizability can be a very useful tool
for further reductions.
To support this we show \ER-hardness for two other longstanding open problems
-- the realizability of convex geometries and the recognition of vertex visibility graphs of polygons with holes.

In 1985, Edelman and Jamison~\cite{Edelman_Jamison_1985} gave an axiomatic abstraction of finite convex geometries.
A pair $(V,\CCC)$ with $\CCC\subseteq 2^V$ is called a convex geometry if
\begin{enumerate}
	\item $\emptyset \in \CCC$ and $V\in\CCC$,
	\item the set $\CCC$ is closed under intersection, and
	\item for each $C\in\CCC\setminus V$ there
	is $x\in V\setminus C$ such that $C\cup \{x\}\in \CCC$. 
\end{enumerate}
Starting with a set of points $P$ in $\R^{d}$,
one can obtain an abstract convex geometry $(P,\CCC_P)$ by defining $\CCC_P$ as $\{C\subseteq 2^{\PPP}: \conv(C) \cap P = C\}$
where the operator $\conv(C)$ is the classic convex hull of the set $C$.
For a convex geometry $(V,\CCC)$ we say that $(V,\CCC)$ is realized by a set of points $P$ in $\R^d$ if  $(V,\CCC)$ is isomorphic to $(P,\CCC_P)$.
Edelman and Jamison posed the problem of characterizing the convex geometries that can be realized as convex geometries of a finite point set in $\R^d$.  

For $d=2$, Adaricheva and White~\cite{Adaricheva10a} have shown that the convex geometry determines
the order type of the realizing point set once the cyclic order of the points on the convex hull is
assumed to be fixed, showing that the algorithmic problem of deciding a restricted version of the \emph{Edelman--Jamison problem} is \ER-hard.
We give a simple construction that shows \ER-hardness of deciding the realizability of convex geometries in $\R^2$ from
using the fact that deciding the realizability of allowable sequences is \ER-hard.
This construction can be lifted in any dimension, thus solves the Edelman--Jamison problem completely.

The \emph{(vertex) visibility graph of polygons} is defined as a graph $G=(V,E)$ where $V$ is the set
of vertices of the polygon and a pair of vertices is in $E$ if the two vertices can \emph{see} each other, i.e.,
the segment spanned by the vertices is entirely contained in the polygon.
The recognition problem of polygon visibility graphs, i.e., given a graph $G$, decide whether $G$ is the
visibility graph of a polygon, has been studied extensively.
However, no hardness results nor a (non-deterministic) polynomial-time algorithm for this problem are known.
Abello and Kumar~\cite{AK02} pointed out connections to \emph{oriented matroid realizability},
which is an equivalent problem to order type realizability.
We consider the visibility graph of polygons with holes or \emph{polygonal domains}.
Using allowable sequence realizability, we show that the recognition of visibility graphs
of polygons with holes is \ER-complete.

\subsection{Related work and Connections}\label{subsec:rel}

The \emph{existential theory of the reals} (\ER) is a complexity class defined by the following complete problem:
given Boolean combination of polynomial equalities and inequalities, decide if there is an assignment of
real values to the variables such that the system is satisfied.
The only known relations to other complexity classes are $NP\subseteq\ER\subseteq PSPACE$.
The second relation is a result by Canny~\cite{canny1988}.
Whether one of the relations is strict or an equality is not known.
An argument suggesting that \ER~may not be contained in NP is the fact that the
natural certificate for \ER-hard problems, the coordinates of a solution, cannot
always be stored in polynomial space since iterative squaring produces doubly exponential numbers in the
input size, which requires exponential size binary representations~\cite{MM13}.

The complexity class \ER\ has been introduced Schaefer~\cite{S09}, motivated by the
continuously expanding
list of (geometric) problems that have been shown to be complete in this class.
One of the first geometric \ER-complete problems is \emph{order type realizability}:
Given an orientation (clockwise, counterclockwise, collinear) of each triple of elements of a ground set $P$,
decide if $P$ can be mapped to points in the plane such that the triples of points have the prescribed orientation.
The \ER-hardness of order type realizability has been shown by Mn\"ev with his famous
universality theorem:
For each polynomial inequality system $S$ that consists of strict inequalities
and equalities (\emph{primary semi-algebraic set})
there is an order type whose \emph{realization space} is \emph{stably equivalent} to $S$.
This means that the order type constructed from $S$ is not only realizable if the set solutions of $S$
is non-empty, but the set of solutions also has the same ``structure''.
The structures preserved by the \emph{stable equivalence} relation include the homotopy group and the algebraic
complexity of the numbers (e.g., ``Does $S$ have a rational solution''),
see~\cite{richter1995realization} for more consequences of stable equivalence.
We give a more detailed introduction to \emph{stable equivalence} in Section~\ref{sec:allow}.

Many geometric problems can be shown to be \ER-hard by a reduction from order type realizability
or the dual problem, the stretchability of  pseudoline arrangements.
Some \ER-complete geometric problems include recognition of segment~\cite{kratochvil1994intersection},
disc~\cite{MM13}, and convex set intersection graphs~\cite{S09}
and point visibility graphs~\cite{cardinalPVG}, the art gallery problem~\cite{abrahamsen2017art}, realizability of the face lattice of a $4$-polytope~\cite{richter1995realization} and
$d$-dimensional Delaunay triangulations~\cite{APT15}, computing the rectilinear crossing number~\cite{B91} and planar slope number~\cite{planarSlopes}.
We refer to~\cite{cardinal2015computational} for an overview.


The concept of an allowable sequence has first been described by Perrin~\cite{perrin1882probleme}
in 1882 who conjectured that every allowable sequence is realizable.
This was disproved by Goodman and Pollack~\cite{goodman1980combinatorial} by showing that the ``bad pentagon''
-- an allowable sequence that \emph{induces} the order type of a convex $5$-gon -- is not realizable.
This paper also introduces the term ``allowable sequence'' for point sets in general position.
The authors generalized the definition in~\cite{goodman1982}
to what we call \emph{generalized allowable sequence} in this paper.
Allowable sequences have several applications in combinatorial geometry, see for example~\cite{goodman1993circular}. 
The computational complexity of the realizability problem has been posed as an open question in~\cite{pilzDiss}.
A slightly different version of the \ER-hardness result can be found
in the first authors PhD thesis~\cite{hoffmann2016thesis}.


In their paper Edelman and Jamison~\cite{Edelman_Jamison_1985} have developed the foundations of a combinatorial abstraction of convexity.
Similar ideas were studied by Dilworth~\cite{Dilworth_1940} and later by Korte, Lov{\'a}sz and Schrader~\cite{Korte_Lovasz_Schrader_1991}
via the notion of \emph{antimatroid}, a concept dual to the one of a finite
convex geometry.
Today, the concept of convex geometry appears in many fields of mathematics such as formal language theory~\cite{Boyd_Faigle_90},
choice theory~\cite{Koshevoy_1999} and mathematical psychology~\cite{Falmagne_Doignon_LS} among others.
Kashiwabara et al.~\cite{kashiwabara2005affine} showed that any abstract convex geometry can be represented as ``generalized shelling'' in $\R^d$ for some $d$.
Richters and Rogers~\cite{richter2015embedding} reproved this theorem giving a
better bound on the dimension.
Different representations of finite convex geometries using different shapes than points for
the ground set have been
studied~\cite{ada16circles,czedli16almostCircles,czedli2014finite}.

Visibility graphs have been an active field of research over the last 40 years.
Polygon visibility graphs have many applications, for example motion planning in robotics~\cite{LP79}.
So there have been some attempts to give a combinatorial characterization of point visibility graphs.
Sequences of papers~\cite{G88,G97} proposed necessary conditions and conjectured them to be sufficient, which has
always been disproved~\cite{S05}.
The problems here are due to stretchability/realizability issues, which motivated the notion of pseudo-visibility graphs~\cite{ORS97}
(which have a similar relation to polygon visibility graphs as pseudoline arrangements to line arrangements).
Pseudo-visibility graphs can be characterized combinatorially~\cite{gibson2015characterization}.
All those characterizations use the Hamiltonian cycle of the visibility graph induced by the outer boundary of a hole-free polygon.
Even if this Hamiltonian cycle is given, the complexity of the recognition problem for (pseudo-)visibility graphs is open.
The only hardness result is the NP-hardness of a ``sandwich version'' of the recognition problem of polygon visibility graphs~\cite{chaplick2016}, i.e., deciding if for two input graphs $G\subseteq H$ there exists a polygon visibility graph $K$ with $G\subseteq K\subseteq H$ is NP-hard.
There are a couple positive results characterizing and recognizing visibility graphs of special classes of polygons,
for example for \emph{spiral polygons}~\cite{everett1990recognizing}, \emph{tower polygons}~\cite{colley1997visibility}, and \emph{anchor polygons}~\cite{boomari2016}.
To our knowledge, the only result involving the characterization of visibility graphs of polygons with holes is for \emph{polygonal rings},
visibility graphs of a convex polygon with one convex hole~\cite{cai1995polygonalrings}.

\subsection{Outline of the paper}
In Section~\ref{sec:allow} we show the \ER-completeness of the realizability problem for allowable sequences.
We proceed by applying this result to show that the realizability problem for abstract convex geometries is \ER-complete in Section~\ref{sec:cg}.
Finally, in Section~\ref{sec:pv} we show that the recognition problem for polygon visibility
graphs with holes is \ER-complete.

We only show \ER-hardness proofs in this paper. For \ER-membership proofs of
the problems we discuss we refer to~\cite{everettthesis} for polygon visibility
graphs. The complexity of the recognition problem for polygon visibility graphs without holes remains open.

\section{Allowable sequences}\label{sec:allow}
In this section we prove the universality theorem for generalized allowable sequences.
We first give some basic properties and definitions.
Afterwards we prove the universality theorem for generalized allowable sequences in Subsection~\ref{subsec:universal}
and show in Subsection~\ref{subsec:simple} how this theorem also implies \ER-hardness for simple allowable sequences.

\subsection{Basics and notations}\label{sub:basicallow}
One basic property describing the relation between allowable sequences and order type is given by the following lemma.
\begin{lemma}[\cite{goodman1980combinatorial}]
  A (generalized) allowable sequence determines an abstract order type.
\end{lemma}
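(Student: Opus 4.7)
The plan is to construct a chirotope $\chi: E^{3} \to \{-1,0,+1\}$ directly from the allowable sequence $\pi_1, \ldots, \pi_k$ and then verify the three rank-$3$ chirotope axioms.

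For the definition of $\chi$: given three distinct elements of $E$, relabel them so that they appear in the left-to-right order $a, b, c$ in $\pi_1$. The defining property of an allowable sequence guarantees that each of the pairs $\{a,b\}$, $\{a,c\}$, $\{b,c\}$ is reversed exactly once in the sequence; denote the indices of the corresponding moves by $t_{ab}, t_{ac}, t_{bc}$. A short case analysis on the restriction of the sequence to $\{a,b,c\}$ shows the following dichotomy: either the three indices are all distinct and $t_{ac}$ is the middle one (leaving only the two orderings $t_{ab}<t_{ac}<t_{bc}$ and $t_{bc}<t_{ac}<t_{ab}$), or the three reversals happen in a single generalized move that reverses a substring containing $\{a,b,c\}$ (the ``collinear'' case; here the fact that two pair reversals happening in the same move force the third to happen too uses disjointness of the reversed substrings and that $a$ would have to lie in two of them). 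I would set $\chi(a,b,c) = +1$, $-1$, or $0$ according to these three cases, and extend to arbitrary orderings of the triple via the sign-of-permutation rule, which makes axiom $2$ automatic.

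Axiom $1$ (non-triviality) is a short argument: for $n \geq 3$ the sequence cannot consist entirely of moves that reverse the whole current permutation, so at least one triple has its pair reversals spread over distinct moves and thus receives a nonzero value.

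The main obstacle is axiom $3$, the rank-$3$ Grassmann--Plücker-type relation. The cleanest route goes through the well-known correspondence, due to Goodman and Pollack, between (generalized) allowable sequences and (generalized) pseudoline arrangements via wiring diagrams: the index $t_{ij}$ is precisely the ``time'' at which the wires labelled $i$ and $j$ cross, and a generalized move that reverses a length-$m$ substring corresponds to an $m$-fold crossing point of the arrangement. Under this correspondence the chirotope constructed above coincides with the chirotope of the rank-$3$ oriented matroid of the pseudoline arrangement, and axiom $3$ reduces to the standard Grassmann--Plücker relation for such oriented matroids, which is well-established. A direct combinatorial proof is possible in principle by enumerating the interleavings of the fifteen crossing times among six elements $p_1, p_2, p_3, q_1, q_2, q_3$, but it is substantially more tedious, so I would prefer the pseudoline-arrangement bridge.
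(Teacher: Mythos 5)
The paper does not prove this lemma at all; it is imported verbatim from Goodman--Pollack, so there is no in-paper argument to compare against. Judged on its own terms, the core of your construction is right and is the standard one: the trichotomy you describe is correct, and your sketch of why it holds is the correct mechanism. Concretely, a pair is reversed in a given move iff both elements lie in the same reversed substring; since reversed substrings are contiguous and pairwise disjoint, if $\{a,c\}$ (the outer pair in the current order) reverses then $b$, lying between them, is dragged along, so either all three pairs reverse in one move ($\chi=0$) or the three times are distinct with $t_{ac}$ strictly in the middle, and the sign is determined by which of $t_{ab},t_{bc}$ comes first. That matches the geometric picture (the slope of $ac$ lies between the slopes of $ab$ and $bc$ when $b$ is horizontally between $a$ and $c$), and Axiom~2 is indeed automatic from the sign-extension.

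Two genuine soft spots remain. First, your Axiom~1 argument contains a false claim: a generalized allowable sequence \emph{can} consist of a single move reversing the entire permutation (every pair is then reversed exactly once), and this is the unique sequence with $\chi\equiv 0$ --- it corresponds to $n$ collinear points, for which the ``order type'' violates Axiom~1 as well. So either the lemma must implicitly exclude this degenerate input or your non-triviality argument must be restricted to sequences with at least two moves; as written, ``the sequence cannot consist entirely of moves that reverse the whole current permutation'' is wrong. Second, Axiom~3 --- the only axiom with real content --- is not proved but outsourced to the correspondence with wiring diagrams and the fact that pseudoline arrangements yield rank-3 oriented matroids. That bridge is legitimate and standard, but it silently requires checking that your combinatorially defined $\chi$ (via the middle-reversal rule) coincides with the chirotope of the wiring diagram (crossing of wires $a,b$ above or below wire $c$); this is true but deserves a sentence, since it is exactly the point where the combinatorial definition is tied to the oriented-matroid axioms. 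Since the paper itself delegates the entire lemma to a citation, this deferral is acceptable in context, but be aware that it is where all the work lives.
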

In other words, allowable sequences carry more information than the order type relation.
When we mention this order type connected to the allowable sequence we refer to it as the order type \emph{induced} by the allowable sequence.


A \emph{generalized allowable sequence} is a sequence of permutations $\pi_1,\dots,\pi_k$ of the elements of $[n]$
such that every pair of elements is reversed exactly once in the sequence, and two consecutive permutations
differ by the reversal of several (non-overlapping) substrings. We call each reversal of a single substring a \emph{move}. The elements of one substring $s$ that is reversed correspond to points that lie on one line $\ell_s$. This substring is reversed when the rotating line, which we project on to obtain the permutations, is orthogonal to $\ell_s$. We can identify a switch with its intersection point on $\ell_\infty$, the line at infinity, and a single permutation with<
the interval between two intersection points on the line at infinity.
Thus it is not surprising that the allowable sequence carries the same information as the ordered sequence of switches.
When talking about allowable sequences we will often work in the projective plane and treat the switches as the points on the line at infinity.

In a generalized allowable sequence we can have \emph{parallel switches} where two switches $s_1$ and $s_2$ appear in one move.
Those correspond to switches where $\ell_{s_1}$ and $\ell_{s_2}$ intersect on the point on $\ell_\infty$,
in other words: $\ell_{s_1}$ and $\ell_{s_2}$ are parallel.
In a non-generalized allowable sequence the order of the switches is the order of the intersection points
of the spanned lines with a large enough cycle that contains all the intersection
points of spanned lines in its interior.
The order of switches corresponds exactly to the order of slopes spanned by the line of the switch.

For the universality theorem we need the definition of \emph{stable equivalence}. We use the definition of~\cite{richter1995realization}.

Stable equivalence is an equivalence relation induced by \emph{rational transformations} (i.e., the function and its inverse are rational functions) and \emph{stable projections}.

A semi-algebraic set $V$ is a stable projection of $W$ if $V$ is obtained from $W$ by the projection on the first $n$ coordinates and
\[W=\{ v \oplus v' \mid v\in V\mbox{ and } \phi_i(v)^T\cdot v'>0 \mbox{ and }\psi_j(v)^T\cdot v'=0\mbox{ for }i\in X,j\in Y\},\]
where $X$ and $Y$ are index sets and $\phi_i$ and $\psi_j$ are polynomial functions from $\R^n\rightarrow \R^d$ (i.e., the functions are defined by a polynomial in each of the $d$ coordinates).
In other words, for each $v\in V$ we obtain the preimage of $v$ under a stable projection by concatenating $v$ with a polyhedron defined by polynomial functions of $V$.
Two semi-algebraic sets $V$ and $W$ are \emph{stably equivalent} if they lie in the same equivalence class
induced by rational transformations (which include projective transformations) and stable projections.

Note that the set of $v'$ for one fixed $v$, which we project away by a stable projection comes from a polyhedron, which is a contractible set.
This shows that the homotopy type of $W$ is preserved under stable projections.
For more invariants of semi-algebraic sets under stable equivalence we refer to~\cite{richter1995realization}.

\subsection{Universality for generalized allowable sequences}\label{subsec:universal}
In this subsection we give the idea of the proof of Mn\"ev's universality
theorem~\cite{mnev1988universality} based on Shor's proof~\cite{shor1991stretchability}.
We modify this proof to show the following theorem.

\begin{theorem}\label{thm:nonSimple}
  For every primary semi-algebraic set $V$ there is a generalized allowable
  sequence $A$ whose
  realization space is stably equivalent to $V$.
\end{theorem}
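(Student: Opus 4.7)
The plan is to start from Shor's proof of Mn\"ev's universality theorem, which constructs, from a primary semi-algebraic set $V$, an order type $\chi$ whose realization space is stably equivalent to $V$. Shor's construction fixes a projective frame and uses von~Staudt-style addition and multiplication gadgets -- configurations with prescribed collinear triples -- to encode the polynomial equalities and strict inequalities defining $V$. Up to the projective transformation normalizing the frame, every realization of $\chi$ corresponds to an assignment of coordinates to the gadget points satisfying the encoded polynomial system.

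First I would recall Shor's construction in enough detail to identify the gadget points, the collinearities that encode the polynomial equalities, and the strict order type conditions that encode the polynomial inequalities. I would then observe that a realization $P$ of $\chi$ determines not only the order type but also a generalized allowable sequence, namely the cyclic order in which the lines spanned by pairs of $P$ change slope, together with the pairs of spanned lines that happen to be parallel.

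Next I would pick a generic realization $P$ of $\chi$ and take $A$ to be the generalized allowable sequence induced by $P$. The realization space of $A$ is the subset of realizations of $\chi$ where, for each pair of spanned lines, the comparison of their slopes (or their coincidence, permitted by the word \emph{generalized}) matches the one induced by $P$. These extra conditions are strict inequalities and equalities in the $2\times 2$ determinants that compare slopes, and, over each fixed realization of Shor's gadget, they cut out an open polyhedron in the space of slope-fixing parameters. Together with the rational transformations that normalize the projective frame, this exhibits the realization space of $A$ as a stable projection of the realization space of $\chi$, and hence as stably equivalent to $V$.

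The main obstacle will be ensuring that the slope conditions really do fit the template of a stable projection, rather than cutting the realization space into pieces or producing degenerate fibers. To secure this, I would augment Shor's construction by a small number of auxiliary \emph{slope-fixing} points placed in free position, whose sole purpose is to enforce the desired cyclic order of slopes through additional order type constraints of the form ``point $x$ lies on the correct side of the line through $y,z$.'' Because these auxiliary points are chosen with enough freedom that the fiber of their positions over every realization of Shor's configuration is a non-empty open polytope, their addition preserves the stable equivalence class of the realization space and produces the required generalized allowable sequence $A$.
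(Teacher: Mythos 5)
There is a genuine gap at the heart of your proposal: you take a \emph{generic} realization $P$ of Shor's order type $\chi$ and define $A$ to be the allowable sequence induced by $P$. But the allowable sequence strictly refines the order type by additionally recording the cyclic order of slopes of all spanned lines, and this slope order is \emph{not} constant over the realization space of $\chi$. In Shor's construction the variable values are encoded as positions of points on a finite line $\ell$, and as these values range over $V$ the relative order of slopes of lines joining gadget points to variable points changes. Consequently the locus of realizations of $\chi$ compatible with your fixed $A$ is a proper semi-algebraic subset that may miss entire components of $V$ (or slice a component into pieces), so the required surjection onto $V$ fails and stable equivalence with $V$ is lost. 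Your proposed fix --- auxiliary ``slope-fixing'' points in free position --- cannot repair this, because the slopes in question are determined by the \emph{existing} gadget points; no added points can alter the relative order of slopes already spanned by them. Relatedly, the slope conditions you impose are inequalities in the same variables as the realization of $\chi$, not in new coordinates being projected away, so they do not fit the template of a stable projection (which requires the fiber over each base point to be a non-empty polyhedron in \emph{additional} coordinates); intersecting with such a locus is not a stable-equivalence-preserving operation.

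The paper's construction is designed precisely to avoid this. It applies a projective transformation placing the variable line $\ell$ at infinity, so that the variable values become switch positions whose order is forced by the total order $1=x_1<x_2<\dots<x_n$ supplied by the normal form; it then places the gadget points iteratively at extreme coordinates (e.g.\ $a_i$ at $(-N,0)$ for $N$ huge relative to all previously placed gadgets), so that every remaining slope comparison is determined combinatorially, \emph{independently} of which point of $V$ is realized. Only then is the allowable sequence $A$ well defined with the property that every point of $V$ admits a realization with allowable sequence exactly $A$, and the stable projections (peeling off the last gadget's points and the switch points they span, whose admissible positions form a polyhedron described via inner-product slope comparisons) go through. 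If you want to salvage your approach, the missing idea you must supply is an argument that the allowable sequence can be made invariant over the whole realization space --- which is exactly what the paper's relocation of $\ell$ to infinity and the iterative far-apart placement accomplish.
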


Therefore, we decompose the description of the primary
semi-algebraic set into elementary arithmetic operations on three
variables. This can be done using the following theorem.

\begin{theorem}[\cite{shor1991stretchability}]
  Every \emph{primary} semi-algebraic set $V\subseteq\mathbb{R}^d$ is \emph{stably equivalent}
  to a semi-algebraic set $V'\subseteq\mathbb{R}^n$, with $n=\mathrm{poly}(d)$, 
  for which all defining equations have the form 
  $x_i+x_j=x_k$ or $x_i\cdot x_j=x_k$ for certain $1 \leq i \leq j < k\leq n$, where the
  variables $1=x_1<x_2<\dots<x_n$ are totally ordered.
\end{theorem}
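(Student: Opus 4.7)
The plan is to convert a defining description of $V$ into an arithmetic circuit over the ternary operations $+$ and $\cdot$, introduce a fresh variable for every gate, and finally normalize so that all the resulting variables lie in a strictly increasing chain with $x_1=1$. At each step the enlargement of the variable set will be a stable projection, so stable equivalence with the original $V$ is preserved throughout.

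First I would flatten the polynomials. A primary semi-algebraic set $V\subseteq\R^d$ is cut out by finitely many polynomial equalities $p_\ell(y)=0$ and strict inequalities $q_m(y)>0$. Every such $p_\ell$ or $q_m$ is built from its monomials, which in turn are built from repeated multiplications, followed by repeated additions and a final sign. I would process each polynomial by a standard Horner/Strassen-type decomposition: introduce a fresh auxiliary variable for every internal node of the arithmetic tree, and record one defining relation per node, always of the form $x_i+x_j=x_k$ or $x_i\cdot x_j=x_k$. Since each polynomial has size polynomial in the input, the total number of auxiliary variables is $n=\mathrm{poly}(d)$. The extended variety $V_1$ thus obtained projects onto $V$ in such a way that the fibre over every $y\in V$ is a single point, so this is trivially a stable projection.

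Next I would eliminate strict inequalities and put everything in the required $+/\cdot$ form. A strict inequality $q_m(y)>0$ is encoded by introducing a positive slack variable $s_m$ together with its reciprocal $t_m$, and by adding gate-style relations expressing $s_m\cdot t_m=1$ and $s_m=q_m$ (the latter chained through the arithmetic circuit for $q_m$). Requiring $s_m,t_m$ to be among the ordered variables $x_1<\dots<x_n$ already forces $s_m>0$, so the inequality is enforced purely through the ordering condition. Similarly, every equality $p_\ell(y)=0$ is obtained for free as the output gate of the circuit for $p_\ell$ being set equal to a designated ``zero" variable, which itself appears as a difference $x_i - x_j$ encoded by an addition gate. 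Again the fibre of this extension over a point of $V_1$ is either a single point or a contractible semi-algebraic cell (given by the positivity and one-equality conditions defining a reciprocal), so this is a stable projection.

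Finally comes the ordering step, which I expect to be the main obstacle. I have to arrange the $n$ variables so that $1=x_1<x_2<\dots<x_n$ holds everywhere on the realization space, without changing the stable equivalence class. The standard trick is to apply a global affine normalization of the form $y\mapsto(y+C)/D$ with large constants $C,D$ (introduced as new variables with their own gates), chosen so that after substitution each variable lies in a prescribed narrow interval and the intervals are disjoint and strictly increasing. Of course $C$ and $D$ cannot simply be constants inside the normal form, so I would add them as two more variables together with gates fixing $C$ and $D$ via $+$ and $\cdot$ relations starting from $x_1=1$. The ordering of the variables then follows from the positivity encoded by the slack construction above, because each $x_k-x_{k-1}$ is itself expressed as a difference that one puts into the slack machinery. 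The only care needed is to check that this last normalization does not create any algebraic obstruction: the map from the normalized variety to $V_1$ is again a stable projection because the normalizing shifts and scalings are polynomial functions of the preceding variables and the added constraints are of the polyhedral/contractible type required in the definition of stable projection. Concatenating all three stable projections and the affine change of coordinates yields the stated stable equivalence between $V$ and the normal-form set $V'$.
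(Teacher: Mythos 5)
This theorem is not proved in the paper at all: it is imported verbatim from Shor's stretchability paper, so there is no in-paper proof to compare against. Measured against Shor's actual argument, your skeleton (flatten to a straight-line program of binary additions and multiplications, encode the sign conditions, then totally order the variables) is the right one, but two of your three steps do not work as written. First, the sign encodings are inconsistent with the normal form itself. In the target form every variable satisfies $x_1=1\le x_i$, and every relation $x_i\cdot x_j=x_k$ has $k>j\ge i$, so the product of two variables is a variable strictly later in the chain and hence $>1$; your relation $s_m\cdot t_m=1$ therefore cannot be written at all, and if it could it would force $s_m=t_m=1$, turning $q_m>0$ into $q_m=1$. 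Likewise a ``designated zero variable'' obtained as a difference cannot exist when all variables are at least $1$. The standard fix is to split each polynomial into its positive and negative parts, evaluate both by gates, and encode $p_\ell=0$ by identifying the two output variables and $q_m>0$ by an inequality $Q^-<Q^+$ between large shifted quantities, with an extra unbounded scaling variable so that an arbitrarily small positive gap is not excluded; none of this appears in your sketch.

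Second, and more seriously, the ordering step is the actual mathematical content of the theorem, and a single global affine normalization $y\mapsto(y+C)/D$ cannot deliver it: one affine map preserves the relative order of the original variables and does nothing to separate them, and the order of two coordinates of $V$ (let alone of two gate outputs) may genuinely differ from one point of $V$ to another, in which case no change of variables applied uniformly to the whole set can impose a single chain $x_2<\dots<x_n$ valid on all of $V'$. Shor's proof handles this by confining each variable, including every intermediate gate variable, to its own prescribed narrow interval with fixed rational endpoints, chosen so that the intervals are pairwise disjoint and so that sums and products of variables in their intervals land in the intervals prescribed for their outputs; this requires per-variable shifts and scalings interleaved inductively with the construction of the straight-line program, not one global $(C,D)$. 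You flag this as ``the only care needed,'' but it is precisely the part of the proof that is missing, and without it the claimed stable equivalence is not established.
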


In addition to the decomposition of a polynomial equation into single addition or multiplication
steps, the theorem also gives us a total order of the size of the
variables in any representation. We call the description of the
semi-algebraic set the \emph{normal form}.

The elementary calculations in the normal form can implemented by order
types using the classic idea of von Staudt sequences~\cite{staudt}.
Since distances are not invariant under projective transformations we
use a projective invariant, the \emph{cross-ratio}.

The cross-ratio $(a,b;c,d)$ of four points $a,b,c,d\in\mathbb{R}^2$ is defined as
$$(a,b;c,d):=\frac{|a,c|\cdot|b,d|}{|a,d|\cdot|b,c|},$$
where $|x,y|$ is the determinant of the matrix obtained by writing the two vectors as columns.
The two properties that are useful for our purpose is that the cross-ratio
is invariant under projective transformations, and that for four points on one line, the cross-ratio
is given by $\frac{\overrightarrow{ac}\cdot\overrightarrow{bd}}{\overrightarrow{ad}\cdot\overrightarrow{bc}}$,
where $\overrightarrow{xy}$ denotes the oriented distance between $x$ and $y$ on the line.

The gadgets forcing a certain cross-ratio of points on a line $\ell$ are depicted in Figure~\ref{fig:gadgets}.
\begin{figure}[ht]
  \centering
  \includegraphics[width=.8\textwidth]{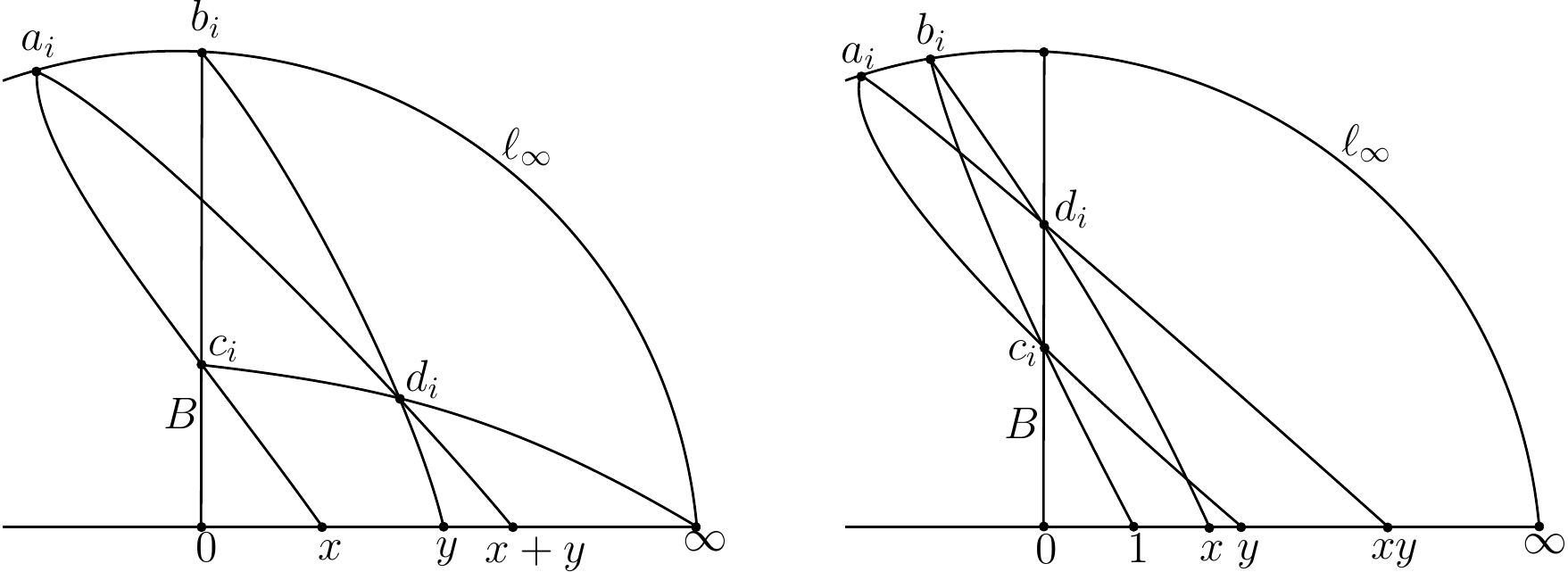}
  \caption{\label{fig:gadgets} Gadgets for addition (left) and multiplication (right) on a line.}
\end{figure}

We can construct an order type from the normal form in the following way.
For each variable in the normal form we place a point on a horizontal line $\ell$.
The order of those points on $\ell$ respects the total order of the variables.
We have some freedom where we place the gadget points.
This freedom allows to place the gadgets, so that we can determine a complete order type that can be realized if and only if the semi-algebraic set is non-empty.

We use the freedom we have in positioning the gadgets for the underlying order type of the allowable 
sequence and place the variables on the line at infinity.
Using a projective transformation switch the positions of $\ell$ and $\ell_\infty$, i.e., we place $\ell$ on the line at
infinity and $\ell_\infty$ on the $x$-axis.
\begin{figure}[ht]
  \centering
  \includegraphics[width=.8\textwidth]{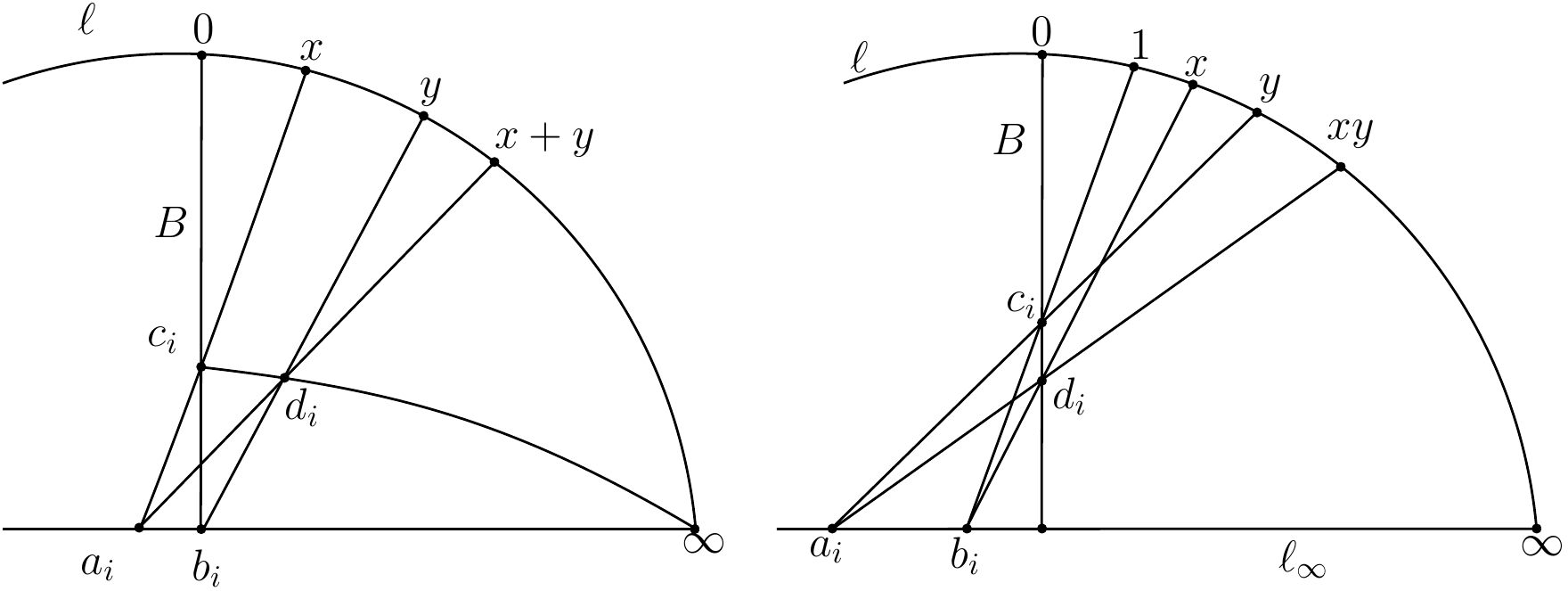}
  \caption{\label{fig:gadgets_reversed} The gadgets for addition (left) and multiplication (right)
    with $\ell$ as line at infinity.}
\end{figure}
To define the allowable sequence we show how we realize the point set if $V$ is non-empty,
where the allowable sequence of the points we construct is independent of the value determined by the point in $V$.
The combinatorial allowable sequence can be obtained from the construction, without actually having a representation. 
This allowable sequence is then realizable if and only if $V$ is non-empty since it realizes all the gadgets which determine the calculations which implement $V$.

We start with the construction of the allowable sequence.
First, we enumerate the gadgets $g_1,\dots,g_k$ such that the addition gadgets are ordered according to decreasing value of the involved $y$-variable of the gadget. Futhermore, we assume the addition gadgets have a smaller index than the multiplication gadgets.
These two conditions are helpful in Lemma~\ref{lem:realizable}, when we construct a realization of the order type that is induced by the allowable sequence.
Then we place the points of the gadgets iteratively starting with the first.
\begin{figure}[ht]
  \centering
  \includegraphics[width=.8\textwidth]{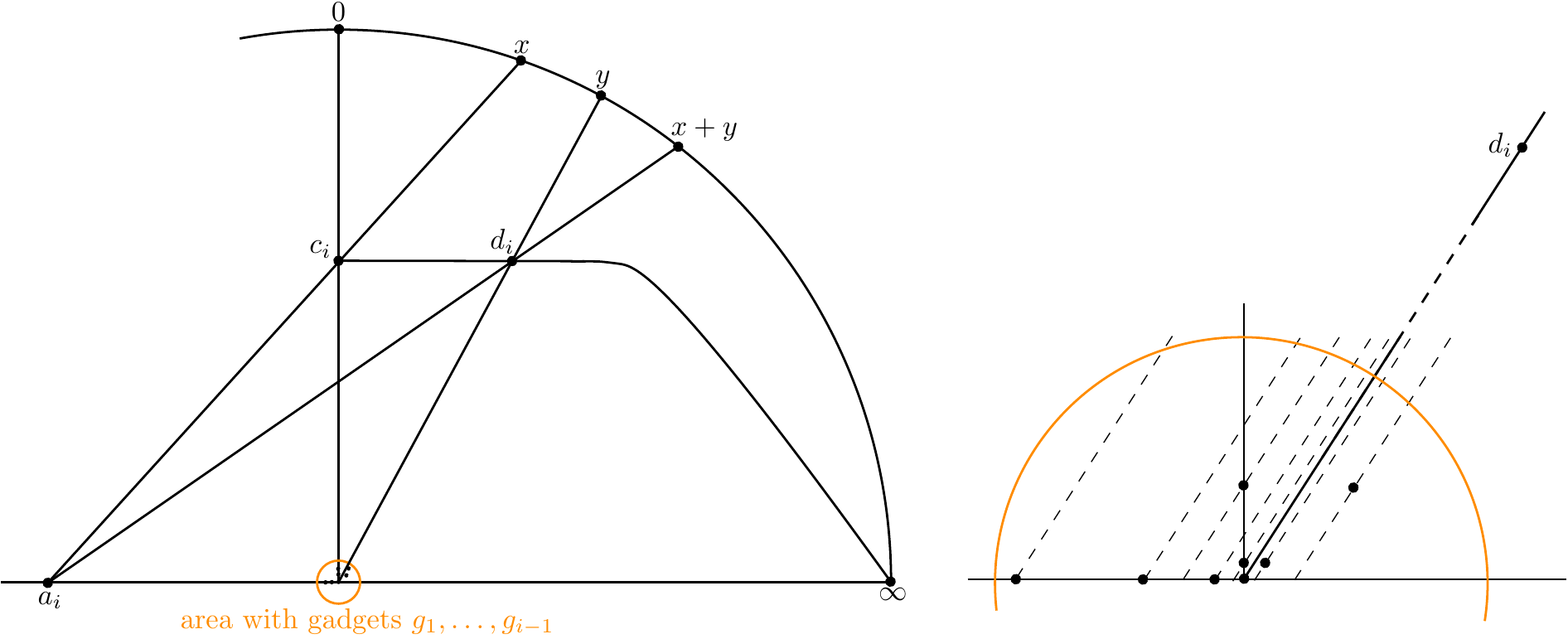}
  \caption{\label{fig:placing} Left: Placing the (addition) gadget $g_i$ after the gadgets $g_{1},\dots,g_{i-1}$ are already placed. Right: The projection of points in direction $d_i$.}
\end{figure}

We assume the gadgets $g_1,\dots,g_{i-1}$ are already placed.
If $g_i$ is an addition gadget we place the point $a_i$ on the
coordinate $(-N,0)$ for some large $N$. The point $c_i$ will then lie
on $(0,M)$ for a large $D$, and the point $d_i$ on $(1/y\cdot D,D)$. If we choose $N$ (and thus $M$) large enough,
then all points of the gadgets $g_1,\dots,g_{i-1}$ lie relatively
close to the origin compared to the points of $g_i$. Thus the lines
through $a_i$ and a point of $g_k$ for $k<i$ are almost
horizontal, thus we can determine the position of all switches
including $a_i$.
Similarly, the lines through $c_i$ are almost vertical and the
lines through $d_i$ and a point of $g_k$ have almost the slope
 of the line through the origin and the
point $y$ on $\ell$ (the line at infinity). 
We know the relative position of all
those switches since we know the complete allowable sequence of the
gadget points placed so far. Figure~\ref{fig:placing} right shows this projection in direction $d_i$.

For the multiplication gadget we apply the same construction. By placing $a_i$ and $b_i$ at coordinates $(-N,0)$ and $(-N-\varepsilon,0)$ for a sufficiently large $N$ and a small positive $\varepsilon$ we have determined the position of $c_i$ and $d_i$.
With $N$ large enough we again know the complete allowable sequence:
the switches including $a_i$ and $b_i$ and the points of the already placed gadgets are almost horizontal and can be determined using the horizontal projection of the gadget points.
Similarly, the switches including $c_i$ or $d_i$ and previous gadget points can be determined using the vertical projection.

\begin{proof}[Proof of Theorem~\ref{thm:nonSimple}]
  We first show that we obtain a realization of $A$ that corresponds to a point in the semi-algebraic set $V$.

  In the construction described above, we have constructed an allowable sequence $A$
  from a given semi-algebraic set $V$. The construction shows that from each point in $V$
  (i.e., choice of coordinates on $\ell$) there exists a realization of $A$ with these coordinates.
  This shows that each point in $V$ can be obtained by a projection of one realization of $A$.
  On the other hand, each realization of $A$ leads, after a projection and suitable affine transformation
  ($x_0=0$ lies on $(0,0)$ and $x_1=1$ lies on $(1,0)$) to a point of $V$ by considering
  the first coordinate of the points on $\ell$.
  
  This already shows that $A$ is realizable if and only if $W$ is non-empty.
     To show stable equivalence of the realization space of $A$ we first have to say how exactly we define the realization space of $A$.
    Therefore, we consider not only the points $P$ that realize $A$ but also the switch points on the line at infinity.
    In a first step we use a linear transformation (which leaves the allowable sequence invariant)
    to place the points of $P$ on $\ell_\infty$ on the line given by $y=0$,
    and the line $B$ between the variable $0$ and the point $b_i$ of an addition gadget to the line $x=0$.
    
    In a second step we iteratively project away the points of the last gadget and the points of the
    allowable sequence spanned by those points using stable projections.
    Note that this boils down to determining an interval for the point $a_i$ (and $b_i$ in a multiplication gadget) on the line $\ell_\infty$,
    such that all lines spanned points of the gadgets lie in the correct interval. The remaining points of the gadget are uniquely determined by the points $a_i$ and $b_i$. 
    
    For the proof of universality we still have to show that we can describe the
    position of the points of one gadget and the points on the line at infinity using (inequalities) only using the inner products.
    
    Let $a=(a_x,a_y)$ be the vector spanning the line $\ell_a$ and $b$ be the vector spanning the line $\ell_b$, both in positive $x$-direction.
    Then the inner product $(a_y,-a_x)\cdot b$ is the inner product of $b$ with a vector orthogonal to $a$ as shown in Figure~\ref{fig:slope_comparison}.
    This product is larger than $0$ if the slope of $\ell_b$ is smaller than the slope of $\ell_a$,
    equal to $0$ if the slopes are the same, and smaller than $0$ if the slope of $b$ is larger.  
    This follows from the fact that the inner product gives the oriented length of the projection of one vector onto the other one.
    The vectors spanning the lines are calculated as the difference of two points.
    \begin{figure}[ht]
    	\centering
    	\includegraphics[width=.5\textwidth]{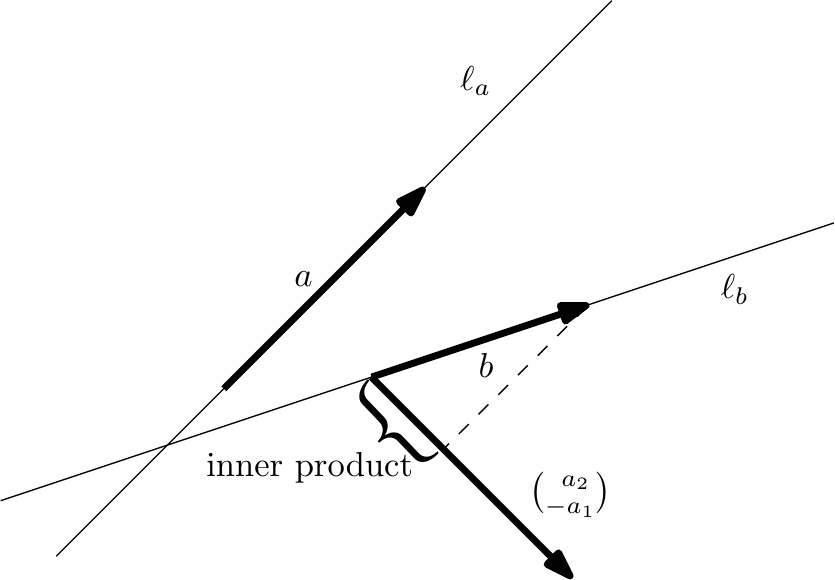}
    	\caption{\label{fig:slope_comparison} The slope of $\ell_a$ is larger than the slope of $\ell_b$, thus the scalar product is positive.}
    \end{figure}
    Thus we can compare slopes of two lines using the inner product.
\end{proof}

\begin{lemma}\label{lem:realizable}
  The order type induced by the allowable sequence $A$ in the proof above is realizable, even if $A$ is not realizable. 
\end{lemma}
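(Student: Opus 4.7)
The plan is to follow the same iterative construction as in the proof of Theorem~\ref{thm:nonSimple}, but to feed it arbitrary variable values instead of a genuine solution of $V$. The key insight is that the order type induced by $A$ records only chirotope signs, not the specific slopes of spanned lines or the cross-ratios of collinear tuples that would pin the arithmetic down to an actual solution of the polynomial system; the slack between the order type and the allowable sequence is exactly what lets us dodge the solvability obstruction.

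Concretely, I would first assign the variables dummy positions $\tilde{x}_1 < \cdots < \tilde{x}_n$ on the $x$-axis respecting only the total order demanded by the normal form (for instance $\tilde{x}_i = i$), ignoring whether the equations $x_i + x_j = x_k$ and $x_i \cdot x_j = x_k$ are satisfied. Then I process the gadgets in exactly the order used in Subsection~\ref{subsec:universal}: addition gadgets first, sorted by decreasing $y$-variable, multiplication gadgets afterwards. For each addition gadget $g_i$ I place $a_i = (-N_i, 0)$ and let $c_i, d_i$ be the points determined by the two incidences of Figure~\ref{fig:gadgets_reversed}; for a multiplication gadget I put $a_i = (-N_i, 0)$ and $b_i = (-N_i - \varepsilon_i, 0)$ and again take $c_i, d_i$ to be determined by incidences. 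The auxiliary points produced this way will typically sit at different coordinates than those obtained from an actual solution of $V$ (since $\tilde{x}_c - \tilde{x}_a$ and $\tilde{x}_a \cdot \tilde{x}_b$ are not values $V$ would provide), but they exist unambiguously in the plane.

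It then remains to verify that the chirotope of the resulting configuration is exactly the order type induced by $A$. The same limit argument used in Theorem~\ref{thm:nonSimple} applies: by choosing each $N_i$ sufficiently large compared to all previously placed points, every triple orientation involving a new gadget point and an earlier point is determined purely by the configuration already built together with the relative order of the variables on the $x$-axis. These are combinatorial data that do not depend on whether $\tilde{x}$ solves $V$, and the ordering of the gadgets ensures that the limiting orientations agree with the ones specified by the induced order type. The main obstacle is a careful case analysis verifying that the internal triple signs within each new gadget — which are pinned down by the gadget's combinatorial incidences rather than by the arithmetic — also coincide with those prescribed by the construction; this is a finite check per gadget type and does not use the solvability of $V$.
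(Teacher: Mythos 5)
There is a genuine gap, and it sits exactly at the step you defer to a ``finite check'': the claim that the chirotope of the dummy-value configuration coincides with the order type induced by $A$. Two concrete problems. First, if you let $c_i,d_i$ be ``determined by the incidences'' from dummy inputs $\tilde{x}_i,\tilde{x}_j$, the output line of the gadget acquires the slope of the actual arithmetic result $\tilde{x}_i+\tilde{x}_j$ (resp.\ $\tilde{x}_i\cdot\tilde{x}_j$), which is in general different from the dummy value $\tilde{x}_k$ that every later gadget uses for that same variable. The induced order type does force coincidences across gadgets sharing a variable (e.g., when two addition gadgets share the $y$-variable, the triple consisting of $b$, $d_{i-1}$ and $d_i$ must be collinear), and the incidence-determined placement with inconsistent slopes need not produce them. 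Second, and more fundamentally, the orientation of a triple $(d_i,p,q)$ with $p,q$ from earlier gadgets is prescribed by $A$ as the order of $p$ and $q$ in the \emph{combinatorial} permutation of $A$ at the position of the variable $y$; in your realization it is instead the order of $p$ and $q$ under projection in the dummy direction $\tilde{y}$. When $V=\emptyset$ the configuration you build necessarily realizes an allowable sequence different from $A$, so some switch of the already-placed points lies on the wrong side of $\tilde{y}$, and the two orders can disagree. Asserting that these limiting orientations ``do not depend on whether $\tilde{x}$ solves $V$'' is precisely what needs proof, and it is not true for the incidence-driven placement.

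The paper avoids this by abandoning the gadget incidences altogether in the realization step: it first gives an explicit combinatorial description of the induced order type (the points $d_i$ form a convex chain, with prescribed orientations for the triples $(a_i,d_j,c_k)$, degenerating only when two gadgets share a $y$-variable), and then realizes that description directly by an iterative placement --- $a_i$ far to the left on $\ell$, $c_i$ high on $B$, $d_i$ at the height of $c_i$ just right of $d_{i-1}$, and, in the shared-$y$ case, deliberately on the line through $b$ and $d_{i-1}$ --- verifying the orientations by convex-position arguments rather than by a limit in a variable direction. The ordering of the gadgets (addition before multiplication, sorted by decreasing $y$-variable) is chosen specifically to make this placement possible. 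To repair your argument you would have to either switch to such a direct realization of the combinatorial order type, or prove that the discrepancies above never affect a triple orientation, which is not established.
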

\begin{proof}
We construct a realization of the order type $O$ induced by the
 allowable sequence $A$.
Therefore, we first give a short description of the order type.
We start with the addition gadgets as shown in Figure~\ref{fig:realizable}.

\begin{figure}[ht]
	\centering
	\includegraphics[width=.95\textwidth]{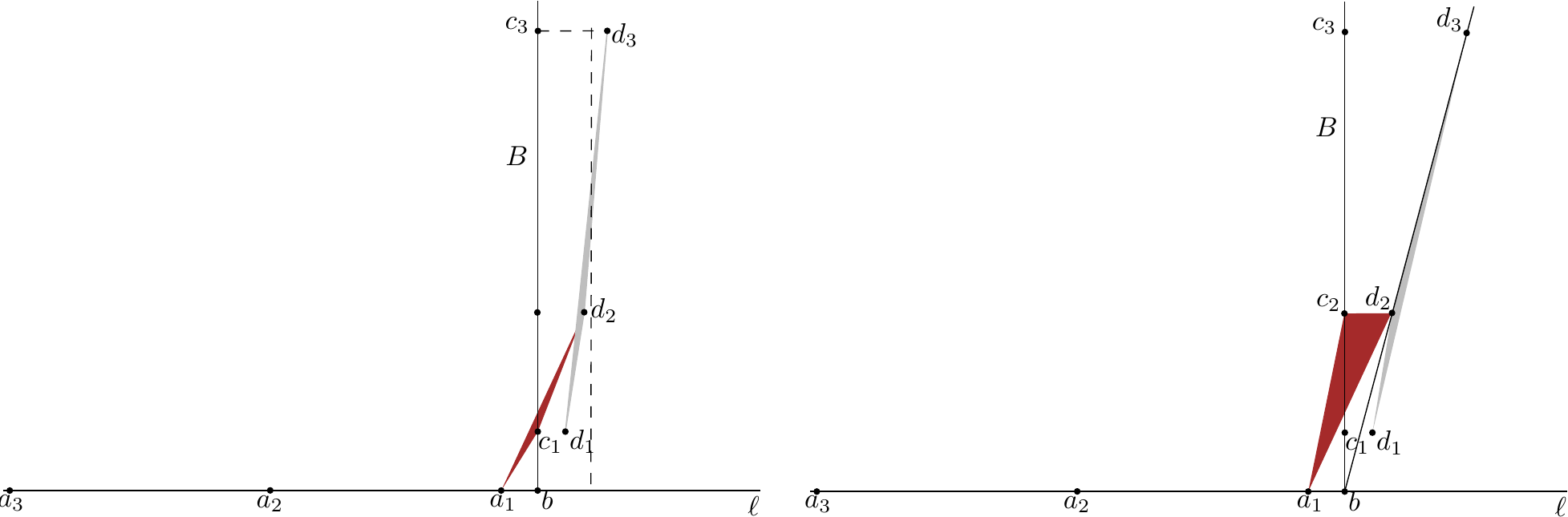}
	\caption{\label{fig:realizable} Left: The order type using only the addition gadgets when no two gadgets use the same $y$-variable. Right: Gadget $g_2$ and $g_3$ use the same $y$-variable.}
\end{figure}

If the addition gadgets do not use the same $y$-variable. In this case the points $d_i$ of the gadgets form a convex chain, such that the points $c_i$ and $d_i$ are in non-strict convex position as indicated by the thin grey triangle in Figure~\ref{fig:realizable} left.
A triple $(a_i,d_j,c_k)$ is oriented clockwise if $k<j$ and counterclockwise otherwise, see the brown triangles in the figure.
If two gadgets use the same $y$-variable as in Figure~\ref{fig:realizable} (right), then the points $d_i$ are not in convex position any more. Removing all but one of the points using the same $y$ leads to an order type where the $d_i$ are in convex position. The orientation of $(d_i,d_j,d_k)$ ($i<j$), where $d_i$ and $d_j$ belong to gadgets with the same $y$-variable is clockwise if $k<i$ and counterclockwise otherwise, see the grey triangle in the right figure.

We realize this order type by iteratively placing the gadgets. We assume the first $i-1$ gadgets are placed. Then we place $a_i$ to the left of every intersection point of the line $\ell$ (that supports all the $a$ points of the already placed gadgets) with the lines spanned by the already placed gadget points. The point $c_i$ is placed in the same way above the highest intersection point of a spanned line on $B$. The point $d_i$ is placed just to the right of $d_{i-1}$ on the same $y$-coordinate of $c_i$ as indicated for $d_3$ by the dashed line in Figure~\ref{fig:realizable} left.
Placing $d_i$ close enough to this vertical line leads to a realization that has the correct orientation for the grey triangles. In the case that gadget $g_i$ uses the same $y$-variable as $g_{i-1}$ 
we place $d_i$ on the line spanned by $b$ and $d_{i-1}$ and on the horizontal line through $c_i$.
This has the effect that the grey triangle has the correct orientation.

Afterwards we place the multiplication gadgets, the points $a_i$ and $b_i$ are placed to the left of all already placed points on $\ell$ and $c_i$ and $d_i$ on $B$ above every point. The order type of those points with the points on $B$ and $\ell$ is already correct by construction. The orientation including a point of a multiplication gadget point $m$ on $\ell$ $(m,n,o)$ is clockwise if the $y$-coordinate of $n$ is larger than the $y$-coordinate of $o$, see the brown triangle in Figure~\ref{fig:realizable_mult}.
\begin{figure}[ht]
	\centering
	\includegraphics[width=.95\textwidth]{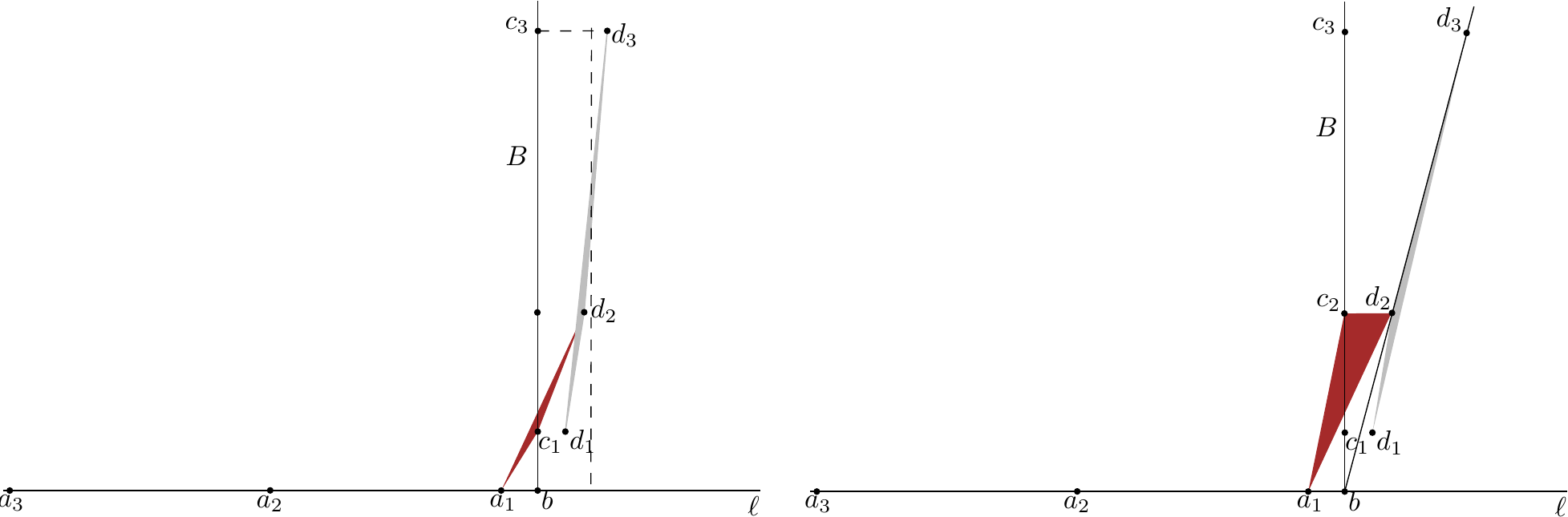}
	\caption{\label{fig:realizable_mult} Left: The order type using only the addition gadgets when no two gadgets use the same $y$-variable. Right: Gadget $g_2$ and $g_3$ use the same $y$-variable.}
\end{figure}
In the case of an equal $y$-coordinate (,i.e, $n$ and $o$ are $c_i$ and $d_i$ of an addition gadget,) $(m,c_i,d_i)$ is oriented clockwise.
This shows that the order type induced by the allowable sequence in the proof above is realizable, even if the allowable sequence is not realizable.
\end{proof}
All proofs in this section are constructive and can be turned into polynomial time algorithms, which results in the following corollary.
\begin{corollary}
  The realizability problem for generalized allowable sequences is \ER-complete.
\end{corollary}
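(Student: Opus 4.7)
The corollary is essentially a direct consequence of Theorem~\ref{thm:nonSimple}, so my plan is to argue (a) \ER-hardness by reduction from the existential theory of the reals using Theorem~\ref{thm:nonSimple}, and (b) \ER-membership by writing realizability as a first-order existential formula over $\R$.

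For hardness, I would start from the observation that, by definition, deciding whether a primary semi-algebraic set $V$ is non-empty is \ER-hard. Given such a $V$ presented as a list of strict polynomial inequalities and polynomial equalities, the plan is to run the construction underlying Theorem~\ref{thm:nonSimple}: apply Shor's normal form to replace $V$ by a stably equivalent $V'$ whose defining equations are only of the form $x_i+x_j=x_k$ or $x_i\cdot x_j=x_k$ over a totally ordered set of variables $x_1<\cdots<x_n$, then realize each such elementary equation by an addition or multiplication gadget placed on $\ell_\infty$, and finally read off the ordered sequence of switches that the gadget points induce on $\ell_\infty$. Since realization space of $A$ is stably equivalent to $V$, and stable equivalence preserves non-emptiness, $A$ is realizable if and only if $V\neq\emptyset$.

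The next step is a size check. Shor's theorem gives $n=\mathrm{poly}(d)$ variables and hence polynomially many elementary equations; each gadget contributes $O(1)$ points; so the total number of points, switches, and moves in $A$ is polynomial in the size of the input. Ordering the addition gadgets by decreasing $y$-variable and then listing the multiplication gadgets is easily done in polynomial time; the induced order type, and from it the combinatorial sequence of switches, can be read off directly from the construction described before Lemma~\ref{lem:realizable} without ever producing actual coordinates (Lemma~\ref{lem:realizable} ensures the induced order type is realizable, so the combinatorial allowable sequence is always well-defined). This yields a polynomial-time many-one reduction, hence \ER-hardness.

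For membership, a realization of a generalized allowable sequence on $[n]$ can be encoded by $2n$ real variables for the point coordinates together with auxiliary variables for the direction vector rotating through $180^\circ$. Each move in the sequence corresponds to a prescribed sign pattern of determinants (orientations of triples) and each consecutive pair of permutations corresponds to a prescribed set of $2\times 2$ determinantal equalities and inequalities specifying which lines cross the rotating direction at the same angle and in which order. All these conditions are polynomial (in)equalities in the coordinates, so the existence of a realization is expressible as an existential first-order sentence over $\R$, placing the problem in \ER.

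I expect the only subtle point to be the polynomial bound on the size of $A$: one has to make sure that the iterative placement argument in the proof of Theorem~\ref{thm:nonSimple}, where gadgets are positioned with parameters $N$ and $\varepsilon$ chosen ``large enough'' or ``small enough'', produces a combinatorial allowable sequence whose description length is polynomial in $n$, and not, say, dependent on the magnitudes of these parameters. Since the sequence is determined only by the combinatorial order of spanned slopes and this order is fixed by the construction's gadget ordering, the bookkeeping is straightforward but is the one place that deserves explicit verification.
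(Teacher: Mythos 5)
Your proposal is correct and matches the paper's approach: the paper proves this corollary simply by observing that the construction behind Theorem~\ref{thm:nonSimple} is constructive and runs in polynomial time (yielding \ER-hardness, since stable equivalence preserves non-emptiness), while \ER-membership is treated as standard and not spelled out. Your additional detail on the existential encoding of a realization and on the polynomial size of the combinatorial switch sequence is consistent with, and slightly more explicit than, what the paper provides.
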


\subsection{Simple allowable sequences.}\label{subsec:simple}
In this subsection we generalize Theorem~\ref{thm:nonSimple} to simple allowable sequences, which is a useful tool to show the \ER-hardness of other geometric realizability problems.

\begin{theorem}\label{thm:simple}
  The realizability of simple allowable sequences is \ER-complete.
\end{theorem}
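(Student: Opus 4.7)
The plan is to derive Theorem~\ref{thm:simple} from Theorem~\ref{thm:nonSimple} by refining the generalized allowable sequence $A$ produced there into a simple allowable sequence $A^{*}$ with the same realizability status. Membership in \ER{} is immediate, since a realization is specified by the point coordinates and each adjacent swap is a sign condition on a $3\times 3$ determinant, so the entire work is in the hardness direction.

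For hardness I will revisit the iterative gadget placement used in the proof of Theorem~\ref{thm:nonSimple} and identify the two sources of non-simple moves: (i) every $a_i$ is placed on the $x$-axis and every $c_i$ on the vertical line $B$, which creates single reversals of substrings of length greater than two, and (ii) two gadgets that share a variable direction on $\ell$ generate genuinely parallel spanned lines, producing simultaneous reversals of disjoint substrings. I will remove both degeneracies by a controlled generic perturbation of the placement: replace $(-N,0)$ by $(-N,\epsilon_i)$ for pairwise distinct $\epsilon_i$ chosen exponentially small relative to the scale $N_i$ of the $i$-th gadget, and analogously perturb $b_i$, $c_i$, $d_i$; additionally perturb each variable direction by a tiny generic angle so that no two spanned lines are exactly parallel. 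The asymptotic domination inequalities in the original proof (``lines through $a_i$ are almost horizontal'', etc.) are strict, so the perturbation preserves every previously determined switch while producing a well-defined simple allowable sequence $A^{*}$.

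The direction ``$V$ non-empty $\Rightarrow A^{*}$ realizable'' is built into the construction, since the perturbed placement itself realizes $A^{*}$. For the converse I need to show that any realization of $A^{*}$ still recovers a point of $V$ by reading off coordinates on the variable line. The hardest part will be precisely this reverse direction: in the original construction the arithmetic was enforced by exact collinearities (as in von~Staudt's configuration), and after perturbation those collinearities have been replaced by ``near-collinearities'' that, in isolation, only constrain cross-ratios to lie in intervals. My plan to overcome this is to argue that the realization space of $A^{*}$ maps onto the realization space of $A$ by a stable projection whose fibres are the contractible polytopes cut out by the perturbation inequalities on the $\epsilon_i$ and the infinitesimal rotation angles; equivalently, shrinking these parameters to zero gives a continuous deformation from any realization of $A^{*}$ to one of $A$. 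Together with Theorem~\ref{thm:nonSimple} and Lemma~\ref{lem:realizable} (which guarantees that the underlying order type always admits a realization to anchor the deformation), this transfers \ER-hardness to simple allowable sequences, yielding Theorem~\ref{thm:simple}.
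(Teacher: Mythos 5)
There is a genuine gap in the reverse direction, and it is precisely the point where your plan says the ``hardest part'' will be. A generic perturbation of the construction (moving $a_i$ to $(-N,\epsilon_i)$, tilting the variable directions, etc.) destroys the exact collinearities of the von~Staudt gadgets, and with them the arithmetic. The resulting simple allowable sequence $A^{*}$ only records \emph{orderings} of switches; it can no longer express ``these three points are collinear,'' so a realization of $A^{*}$ constrains each cross-ratio merely to an open interval whose width is not controlled by anything in the combinatorial data. Your proposed fix --- ``shrinking these parameters to zero gives a continuous deformation from any realization of $A^{*}$ to one of $A$'' --- does not work: the $\epsilon_i$ and the rotation angles are parameters of \emph{one particular construction} of a realization, not coordinates on the realization space of $A^{*}$. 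An arbitrary realization of $A^{*}$ need not arise from your perturbed placement at all, and there is no map sending it to a realization of $A$ (if there were, $A^{*}$ realizable would imply $A$ realizable, but $A^{*}$ is realizable by a generic point set whenever its order-type data is, independently of whether $V$ is empty). The appeal to Lemma~\ref{lem:realizable} does not help here; that lemma concerns the induced order type and is used in the paper only to strengthen the statement of the hardness result, not to anchor any deformation.

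The paper avoids this trap by never perturbing away the collinearities globally. Instead it adapts Shor's \emph{constructibility} machinery: each degenerate point $p_k$ (lying on at most two lines $\ell_1,\ell_2$ spanned by earlier points) is replaced by four points forming a tiny parallelogram whose combinatorial data \emph{forces} $\ell_1$ and $\ell_2$ to intersect inside their convex hull. This is the crucial asymmetry-breaking device: from any realization of the simple object one can put $p_k$ back at $\ell_1\cap\ell_2$ and recover a realization of the non-simple one, so realizability is preserved in \emph{both} directions. Two further issues your proposal does not address also require care in the paper: the switch points live on the line at infinity and are determined by $P$, so they cannot be independently perturbed or replaced (this is why the paper introduces constructible allowable sequences and Observation~\ref{obs:inftyReplacement}), and the replacement must specify how every simple and non-simple switch of $A$ is refined into an ordered block of switches of $A'$ (Figure~\ref{fig:slopes_quadrant} and the three positional assumptions on the $p_n^i$). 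Without a mechanism of this kind that makes the reverse reconstruction possible, the reduction does not go through.
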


To prove the theorem above we extend the method of \emph{constructible} order types,
that has already been used to show that simple order type realizability is \ER-complete~\cite{orientedMatroids,shor1991stretchability,matousek2014segment}.

An order type is \emph{constructible} if there is an order of the points $(p_1,\dots,p_n)$
such that
\begin{itemize}
\item the points $p_1,\dots,p_4$ are in general position,
\item the point $p_i$ lies on at most two lines spanned by the points $p_1,\dots,p_{i-1}$.
\end{itemize}

\begin{lemma}[\cite{shor1991stretchability}]
  For a constructible order type $O$ there is simple order type $O'$ that is realizable if and only if $O$ is realizable.
\end{lemma}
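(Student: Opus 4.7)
The plan is to build $O'$ inductively from the constructive ordering $(p_1,\ldots,p_n)$ of $O$, replacing each collinearity by a small \emph{sandwich gadget} that turns a zero orientation into a controlled near-collinearity, and then prove realizability equivalence with a two-pass perturbation argument. The first four points are copied verbatim since they are already in general position. For $i \geq 5$, the new point $p_i$ lies on at most two lines $\ell_1,\ell_2$ spanned by earlier points, with $\ell_j$ spanned by $p_{a_j}, p_{b_j}$. For each such $\ell_j$ I pick a side $\epsilon_j \in \{+1,-1\}$ on which $p_i$ will eventually sit, and append two fresh auxiliary points $q_j^+, q_j^-$. In $O'$ I prescribe orientations so that $q_j^+$ and $q_j^-$ lie on opposite sides of $\ell_j$, lie on the same side of every line $p_k p_i$ for $k < i$ with $p_k \notin \ell_j$, and straddle $p_i$ transversally across $\ell_j$. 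This turns $\chi(p_{a_j}, p_{b_j}, p_i) = 0$ into the sign $\epsilon_j$, adds at most four points per step, and leaves $O'$ simple.

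For the direction $O$ realizable $\Rightarrow O'$ realizable, I would process the ordering forward. At step $i$, starting from the induction hypothesis that $p_1,\ldots,p_{i-1}$ already realize both $O$ and the part of $O'$ built so far, I would perturb $p_i$ by an amount smaller than its distance to every non-incident spanned line, in the direction dictated by $(\epsilon_1,\epsilon_2)$; this flips exactly the chosen zero orientations and nothing else. Then I place $q_j^+$ and $q_j^-$ in a still smaller neighbourhood of $p_i$, one on each side of $\ell_j$, which realizes all the sandwich orientations while leaving the rest of the configuration intact.

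For the converse, I would process in reverse. Starting from a realization of $O'$, I discard the auxiliary points and, for $i = n, n-1, \ldots, 5$, snap $p_i$ onto $\ell_1$, or onto $\ell_1 \cap \ell_2$ when $p_i$ lies on two prior lines. By constructibility $p_i$ is never used to span any earlier collinearity line, so snapping $p_i$ affects only orientations of triples containing $p_i$ together with already-snapped points of higher index. The sandwich gadget forces $p_i$ to start inside a combinatorial cell adjacent to $\ell_j$, so the snap is a motion within the closure of that cell and does not flip any nonzero orientation.

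The main obstacle is precisely this last claim, namely that the sandwich gadget pins $p_i$ into a cell whose closure touches $\ell_j$, so that snapping $p_i$ onto $\ell_j$ is sign-preserving. The resolution is local and combinatorial: the orientations forced by $q_j^\pm$ with all earlier points are exactly those that a point on $\ell_j$ would realize, together with the single extra orientation $\epsilon_j$ recording which side of $\ell_j$ we picked; hence the realization region of $p_i$ in any realization of $O'$ is, up to this side choice, the same stratum that $O$ assigns to $p_i$, with $\ell_j$ on its boundary. Making this rigorous amounts to the standard local perturbation analysis of Shor and Matou\v{s}ek applied gadget by gadget, which is what both passes of the equivalence rely on.
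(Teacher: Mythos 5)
Your forward direction (realizable $O \Rightarrow$ realizable $O'$) is fine, but the converse has two genuine problems. The first is the processing order of the snapping pass. You snap for $i=n,n-1,\dots,5$, placing $p_n$ onto lines spanned by lower-index points that you will subsequently move: when you later snap some $p_a$ with $a<n$ that spans one of $p_n$'s lines, that line shifts and the collinearity $(p_a,p_b,p_n)$ you just established is destroyed (your own sentence concedes that snapping $p_i$ affects triples of $p_i$ with already-snapped higher-index points -- those triples are exactly the collinearities you need to preserve). The reinsertion must go in \emph{increasing} index order, so that each point is placed on lines whose spanning points are already in their final positions; this is why the paper's version of the argument always eliminates the \emph{last} collinear point of the current order type first, guaranteeing that the reverse pass never moves a point that spans someone else's line.

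The second and more central gap is the gadget itself. What the lemma really requires is a purely combinatorial certificate, readable off the order type, that the point $\ell_1\cap\ell_2$ lies in the closure of the prescribed cell of the arrangement of all \emph{other} spanned lines. Your conditions on $q_j^{\pm}$ are stated relative to lines $p_kp_i$ \emph{through} $p_i$, which are irrelevant for this purpose; and even corrected to ``each $q_j^{\pm}$ has the same orientation as $p_i$ with respect to every spanned line not through $p_i$,'' two points straddling $\ell_1$ plus two points straddling $\ell_2$ only certify that each of $\ell_1,\ell_2$ individually meets the cell, not that their intersection point does (the two lines can each cross a convex cell while intersecting outside it). Closing this gap is precisely Shor's construction, which the paper sketches: replace $p_i$ by four points in convex position whose order type encodes that $\ell_1$ and $\ell_2$ each separate the quadruple $2$--$2$ transversally (forcing $\ell_1\cap\ell_2$ into their convex hull) and that all four points agree in orientation with respect to every other spanned pair (forcing that hull, hence the intersection point, into a single cell). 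Your closing paragraph asserts the needed conclusion but defers the argument to ``the standard local perturbation analysis,'' which is the lemma's actual content; as written, the certificate is not established. (Note also that this is where the hypothesis of at most \emph{two} lines through $p_i$ is used: for three or more lines the pairwise intersections inside the hull need not coincide, so no single point can be placed on all of them.)
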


\begin{figure}[ht]
  \centering
  \includegraphics[width=.25\textwidth]{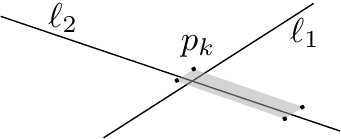}
  \caption{\label{fig:simple_replace} Replacing a point $p_k$ by four points.}
\end{figure}
The idea to construct the order type $O'$ from $O$ is the following.
Let $p_k$ the last point in the order that lies on at least one line spanned by the points $\{p_1,\dots,p_{k-1}\}$.
We assume $p_k$ actually lies on two lines $\ell_1$ and $\ell_2$. This is the more difficult case (compared to a point on one line).
We replace $p_k$ by four points in convex position, such that the order type encodes that $\ell_1$ and $\ell_2$ intersect in the convex hull of the new points.
The new order type $O'$ is realizable if $O$ is realizable since we can also do this replacement in a realization.
On the other hand, we can also put $p_k$ back into a realization of $O'$, namely at the intersection point of $\ell_1$ and $\ell_2$.
The four new points make sure that this intersection point has the same orientations as $p_k$, which guarantees that $O'$ is realizable if and only if $O$ is realizable.\footnote{We need that $p_k$ only lies on two lines since this method only ensures that all pairwise intersection points of the lines lie in the convex hull of the new points, but only for two lines we know that the pairwise intersection point is one point.}
The order type $O'$ is again constructible and all collinearities appear among the first $k-1$ points. Thus iterating this process leads to a simple order type that is realizable if and only if $O$ is realizable.
We also show how to adapt this replacement procedure 
to allowable sequences to obtain a simple allowable sequence $A'$ that is realizable if 
and only if the allowable sequence $A$ from Theorem~\ref{thm:nonSimple} is realizable.

\begin{proposition}[\cite{shor1991stretchability,matousek2014segment}]\label{obs:construct}
	The order types in Mn\"ev's universality theorem are constructible.
\end{proposition}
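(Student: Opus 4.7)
The plan is to exhibit an explicit ordering of the points appearing in the order type produced by the construction of Theorem~\ref{thm:nonSimple} (equivalently Shor's construction) and verify the two conditions in the definition of constructibility. Recall that the order type is assembled from a finite sequence of von Staudt gadgets $g_1,\dots,g_k$, one per arithmetic relation in the normal form, together with a fixed ``scaffolding'' of reference points (the points $0$ and $1$ of the value line $\ell$, the points on $B$, and the anchor points determining the directions used by the gadgets, which in our setup live on $\ell_\infty$ and $\ell$ in Figure~\ref{fig:gadgets_reversed}). Each gadget introduces a bounded number of auxiliary points, and the gadget point representing the newly computed variable $x_k$ is expressed as the intersection of two lines through previously-placed points.

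The proposed ordering is as follows. First I would list the four scaffolding points that fix the projective frame (for instance, the two reference points $0$ and $1$ on $\ell$, together with two of the fixed auxiliary points used by every gadget); these are chosen generically and lie in general position. Next, I would process the gadgets $g_1,\dots,g_k$ in the same order as in the proof of Theorem~\ref{thm:nonSimple} (addition gadgets first, sorted by decreasing $y$-value, followed by multiplication gadgets). Within each gadget $g_i$, the auxiliary points (such as $a_i,b_i,c_i,d_i$ in the notation of Figures~\ref{fig:gadgets} and~\ref{fig:gadgets_reversed}) would be inserted in the order in which the von Staudt construction defines them: those that are ``free'' (placed along one of the scaffolding lines $\ell$ or $B$ with no further constraint) first, then those that are determined as the intersection of two already-present lines, and finally the output point $x_k$ itself.

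To check the two conditions: the first four points are in general position by construction. For every subsequent point $p$, I would verify that $p$ lies on at most two previously-spanned lines. A free auxiliary point lies on exactly one such line (namely the scaffolding line it is placed on) and nothing more, since no other constraint has been imposed yet; a derived auxiliary point, as well as the variable point $x_k$, is by definition the meet of exactly two lines, each spanned by points introduced earlier. This bookkeeping reduces to a direct inspection of the addition and multiplication gadgets, which is the routine part.

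The main obstacle is the inspection of the gadgets, together with showing that no unintended collinearities arise among points of different gadgets that would force some point to lie on a third spanned line. This is handled by the genericity of the free choices in Shor's construction: since the free points in each gadget can be perturbed along their scaffolding line without changing the order type, the only forced collinearities are precisely the two that define each derived point, so the constructibility conditions are satisfied.
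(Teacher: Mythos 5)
Your overall strategy is the same as the paper's: exhibit an explicit constructibility order that starts with a small projective frame and then processes the gadgets one at a time, introducing the free points of a gadget before its derived points (the paper orders the points within a gadget as in Figure~\ref{fig:construct}). However, there is one missing idea, and it is precisely the point on which the paper's short proof hinges. You assert that every derived point, \emph{including the variable point} $x_k$, ``is by definition the meet of exactly two lines, each spanned by points introduced earlier.'' This is only true if no variable occurs as the output of more than one equation of the normal form. If some $x_k$ appeared as the output of two gadgets, its point on $\ell$ would be forced to lie on the line $\ell$ itself \emph{and} on one constraining line from each of the two gadgets --- three or more spanned lines --- and no reordering obviously rescues the bookkeeping: whichever defining gadget is placed second forces an additional collinearity through an already-constrained point. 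Your genericity argument does not cover this case, because these extra collinearities are forced by the semantics of the construction, not incidental coincidences that a perturbation of the free points can remove.

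The paper closes exactly this gap by invoking a property of Shor's normal form: each variable appears as the output $x_k$ of at most one equation $x_i+x_j=x_k$ or $x_i\cdot x_j=x_k$. This permits an ordering of the gadgets in which each output variable is constructed exactly once, as the last point of its unique defining gadget, with its input variables already present. With that property stated (or cited from~\cite{shor1991stretchability}) and used to justify your claim that each derived point meets only two previously spanned lines, your argument goes through and coincides with the paper's.
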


\begin{proof}
	We give a constructability sequence of the order type above.
	The first points $p_1,\dots,p_3$ are $0,\infty$ and $M$
	and a point $A$ that lies on a line between $\infty$ and $1$. Note we need only 3 points in the constructible part.
	Now we need a property of the normal form of the primary semi-algebraic set.
	Namely, that each variable is the output variable of at most one gadget,
	i.e., each variable appears in at most one equation of the form
	$x_i+x_j=x_k$ or $x_i\cdot x_j=x_k$ as $x_k$.
	This property allows an order of the gadgets, such that the output
	variable of the gadget is not constructed yet.
	With this observation we can assume the input variables of a
	gadget already exist and construct the gadget points
	in the order shown in Figure~\ref{fig:construct}.
\end{proof}
\begin{figure}[ht]
	\centering
	\includegraphics[scale=1]{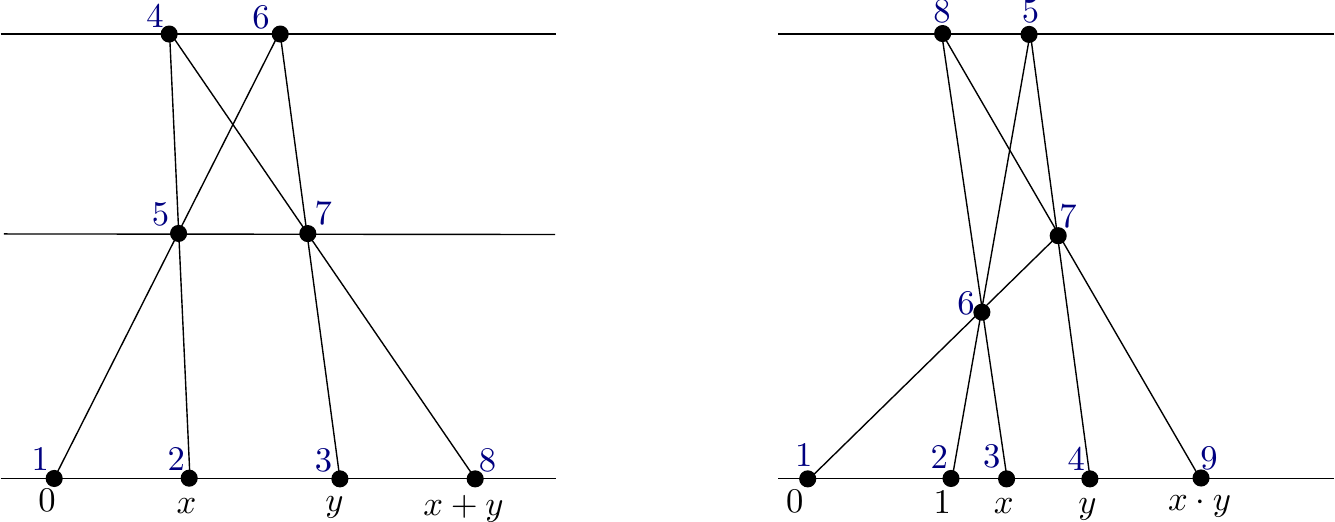}
	\caption{\label{fig:construct} The constructability sequence of points
		within one gadget.}
\end{figure}

To adapt this replacement procedure for allowable sequences we use the following idea.
We construct an order type $O_A$ that consists of the points $P$ and additional points $\{o_1,\dots,o_k\}=:O$ of $A$ on the line at
infinity that represent the switches of the allowable sequence\footnote{We can determine this order type.}.
For an allowable sequence $A$ constructed in the proof of Theorem~\ref{thm:simple} this order type $O_A$
is essentially the order type of Shor's proof for Mn\"ev's universality theorem with additional points on $\ell$ (the line at infinity)
that represent the switches that do not correspond to variables.
Those additional points do not obstruct the constructability of the order type $O_A$, since they
come from simple switches, because they lie on only one line spanned by points of $P$ and the line at infinity.
The obstacle in the replacement procedure is, that we cannot replace the points on the line at infinity since they are determined by $P$.


To conclude, we have to adapt the replacement procedure for the
order type $O_A$, resp. the allowable sequence, of the proof of
Theorem~\ref{thm:nonSimple} to an order type $O_A'$ with the following properties.
\begin{itemize}
	\item The points $O_A\setminus\ell_\infty$ are in general position.
	\item For each pair of points $p,q\in P$ there is
	a point in~$O$ that is collinear with $p$ and $q$.
	\item No point on $\ell_\infty$ is collinear with two pairs of points of $O\setminus\ell_\infty$.
	\item A realization of $O_A'$ leads to a representation of $O_A$ and vice versa.
\end{itemize}
We call a point of $O$ \emph{active} if it lies on at least two lines induced by $P$. 

\begin{observation}\label{obs:inftyReplacement}
	Let $p_1,\dots,p_k$ be the constructability sequence restricted on
	$P$ and let $O_i$ be the active vertices of $O$ with respect to the vertices
	$p_1,\dots,p_i$.
	Each point of $p_i$ lies on at most two lines induced by
	$S_{i-1}:=\{o_1,\dots,o_{i-1}\}\cup P_{i-1}$.
\end{observation}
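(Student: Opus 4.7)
The plan is to deduce the observation from Proposition~\ref{obs:construct}. I would start from a full constructability sequence of the order type $O_A$ that includes both the gadget points of $P$ and the switch points of $O$ lying on $\ell_\infty$, ordered so that each point sits on at most two lines spanned by its predecessors. Restricting this sequence to $P$ preserves the relative order and yields the sequence $p_1,\dots,p_k$ of the observation.

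First, I would classify any line induced by $S_{i-1}=\{o_1,\dots,o_{i-1}\}\cup P_{i-1}$ passing through $p_i$ into three types: (a)~spanned by two points of $P_{i-1}$; (b)~spanned by a point of $P_{i-1}$ together with an active switch $o_l \in O_{i-1}$, which forces the line to have the direction encoded by $o_l$; or (c)~spanned by two switches, which forces the line to be $\ell_\infty$. Case~(c) is immediately ruled out since $p_i$ is an affine point of $P$, so it suffices to bound the combined number of lines of types~(a) and~(b) through $p_i$ by two.

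The next step is to refine the full constructability sequence so that every switch in $O_{i-1}$ is a predecessor of $p_i$. A switch $o_l$ becomes active with respect to $P_{i-1}$ exactly when a second affine line in its direction has been drawn; inserting $o_l$ into the sequence immediately after that moment places it on exactly two previously spanned lines, which is consistent with the constructability bound. Performing this insertion for each switch of $O_{i-1}$ puts all of $S_{i-1}$ among the predecessors of $p_i$ in the refined sequence. Then every type-(a) or type-(b) line through $p_i$ is a line spanned by two predecessors of $p_i$, and Proposition~\ref{obs:construct} caps their number by two.

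The main obstacle is verifying that this interleaving is actually consistent, i.e., that each inserted switch is on at most two spanned lines at the moment of insertion, and that no subsequent insertion inflates this count. This reduces to a direct inspection of the addition and multiplication gadgets used in the proof of Theorem~\ref{thm:nonSimple} (see Figures~\ref{fig:gadgets_reversed} and~\ref{fig:placing}): within each gadget, and across gadgets sharing a common variable, the parallel lines that make a given switch active appear in a prescribed order, so the second such line can be identified unambiguously and the switch inserted right after it. Once this gadget-by-gadget verification is carried out, the observation follows.
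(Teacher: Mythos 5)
Your three-way classification of the lines through $p_i$ and the elimination of the case of two switch points (the line $\ell_\infty$) is fine, and it is consistent with the only justification the paper itself offers: the paper states this as an observation without proof, remarking beforehand only that the switch points arising from simple switches lie on just one line spanned by $P$ plus $\ell_\infty$. However, your deduction contains a genuine gap, and in part a circularity. Proposition~\ref{obs:construct} asserts constructibility of the order type of $P$ alone; it says nothing about the augmented point set $P\cup O$. So the step ``every type-(a) or type-(b) line through $p_i$ is a line spanned by two predecessors of $p_i$, and Proposition~\ref{obs:construct} caps their number by two'' does not follow: once you interleave the active switches into the sequence, the predecessors of $p_i$ include switch points, and these create additional lines through $p_i$ (a line through a single earlier point of $P$ in an active direction) over which the proposition gives no control whatsoever. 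For the same reason, your opening assumption of ``a full constructability sequence of the order type $O_A$'' with the at-most-two-lines property already presupposes essentially the statement to be proven.

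The substantive content of the observation is exactly the claim that no gadget point $p_i$ is collinear with an earlier point of $P$ along an active switch direction other than along its (at most two) defining construction lines, i.e.\ that the type-(b) lines never push the count above two. You correctly flag this as ``the main obstacle,'' but you then defer it wholesale to an unperformed ``gadget-by-gadget verification'' and assert the conclusion; as written, the proposal reduces the observation to itself rather than proving it. Closing the gap would require actually carrying out that inspection: for each point in the constructability order of Figure~\ref{fig:construct}, enumerate the directions that are active at that stage (the direction of $\ell$, of the line $B$, of the lines through the variable points, and so on) and check that the new point is not collinear with any earlier point along any of those directions except along its two defining lines. In fairness, the paper does not carry out this verification either, so you have not diverged from a more detailed published argument --- but your write-up does not yet constitute a proof.
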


We call an allowable sequence with an order of the points satisfying
the observation above a \emph{constructible} allowable sequence.
To conclude the proof of Theorem~\ref{thm:simple} it is sufficient to show the following lemma.

\begin{lemma}\label{lem:constructible}
	Let $A$ be a constructible allowable sequence.
	There exists a simple allowable sequence $A'$ that is realizable if and only if $A$ is realizable.
\end{lemma}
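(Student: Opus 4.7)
The plan is to adapt Shor's point-replacement reduction from constructible order types to simple order types, applied to the allowable sequence setting. I proceed by reverse induction on the constructibility sequence $p_1,\dots,p_k$ of $A$. At stage $i$ (starting from $i=k$ and decreasing to $i=5$), Observation~\ref{obs:inftyReplacement} tells me that $p_i$ lies on at most two lines induced by $S_{i-1}=\{o_1,\dots,o_{i-1}\}\cup P_{i-1}$. I replace $p_i$ by a small convex configuration of (up to) four new points, chosen exactly as in Shor's reduction so that, when $p_i$ is supposed to lie on two lines $\ell_1$ and $\ell_2$, the intersection $\ell_1\cap\ell_2$ of any realization falls inside the convex hull of the four replacement points. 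The induced order type on $(P\setminus\{p_i\})\cup\{p_i^{(1)},\dots,p_i^{(4)}\}$ together with the switches on $\ell_\infty$ gives the next intermediate allowable sequence $A_{i-1}$.

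To ensure that the final sequence is \emph{simple} and not merely free of three-point collinearities, I exploit the freedom in choosing the four new points and place them generically, so that the lines they span with the remaining elements of $S_{i-1}$ have pairwise distinct slopes, all distinct from the slopes already present in $S_{i-1}$. This has two effects. First, it adds no spurious three-point collinearities beyond those that are forced. Second, it breaks every parallelism that formerly passed through $p_i$: each non-simple switch of $A$ caused by lines through $p_i$ splits into a cluster of nearby simple switches on $\ell_\infty$. The switches themselves are never \emph{replaced}, since they are determined by $P$, but the perturbation of $P$ achieves the same effect combinatorially.

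Equivalence of realizability between $A_i$ and $A_{i-1}$ follows Shor's argument in both directions: a realization of $A_i$ yields a realization of $A_{i-1}$ by perturbing $p_i$ into four nearby points, chosen in a neighborhood small enough to preserve every orientation and slope comparison with respect to other elements; conversely, a realization of $A_{i-1}$ yields one of $A_i$ by contracting the four points back to $\ell_1\cap\ell_2$, with the convex-hull gadget guaranteeing the contracted point has the correct orientation with every other element and the generic slope condition preserving the switch order. Iterating from $i=k$ down to $i=5$ produces a simple allowable sequence $A'=A_4$ realizable iff $A$ is.

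I expect the main obstacle to be bookkeeping the genericity tolerances across the $k$ stages. Each stage's four points must be placed in a scale small enough not to disturb any gadget built in later stages (which, in reverse order, means the earlier stages of the induction), so the replacement neighborhoods need to be placed in strictly nested scales following reverse constructibility order. A secondary subtlety is verifying that contracting four points back to a single point $p_i$ does not revive a parallelism broken at an earlier stage; this is handled by noting that, at the moment of contraction, the only parallelisms that can appear are those already present in $A$, which is exactly the direction we want.
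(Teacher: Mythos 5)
Your overall strategy matches the paper's: iterate Shor's replacement of the last point $p_i$ in the constructibility order by four points in convex position around it so that $\ell_1\cap\ell_2$ is trapped in their convex hull, split each old switch involving $p_i$ into a cluster of new simple switches near the old one, and argue both realizability directions by perturbation and by contraction back to $\ell_1\cap\ell_2$. The one-line case is handled in the paper exactly as you would expect (by adding an auxiliary point on a second line through $p_i$ and deleting it afterwards), and your treatment of the scales and of the "revived parallelism" worry is consistent with what the paper does.

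The genuine gap is in the step where you declare that ``the induced order type \ldots together with the switches on $\ell_\infty$ gives the next intermediate allowable sequence $A_{i-1}$,'' backed only by ``place them generically.'' An allowable sequence carries strictly more information than an order type: it is the linear order of \emph{all} switches, i.e.\ of all spanned slopes. When $A_i$ is not realizable there is no point set to place generically, so this does not define a combinatorial object; and even when $A_i$ is realizable, the reduction must commit to one specific $A_{i-1}$ in advance and then verify that the perturbed realization has exactly that switch order. Concretely, you must specify (i) the order of the four new switches $(p,p_i^{(1)}),\dots,(p,p_i^{(4)})$ inside the cluster replacing each old simple switch $(p,p_i)$, which depends on the quadrant of the parallelogram that $p$ lies in, and (ii) how the clusters coming from the non-simple switches on $\ell_1$ and $\ell_2$ interleave with one another and with the six switches spanned by the four new points themselves. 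The paper's proof consists almost entirely of pinning this down: the quadrant rule of Figure~\ref{fig:slopes_quadrant} handles (i), and three explicit positional assumptions on the parallelogram (two of the points much closer to $\ell_1$ than the other two, the parallelogram strongly elongated in the $\ell_1$ direction, and two vertices perturbed slightly inward) handle (ii), as illustrated in Figure~\ref{fig:switch_replacement}. These rules are read off a purely local picture of $p_i$, $\ell_1$, $\ell_2$ and the quadrant classification of the remaining elements, which is always locally realizable; that is what makes $A_{i-1}$ well defined even when $A_i$ is not. Without this explicit combinatorial definition your reduction does not output a concrete $A'$, so you should supply it.
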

\begin{proof}
	Let $p_1,\dots,p_n$ a constructability sequence of $A$.
	The goal is to construct a simple allowable sequence $A'$
	that is realizable if and only if $A$ is realizable.
	We achieve by replacing the point $p_n$ by several points $p_n^1,\dots,p_n^c$,
	such that $(p_1,\dots,p_{n-1},p_n^1,\dots,p_n^c)$ is a
	constructability order of a new allowable sequence $A^{n-1}$.
	In addition, the points $p_n^{1},\dots,p_n^{c}$ are not collinear with an active point of the new allowable sequence.
	This last property allows us to continue with the replacement procedure with the point $p_{n-1}$.
	We denote by $S_k$ the points $P$ of $O_{A^k}$ and its active points of $O$. 
	The number of points $c$ we replace $p_n$ by depends on the following cases.
	\begin{enumerate}
		\item\label{itm:noline} $p_n$ does not lie on a line spanned by $S_{k-1}$.
		\item\label{itm:oneline} $p_n$ lies on exactly one lines spanned by $S_{k-1}$.
		\item\label{itm:twoline} $p_n$ lies on exactly two lines spanned by $S_{k-1}$.
	\end{enumerate}
	In Case~\ref{itm:noline} there is nothing to change. We just set $A^{k-1}$ to $A^k$.
	We proceed with Case~\ref{itm:twoline} since Case~\ref{itm:oneline} can be solved
	exactly the same way, but the necessity of some details become clearer in Case~\ref{itm:twoline}.
	We replace the point $p_k$ by four points $p_k^1,\dots,p_k^4$.
	Those points are placed almost on the vertices of a very small
	parallelogram around $p_n$ as shown in Figure~\ref{fig:slopes_quadrant}.
	
	\begin{figure}[ht]
		\centering
		\includegraphics[width=.9\textwidth]{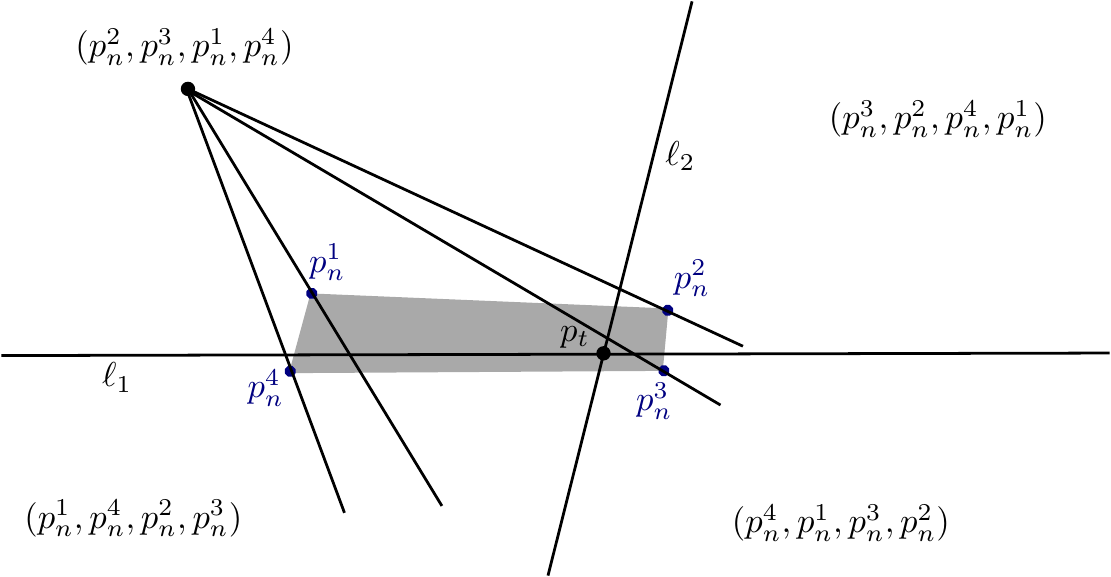}
		\caption{\label{fig:slopes_quadrant} Placing the points $p_k^i$
			around $p_k$.
			The order of switches between new points and a point in a quadrant is given by the tuples.}
	\end{figure}
	
	We can replace each occurrence of $p_k$ in a simple switch $(p,p_k)$
	by the four consecutive switches
	$(p,p_n^i),(p,p_n^j),(p,p_n^k),(p,p_n^l)$, where the exact order
	of those switches depends on the quadrant $p$ lies in as shown in
	Figure~\ref{fig:slopes_quadrant}.
	To determine how to replace the non-simple switches we make the
	following assumptions  on the position of the points $p_n^i$.
	\begin{itemize}
		\item The points $p_n^1$ and $p_n^2$ lie ``much'' closer to $\ell_1$
		than $p_n^3$ and $p_n^4$.
		This has the effect that the angle between $\ell_1$ and a line
		through  $p_n^1$ or $p_n^2$ and a point on $\ell_1$ is smaller
		than the angle between $\ell_1$ and $p_n^3$ or $p_n^4$ and any
		point on $\ell_1$ as shown in Figure~\ref{fig:switch_replacement}.
		\begin{figure}[ht]
			\centering
			\includegraphics[width=.9\textwidth]{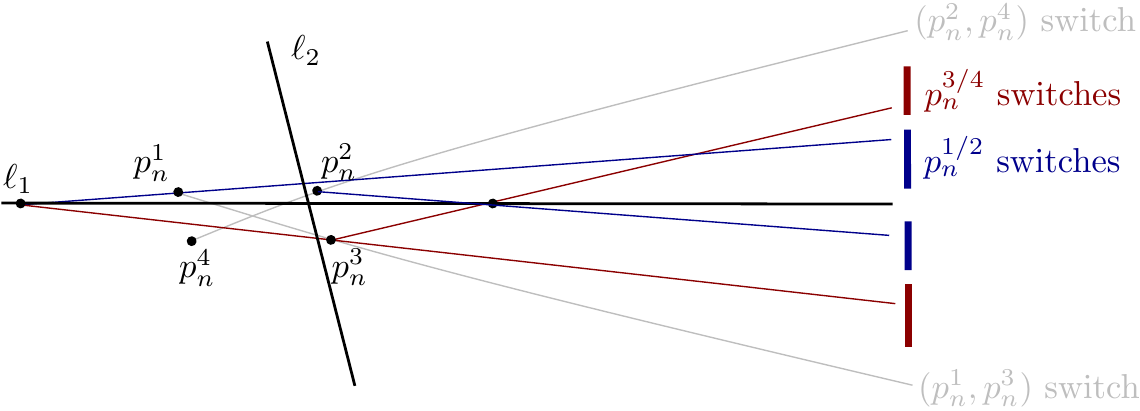}
			\caption{\label{fig:switch_replacement} The relative position of
				the switches through the points
				$o_n^i$ and a point on $\ell_1$.}
		\end{figure}
		Similarly, the points $p_n^2$ and $p_n^3$ lie closer to $\ell_2$ than
		$p_n^1$ and $p_n^4$,
		such that the order of the new switches for points on $\ell_2$ can be determined.
		\item The parallelogram is much larger in $\ell_1$ direction, such that the switches of
		the diagonals of the parallelogram are almost parallel to $\ell_1$ as indicated in Figure~\ref{fig:switch_replacement}.
		\item Finally, we perturb the points $p_n^2$ and $p_n^4$ slightly into the parallelogram,
		such that the switches given by the edges of the parallelogram are the closest switches to
		the switches on $\ell_1$ or $\ell_2$.
	\end{itemize}
	We now proceed with the case~\ref{itm:oneline}, i.e., $p_n$ lies on exactly one line spanned by the
	points $p_1,\dots,p_{n-1}$. We observe that we can just repeat the
	construction for two lines by picking an arbitrary line through $p_n$,
	and add a second point $p$ on this line.
	Then we apply the same replacement procedure as above and remove the
	added point $p$ as well as all points of the allowable sequence constructed because of $p$ again.
\end{proof}

\begin{remark}
	The results so far show universality of the realization space of allowable sequences even
	if the realization space of the induced order type is non-empty. We can also achieve universality
	of the allowable sequence and the order type simultaneously by
	considering an order type that also contains the points on the line at infinity
	(optionally: only the variable points on $\ell$). By a projective transformation
	we perturb $\ell$ away from the line at infinity without changing the
	allowable sequence
	of the points not on $\ell$. We can obtain the new allowable sequence by considering the
	rotation system of points $O$ around a point $p$ on $\ell$ and adding the switches through
	$p$ in an interval close to the old position of $p$ on the line at infinity as
	shown in Figure~\ref{fig:simple_perturb}.
	\begin{figure}[ht]
		\centering
		\includegraphics[width=.9\textwidth]{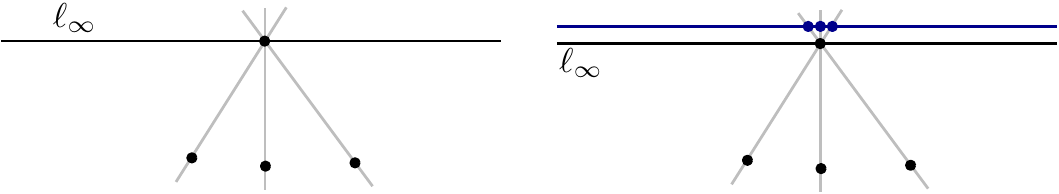}
		\caption{\label{fig:simple_perturb} Creating a new allowable sequence by perturbing the old allowable sequence away from the line at infinity.}
	\end{figure}
	
	The resulting allowable sequence is also constructible since none of the
	points of the allowable 
	sequence is active and the underlying order type is essentially the order
	type from Shor's proof
	(except for a slightly different placement of the gadgets) which is constructible.
\end{remark}

\section{Finite convex geometries}\label{sec:cg}
In this section we use the \ER-hardness result for the realizability of allowable sequences to
show that deciding the realizability of an abstract convex geometry in the plane is \ER-hard.
Afterwards we show that the convex geometries constructed in our reduction are realizable in an arbitrary dimension
if and only if they are realizable in the plane, which proves the following theorem.
\begin{theorem}\label{thm:convexG}
The realizability problem for abstract convex geometries is \ER-complete.
\end{theorem}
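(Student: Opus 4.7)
The plan is to prove \ER-hardness by reducing from the realizability of simple allowable sequences, which is \ER-hard by Theorem~\ref{thm:simple}; \ER-membership is routine since realizability can be written as a Boolean combination of polynomial (in)equalities over the coordinates of the points. Let $A$ be a simple allowable sequence on $P=\{p_1,\dots,p_n\}$. The key information carried by $A$ beyond the induced abstract order type (cf.\ the remark closing Section~\ref{subsec:simple}) is the cyclic order of the $\binom{n}{2}$ slopes of the lines spanned by pairs. I would encode this additional information in an abstract convex geometry $(V,\CCC)$ whose ground set augments $P$ with witness points representing the switches of $A$.

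Construction sketch: for each pair $\{p_i,p_j\}$ introduce two antipodal witness points $w_{ij}^{+},w_{ij}^{-}$, so the ground set is $V=P\cup W$. In the intended planar realization the witnesses sit far from $P$ along the line through $p_i$ and $p_j$, one on each side, so that the cyclic order of $W$ around $\partial\conv(V)$ matches the slope sequence dictated by $A$. I would then define $\CCC$ by three families of closure constraints: (1) for each pair $\{i,j\}$, the closure of $\{w_{ij}^{+},w_{ij}^{-}\}$ contains $\{p_i,p_j\}$, forcing collinearity of these four points; (2) every witness is an extreme point of $\conv(V)$, with the cyclic order on $\partial\conv(V)$ prescribed by $A$; (3) the restriction of $\CCC$ to $P$ agrees with the order type induced by $A$. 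To check that $(V,\CCC)$ is indeed a convex geometry (i.e., satisfies the anti-exchange axiom), I would exhibit the prototypical planar realization obtained from any hypothetical realization of $A$ and read $\CCC$ off directly from it. For the forward direction, pushing the witnesses far enough along the lines $p_ip_j$ in any realization of $A$ yields a realization of $(V,\CCC)$; conversely, constraints (1) and (2) force any realization of $(V,\CCC)$ to exhibit the slope sequence of $A$, hence to realize $A$.

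For the dimension-independent statement, I would argue that $(V,\CCC)$ is realizable in some $\R^d$ if and only if it is realizable in $\R^2$. One direction is trivial by embedding a planar realization into $\R^d$. For the other direction, observe that the collinearity constraints in (1) combined with the convex-position constraints in (2) anchor all of $W$ and, through (1), all of $P$, to the affine span of any three non-collinear witnesses; if needed, three auxiliary ``frame'' points forming a triangle with $V$ contained in its convex closure can be added to exclude higher-dimensional configurations explicitly.

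The main obstacle I anticipate lies in constraint (2): encoding the cyclic order of $W$ together with the pairing of $w_{ij}^{+}$ with $w_{ij}^{-}$ through a common line with $\{p_i,p_j\}$, using only closure data. One has to choose $\CCC$ so that the closure of each set of the form $\{w_{ij}^{+},w_{ij}^{-}\}\cup S$ collapses to the intended line (for suitable small $S$) without inadvertently forcing unintended collinearities among other elements of $V$. Resolving this will require reading off the closed sets from the explicit local structure of the prototype realization, and verifying that \emph{only} the desired collinearities appear.
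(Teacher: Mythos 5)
Your overall strategy---reduce from simple allowable sequence realizability by surrounding $P$ with far-away witness points, one antipodal pair per spanned line, so that the cyclic order of the witnesses on the convex hull encodes the slope sequence of $A$---is the same strategy the paper uses. But the step you yourself flag as ``the main obstacle'' is a genuine gap, and it is precisely the step the paper's gadget is designed to overcome. Your constraint (2) asks the convex geometry to prescribe the cyclic order of the witnesses on $\partial\conv(V)$. A convex geometry cannot do this with a single layer of extreme points: for any set of points in convex position, \emph{every} subset is closed (its convex hull contains no other point of the set), so the trace of $\CCC$ on $W$ is the full power set and carries no ordering information whatsoever. The collinearity constraints in (1) couple each pair $w_{ij}^{+},w_{ij}^{-}$ to the line $p_ip_j$, but they do not prevent a realization in which the witnesses appear on the hull in a cyclic order incompatible with $A$; nothing in the closure data distinguishes the two. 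The paper resolves this with a \emph{double} ring: each direction gets an outer point $r_i$ and an inner point $r_i'$ so close to it that $r_i'$ becomes extreme once $r_i$ is removed. The maximal closed sets containing $r_i$ but not $r_i'$ are then exactly the two ``half-planes'' bounded by $\ell(r_i,r_i')$ (Proposition~\ref{prop:maxConvex}), and counting the pairs on each side (Observation~\ref{obs:sides}) rigidly forces the cyclic hull order up to reflection (Lemma~\ref{lem:doublering}). Without a second layer, or some equivalent rigidifying device, your reduction does not go through.

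A second, smaller issue: you propose to ``read $\CCC$ off directly from'' a prototype planar realization. When $A$ is not realizable there is no prototype, yet the reduction must still output a well-defined convex geometry in polynomial time. The paper avoids this by specifying a complete abstract order type $O_A$ combinatorially (orientations of all triples, including those mixing ring points and points of $P$) and invoking the fact that an abstract order type determines an abstract convex geometry (Observation~\ref{obs:AOtoCG}); it then separately proves that $O_A$ is realizable whenever $A$ is (Lemma~\ref{lem:realOT}). Your dimension-lifting argument, by contrast, is essentially the paper's: a triangle of hull points containing all of $P$ forces coplanarity of $P$, and the collinearities pull the witnesses into the same plane.
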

\subsection{Preliminaries}

We first recall that an (abstract) order type carries more information than the convex geometry.

\begin{observation}[\cite{Adaricheva10a}]\label{obs:AOtoCG}
  An abstract order type uniquely determines an abstract convex geometry.
\end{observation}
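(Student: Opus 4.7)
The plan is to extract a convex geometry $(E,\CCC_\chi)$ directly from the chirotope $\chi$ and then verify the three axioms. The point is that membership of an element in the convex hull of a subset can be detected purely by the signs of $\chi$: by the Carath\'eodory theorem in the plane (and its combinatorial rank-3 analogue), $x$ lies in $\conv(S)$ iff either $x\in S$, or there exist $a,b,c\in S$ for which the signs $\chi(a,b,x)$, $\chi(b,c,x)$, $\chi(c,a,x)$ are consistent with $\chi(a,b,c)$ in the way that encodes ``$x$ is inside the triangle $abc$'' (with the appropriate handling of zeros for collinear triples). Call this the \emph{abstract convex hull} $\overline{S}$ of $S$, and set $\CCC_\chi:=\{C\subseteq E : \overline{C}\cap E = C\}$. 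Since $\overline{\cdot}$ is defined by sign patterns of $\chi$, the family $\CCC_\chi$ depends only on $\chi$, which yields uniqueness.

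Next I would verify Edelman--Jamison's axioms for $\CCC_\chi$. Axiom~1 is immediate: $\overline{\emptyset}=\emptyset$ and $\overline{E}\supseteq E$, so both $\emptyset$ and $E$ are in $\CCC_\chi$. Axiom~2 (closure under intersection) follows because if $x\in\overline{C_1\cap C_2}\cap E$ is witnessed by a triangle in $C_1\cap C_2$, then the same triangle witnesses $x\in\overline{C_i}\cap E=C_i$ for both $i$, so $x\in C_1\cap C_2$.

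The delicate step is Axiom~3, the anti-exchange / extreme-point property: for each $C\in\CCC_\chi\setminus\{E\}$ there must be $x\in E\setminus C$ with $C\cup\{x\}\in\CCC_\chi$. I would pick $x$ as a combinatorial extreme point of $E\setminus C$: a standard rank-3 chirotope argument shows that for any non-empty subset $T\subseteq E$ there exists $x\in T$ and an oriented ``line'' (an ordered pair $p,q\in E$) such that every element of $T\setminus\{x\}$ lies on one side of it, which in chirotope terms means $\chi(p,q,y)$ has constant sign as $y$ ranges over $T\setminus\{x\}$ — this is precisely the place where the rank-3 chirotope axiom (3) in the excerpt is used. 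Applying this with $T=E\setminus C$ gives a candidate $x$; one then checks, using the same separating ``line'', that no $z\in E\setminus(C\cup\{x\})$ can lie in a triangle $\{a,b,x\}$ with $a,b\in C\cup\{x\}$, so $C\cup\{x\}\in\CCC_\chi$.

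The main obstacle I expect is precisely Axiom~3: in the realizable case it is a one-line geometric sweep, but in the abstract setting one must produce the separating line from the chirotope axioms rather than from coordinates. The combinatorial convex-hull setup and the rank-3 Grassmann--Pl\"ucker relation in the statement of the chirotope axioms are exactly what make this work, and the resulting $\CCC_\chi$ manifestly coincides with $\CCC_P$ whenever $(E,\chi)$ is realized by a point set $P$.
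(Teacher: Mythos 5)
Your core mechanism is exactly the paper's: convex-hull membership is a Carath\'eodory-type condition in the plane, so whether $d\in\conv(\{a,b,c\})$ is read off from the signs of the triples, and hence the family $\CCC_\chi$ depends only on $\chi$. That is the entire content of the paper's justification (it records the four-point orientation criterion, Figure~\ref{fig:conv_orientation}, and cites Adaricheva--White for the rest). Where you diverge is that you additionally try to verify the Edelman--Jamison axioms for $\CCC_\chi$; the paper does not attempt this, and your Axiom~2 argument also needs a small repair (a point of $\overline{C_1\cap C_2}$ need not be witnessed by a triangle inside $C_1\cap C_2$ --- rather, closure under intersection is immediate from the definition $\CCC_\chi=\{C:\overline{C}\cap E=C\}$ once one knows $\overline{\cdot}$ is monotone, since $x\in C_1\cap C_2$ fails only if $x\notin C_i$ for some $i$ while $x\in\overline{C_1\cap C_2}\subseteq\overline{C_i}$).

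The genuine gap is in your Axiom~3. Taking $x$ to be an arbitrary extreme point of $T=E\setminus C$ (one admitting a separating line from $T\setminus\{x\}$) does not make $C\cup\{x\}$ closed: take three collinear points $a,b,c$ with $b$ between $a$ and $c$, and $C=\{a\}$. Then $c$ is extreme in $T=\{b,c\}$, but $b\in\overline{\{a,c\}}$, so $C\cup\{c\}\notin\CCC_\chi$. The separating line controls only the position of $x$ relative to $T\setminus\{x\}$, not relative to segments joining $x$ to points of $C$, so your final verification step fails. The standard fix is to choose $x\in E\setminus C$ with $\overline{C\cup\{x\}}$ inclusion-minimal (or to sweep a supporting line of $\overline{C}$), and one must also note that extreme points exist only because abstract order types correspond to acyclic rank-3 chirotopes; the three chirotope axioms alone do not exclude totally cyclic examples, for which $E\setminus C$ can have no extreme point at all. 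Since the paper delegates exactly this part to the citation, your attempt is more ambitious than the paper's one-line argument but not yet a complete proof of it.
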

The observation above follows from the fact that for four points $a,b,c,d$ the point $d$ lies in the convex hull of $a,b,c$ if and only if the alphabetically ordered triples except $(a,c,d)$ are ordered clockwise, see Figure~\ref{fig:conv_orientation}.
\begin{figure}[!h]
	\centering
	\includegraphics{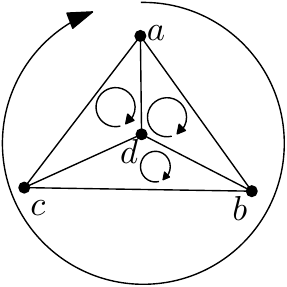}
	\caption{\label{fig:conv_orientation} Order types determine the convex geometry.}
\end{figure}
If an abstract convex geometry is realizable there exists an order type that encodes this convex geometry.
We show that the realizability problem for abstract convex geometries is \ER-complete, even when restricted to the abstract geometries that are defined by abstract order types.
As a consequence we can use an abstract order type as a polynomial encoding of the set system of possibly exponential size in the ground set that defines the convex geometry. We proceed to define some abstract order types which are the building blocks for our reduction.

We define $D_{2k}$ as the (abstract) order type of the following point set.
Consider $k$ lines that intersect in the common center point of two circles of almost the same radius.
Place a point on each intersection point of a circle with the $k$ lines.
We denote the points on the outer circle by $r_i$ and the inner ones by $r'_{i}$ in counterclockwise order.
The difference of the radii is small enough, such that $r'_i$ lies on the convex hull (of the set) when we remove $r_i$.
We call $D_{2k}$ a \emph{double ring} and denote the induced convex geometry by $(D_{2k},\CCC_{2k})$. We use the following convention, $r_i = r_{(i\mod k)+1}$.  

Note that the only collinearities in $D_{2k}$ appear among the points $r_i,r'_i,r_{i+k}$ and $r'_{i+k}$.
This means we can slightly perturb the lines in the construction of $D_{2k}$,
such that they do not intersect the center but form an arbitrary line arrangement in the neighborhood
of the center, and still obtain the same abstract order type.
We will use this fact and a ``unique representation'' we obtain from Lemma~\ref{lem:doublering},
which we will prove in the rest of this subsection, to fix an allowable sequence of a point set with a double ring. 


\begin{figure}[!h]
\centering
 \includegraphics[width=.9\textwidth]{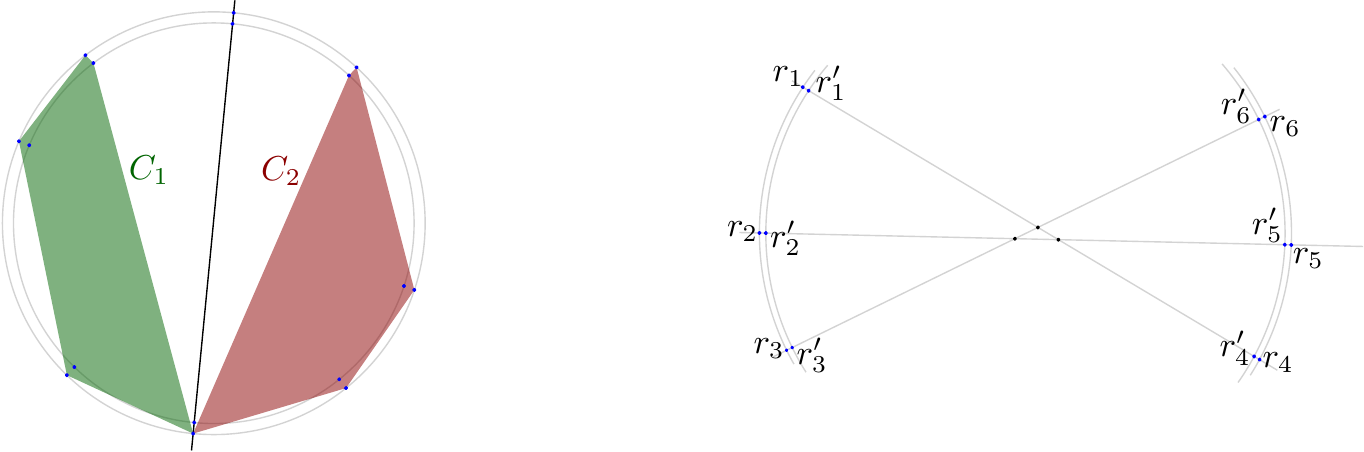}
  \caption{\label{fig:convexgeometry} Left: A double ring $D_{8}$ with the two
  maximal convex sets of Proposition~\ref{prop:maxConvex}. Right: Using the
  double ring to fix an allowable sequence.}
\end{figure}

\begin{proposition} \label{prop:maxConvex}
  Let $C$ be a convex set of  $(D_{2k},\CCC_{2k})$ such that $r_i\in C$, $r'_i\notin C$ for some $i$, and $C$
  is inclusion wise maximal for this property.
  Then we have $$C=  \{r_i\} \cup \{ r_{i+1},r'_{i+1},\ldots, r_{i+(k-1)},r'_{i+(k-1)} \}$$
  or $$C=\{r_i \}\cup \{ r_{i-1},r'_{i-1},\ldots, r_{i-(k-1)},r'_{i-(k-1)} \}.$$
\end{proposition}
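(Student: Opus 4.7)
The plan is to prove the proposition inside a concrete realization of $D_{2k}$; this is legitimate because $\CCC_{2k}$ is determined by the abstract order type (Observation~\ref{obs:AOtoCG}). Place the common center of the two circles at the origin with the diameter through $r_i, r'_i$ along the positive $x$-axis, so $r_i = (R,0)$ and $r'_i=(r,0)$, while $r_j = (R\cos\phi_j, R\sin\phi_j)$ and $r'_j = (r\cos\phi_j, r\sin\phi_j)$ with $\phi_j = \pi(j-i)/k$. The ``close enough radii'' assumption that $r'_i$ lies on the convex hull of $D_{2k}\setminus\{r_i\}$ translates into $r > R\cos(\pi/k)$.

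First I would rule out the antipodal points on the diameter through $r_i$. Along the $x$-axis the four collinear points appear in the order $r_{i+k}, r'_{i+k}, r'_i, r_i$, so $r'_i$ lies strictly in the interiors of $\conv\{r_i,r_{i+k}\}$ and $\conv\{r_i,r'_{i+k}\}$. Hence if $r_{i+k}$ or $r'_{i+k}$ belonged to $C$, then $r'_i\in\conv(C)$, contradicting $r'_i\notin C=\conv(C)\cap D_{2k}$; therefore $r_{i+k},r'_{i+k}\notin C$.

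The main step is to show that $C\setminus\{r_i\}$ lies entirely in the upper half $U=\{r_j, r'_j : j-i\equiv 1,\ldots,k-1\pmod{2k}\}$ or entirely in the lower half $L=\{r_j, r'_j : j-i\equiv k+1,\ldots,2k-1\pmod{2k}\}$. Assume for contradiction $p\in U\cap C$ and $q\in L\cap C$, and write $p=(\rho_p\cos\alpha,\rho_p\sin\alpha)$, $q=(\rho_q\cos\beta,-\rho_q\sin\beta)$ with $\alpha,\beta\in\{\pi/k,\ldots,(k-1)\pi/k\}$ and $\rho_p,\rho_q\in\{r,R\}$. A short computation yields the $x$-coordinate
\[
x_{pq}\;=\;\frac{\rho_p\rho_q\sin(\alpha+\beta)}{\rho_p\sin\alpha+\rho_q\sin\beta}
\]
of the intersection of segment $pq$ with the $x$-axis. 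This quantity is monotone increasing in each of $\rho_p,\rho_q$, and with $\rho_p=\rho_q=R$ it equals $R\cos((\alpha+\beta)/2)/\cos((\alpha-\beta)/2)$, which is maximized over the admissible angles at $\alpha=\beta=\pi/k$. Hence $x_{pq}\le R\cos(\pi/k)<r$, so $r'_i=(r,0)$ lies on the segment from $r_i$ to $(x_{pq},0)\in\overline{pq}$, placing $r'_i\in\conv\{r_i,p,q\}\subseteq\conv(C)$, a contradiction.

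Finally I would verify that the two candidate sets are convex and inclusion-wise maximal. For $C_+:=\{r_i\}\cup U$, every point has $y\ge 0$ with equality only at $r_i$, so $\conv(C_+)\cap\{y=0\}=\{r_i\}$ and $r'_i=(r,0)\notin\conv(C_+)$; the symmetric set $C_-:=\{r_i\}\cup L$ works by reflection. Each is maximal because every point of $D_{2k}$ outside $C_+$ (resp.\ $C_-$) belongs to $\{r'_i,r_{i+k},r'_{i+k}\}\cup L$ (resp.\ $\cup U$), all forbidden by the previous steps. These are precisely the two sets described in the statement. The main obstacle is the third step: monotonicity in the radii combined with the angular optimization of $x_{pq}$ is where the assumption $r>R\cos(\pi/k)$ enters, and correctly handling every pair $(p,q)\in U\times L$ (not only the extremal pair $p=r_{i+1}$, $q=r_{i-1}$) is the only nontrivial calculation.
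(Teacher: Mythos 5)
Your proof is correct and follows essentially the same route as the paper's: both arguments hinge on the fact that $C$ cannot contain points on both sides of the line $\ell(r_i,r'_i)$ (else $r'_i\in\conv(C)$), which bipartitions $D_{2k}\setminus\{r_i,r'_i,r_{i+k},r'_{i+k}\}$ and forces $C$ to be one class together with $r_i$. The only difference is that you verify the straddling claim by an explicit coordinate computation using the hypothesis $r>R\cos(\pi/k)$, whereas the paper simply asserts it; your version is a legitimate (and more complete) instantiation of the same argument.
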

The two possible convex sets (containing $r_i$ but not $r'_i$) are shown in Figure~\ref{fig:convexgeometry} left.

\begin{proof}
  Suppose $C$ is an inclusion wise maximal convex set with $r_i\in C$ and $r'_i\not\in C$. Then $C_i$ does not contain a point $r_j$ or $r'_j$ with $j\in \{j+1,\dots,i+k\}$ and a point $r_k$ or $r'_k$ with $k\in \{i-1,\dots,i-k\}$.
  This gives a bipartition of the points, not two points of the same class are in $C$.
  On the other hand, the union of one class and $\{r_i\}$ is a convex set, namely one of the sets given above.
\end{proof}

From this proof we obtain the following observation.

\begin{observation}\label{obs:sides}
  In each representation of $(D_{2k},\CCC_{2k})$ each side of the line $\ell(r_i,r'_i)$ contains exactly $k-1$ pairs of points $r_j,r'_j$.
  The points on one side form one of the maximal convex set containing $r_i$ and not $r_i'$.
\end{observation}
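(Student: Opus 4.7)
The plan is to combine Proposition~\ref{prop:maxConvex} with the rigid collinearity structure of $D_{2k}$. In any realization the only collinear quadruples of $D_{2k}$ are of the form $r_j, r'_j, r_{j+k}, r'_{j+k}$, so the line $\ell(r_i,r'_i)$ contains exactly the four points $r_i, r'_i, r_{i+k}, r'_{i+k}$; the remaining $4k-4$ points of $D_{2k}$ are split between the two open half-planes. My goal is to show that this split is exactly $2(k-1)$ per side and that each side together with $r_i$ recovers one of the maximal convex sets exhibited in Proposition~\ref{prop:maxConvex}.

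The first key step is to argue that for every $j\notin\{i,i+k\}$ the two points $r_j$ and $r'_j$ lie on a common open side of $\ell(r_i,r'_i)$. The chirotope values $\chi(r_i,r'_i,r_j)$ and $\chi(r_i,r'_i,r'_j)$ are both nonzero (no additional collinearity) and coincide in the canonical realization of $D_{2k}$; since the order type of $D_{2k}$ is fixed across realizations, they must coincide in every realization, so $r_j$ and $r'_j$ always sit on a common side.

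Writing $T^+$ and $T^-$ for the sets of points strictly in each open half-plane, I would then verify that $T^+\cup\{r_i\}\in\CCC_{2k}$, and analogously for $T^-\cup\{r_i\}$. Indeed, $T^+\cup\{r_i\}$ lies in a closed half-plane whose only point of $D_{2k}$ on $\ell(r_i,r'_i)$ is $r_i$, so $\conv(T^+\cup\{r_i\})$ meets $\ell(r_i,r'_i)$ only at $r_i$ and in particular contains none of $r'_i$, $r_{i+k}$, $r'_{i+k}$.

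To finish, I would invoke the bipartition argument from the proof of Proposition~\ref{prop:maxConvex}: any convex set containing $r_i$ but not $r'_i$ is a subset of one of the two maximal sets displayed there, since otherwise $r'_i$ would lie in its convex hull. This gives $|T^\pm|\le 2(k-1)$, and combined with $|T^+|+|T^-|=4k-4$ forces both inequalities to be equalities. Hence $T^\pm\cup\{r_i\}$ equals one of the maximal convex sets of Proposition~\ref{prop:maxConvex} and each side contains exactly $k-1$ pairs $r_j, r'_j$. I expect the common-side step to be the main obstacle, since it is what welds the geometric half-plane split to the combinatorial bipartition furnished by Proposition~\ref{prop:maxConvex}; without it, the counting argument collapses.
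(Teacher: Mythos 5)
Your overall strategy --- the two open sides of $\ell(r_i,r'_i)$ each yield, together with $r_i$, a convex set containing $r_i$ but not $r'_i$, which by Proposition~\ref{prop:maxConvex} is contained in one of the two maximal such sets of size $2k-1$, and the count $|T^+|+|T^-|=4k-4$ then forces both containments to be equalities --- is exactly how the paper intends the observation to follow (it gives no separate proof, only ``From this proof we obtain the following observation''), and your steps 1, 3 and 4 are sound. The genuine problem is your second step and, more broadly, every appeal to ``the order type of $D_{2k}$ is fixed across realizations.'' A representation here is a representation of the convex geometry $(D_{2k},\CCC_{2k})$, not of the order type; Observation~\ref{obs:AOtoCG} only goes from order types to convex geometries, and the whole point of this section (and of the cited Adaricheva--White result) is that recovering the order type from the convex geometry is a theorem to be proved, not a hypothesis you may assume. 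So you cannot conclude that $\chi(r_i,r'_i,r_j)=\chi(r_i,r'_i,r'_j)$ in every representation by invoking the canonical chirotope.

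Fortunately the argument survives without that step. The collinearity structure you need in step 1 does transfer, but via the closure operator rather than the chirotope: three points are collinear iff one lies in the convex hull of the other two, which is recorded in $\CCC_{2k}$; since every pair among $\{r_i,r'_i,r_j\}$ with $j\neq i,i+k$ is a closed set of $\CCC_{2k}$, no such $r_j$ can lie on $\ell(r_i,r'_i)$ in any representation, while relations such as $r'_i\in\conv(\{r_i,r_{i+k}\})$ force the quadruple $r_i,r'_i,r_{i+k},r'_{i+k}$ onto that line. And the ``common side'' claim need not be established in advance at all: once the counting forces $T^{\pm}\cup\{r_i\}$ to coincide with the maximal convex sets of Proposition~\ref{prop:maxConvex}, which consist of whole pairs $r_j,r'_j$, the fact that each pair sits on a common side is a corollary of the conclusion rather than a prerequisite. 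So the step you single out as ``the main obstacle'' is both incorrectly justified and superfluous; with it removed and the collinearity claim rederived from $\CCC_{2k}$, your proof is correct and agrees with the paper's.
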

  
\begin{lemma}\label{lem:doublering}
  In each representation of $(D_{2k},\CCC_{2k})$ the order of the points on the convex
  hull is (up to a reflection) $(r_1,\ldots,r_{2k})$.
\end{lemma}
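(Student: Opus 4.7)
The plan is to first pin down the hull vertices, then inductively determine the cyclic order on the hull using Observation~\ref{obs:sides}.

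In any realization the hull vertices coincide with the extreme points of $(D_{2k},\CCC_{2k})$, namely the set $\{r_1,\ldots,r_{2k}\}$: by construction $V\setminus\{r_i\}\in\CCC_{2k}$ but $V\setminus\{r'_i\}\notin\CCC_{2k}$ (since $r'_i$ lies in the convex hull of the outer ring in the defining realization, and extremality is an intrinsic property of the convex geometry). Combining Observation~\ref{obs:sides} with Proposition~\ref{prop:maxConvex}, the line $\ell(r_i,r'_i)$ in every realization carries exactly the four points $r_i, r'_i, r_{i+k}, r'_{i+k}$: the $k-1$ pairs on each side account for $2(k-1)$ pairs, leaving only two pairs on the line, which by index count must be $(r_i,r'_i)$ and $(r_{i+k},r'_{i+k})$. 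Therefore the hull chord $r_ir_{i+k}$ partitions the remaining $2k-2$ hull vertices into $\{r_{i+1},\ldots,r_{i+k-1}\}$ on one side and $\{r_{i+k+1},\ldots,r_{i-1}\}$ on the other.

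To complete the argument I induct on the cyclic position. Fix $r_1$ at position $0$ (by rotation); by the constraint for $i=1$, $r_{k+1}$ sits at the antipodal position $k$, and up to the ambient reflection of the realization the sets $\{r_2,\ldots,r_k\}$ and $\{r_{k+2},\ldots,r_{2k}\}$ occupy positions $\{1,\ldots,k-1\}$ and $\{k+1,\ldots,2k-1\}$ respectively. Let $\tau(j)$ denote the index at position $j$, and assume by induction that $\tau(j)=j+1$ for all $j\le m$. Set $a=\tau(m+1)$, so $a\in\{m+2,\ldots,k\}$. Applying the constraint to $i=a$, one of the two arcs from position $m+1$ to position $m+1+k$ must equal $\{r_{a+1},\ldots,r_{a+k-1}\}$ as a set; the arc passing through position $0$ contains $r_1$ and is excluded, so the other arc, at positions $\{m+2,\ldots,m+k\}$, is the one. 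This arc has exactly $m$ slots in $\{k+1,\ldots,m+k\}$, which carry indices from $\{k+2,\ldots,2k\}$, while $\{r_{a+1},\ldots,r_{a+k-1}\}$ has exactly $a-2$ indices exceeding $k+1$. Matching cardinalities gives $a-2\le m$; combined with $a\ge m+2$, this forces $a=m+2$. Iterating through $m=0,1,\ldots,k-2$ pins down the full cyclic order as $(r_1,r_2,\ldots,r_{2k})$, as desired.

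The main delicate point is the identification in the first paragraph: recognising that Observation~\ref{obs:sides} and Proposition~\ref{prop:maxConvex} together encode the four-point collinearity $r_i, r'_i, r_{i+k}, r'_{i+k}$ abstractly, so that the line $\ell(r_i,r'_i)$ of the observation must coincide with the hull chord $r_ir_{i+k}$ in every realization. Once this identification is in place, the remaining induction is routine combinatorial bookkeeping.
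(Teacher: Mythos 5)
Your proof is correct, but it takes a genuinely different route from the paper's. The paper argues by contradiction in a few lines: if the cyclic order were different, some consecutive pair $r_i,r_{i+1}$ would be non-adjacent on the hull, so some $r_x$ would lie strictly between them on the shorter arc, and then $\ell(r_x,r'_x)$ would separate $r_i$ from $r_{i+1}$ --- impossible, since by Proposition~\ref{prop:maxConvex} two points with consecutive indices always lie in the \emph{same} maximal convex set avoiding $r'_x$. You instead go forward: you first extract from Observation~\ref{obs:sides} the four-point collinearity of $r_i,r'_i,r_{i+k},r'_{i+k}$ in \emph{every} realization (a correct count: $2(k-1)$ pairs off the line out of $2k$ leaves exactly the pairs $i$ and $i+k$ on it), convert each such line into an ``antipodal chord bisecting the hull'' constraint, and then run a positional induction to pin down the whole cyclic order. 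Your version is longer but buys two things the paper glosses over: an explicit justification that the hull vertices are exactly $\{r_1,\ldots,r_{2k}\}$ (via intrinsic extreme points), and an explicit handling of the reflection ambiguity. Two cosmetic points: your cardinality match actually gives $a-2=m$, not merely $a-2\le m$ (either suffices); and the induction as written only fixes positions $0$ through $k$ --- you should add the one-line remark that antipodality ($\tau(j+k)=\tau(j)+k$) then determines the remaining positions $k+1,\ldots,2k-1$.
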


\begin{proof}
  Suppose there is a representation of the convex geometry induced by $D_{2k}$ with different order of points
  on the convex hull.
  Then there are $r_i$ and $r_{i+1}$ that are not adjacent on the convex hull.
  Thus there is a point $r_x$ that lies between $r_i$ and $r_{i+1}$ on the shorter path $p$ along the convex hull.
  Now $r_i$ and $r_{i+1}$ lie on different sides of the line $\ell(r_x,r'_x)$,
  since each side contains at most $k-1$ pairs of $r_l$ and $r'_l$ since $|p|<k$ by Observation~\ref{obs:sides}.
  The points on each side of $\ell(r_x,r'_x)$ together with $r_x$ form  maximal convex sets containing $r_x$ and not $r_x'$.
  According to Proposition~\ref{prop:maxConvex} there is no point $r_x$, such that $r_i$ and $r_{i+1}$
  are contained in the different maximal convex sets that do not contain $r'_x$, a contradiction. 
\end{proof}

\subsection{The reduction}

To prove Theorem~\ref{thm:convexG} we first show the 2-dimensional version.
\begin{theorem}\label{thm:convexG2d}
    The realizability of an abstract convex geometry in $\R^2$
    is \ER-complete.
\end{theorem}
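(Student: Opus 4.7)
The plan is to reduce from the realizability of simple allowable sequences, which is \ER-complete by Theorem~\ref{thm:simple}. Given a simple allowable sequence $A$ on $[n]$ with $m=\binom{n}{2}$ switches $s_1,\dots,s_m$ enumerated in cyclic order, I would construct an abstract order type $O_A$ consisting of a double ring $D_{2m}$ with boundary points $r_1,\dots,r_{2m}$ together with $n$ interior points $q_1,\dots,q_n$ representing the elements of $A$. The orientations among the $r_i$ are those of $D_{2m}$; for each pair $\{q_a,q_b\}$, the orientations of triples $(q_a,q_b,r_j)$ are prescribed so that the supporting line of $q_a,q_b$ exits the convex hull between $r_i$ and $r_{i+1}$, and symmetrically between $r_{i+m}$ and $r_{i+m+1}$, where $s_i$ is the switch reversing $a,b$ in $A$ (see Figure~\ref{fig:convexgeometry}, right). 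That $O_A$ is a genuine order type is witnessed by shrinking any realization of $A$ and surrounding it by $D_{2m}$ with slots aligned to the directions of pair-lines. By Observation~\ref{obs:AOtoCG}, $O_A$ induces a convex geometry $(V,\CCC)$; the whole construction has polynomial size and is computable in polynomial time.

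For the forward direction, any realization of $A$ yields a realization of $O_A$ by the construction above and therefore realizes $(V,\CCC)$.

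For the backward direction, assume $(V,\CCC)$ is realized by a point set $P'$. By Lemma~\ref{lem:doublering}, the points labelled $r_1,\dots,r_{2m}$ lie on the convex hull of $P'$ in the prescribed cyclic order (up to reflection), with the $q_j$ in the interior. For each pair $(q_a,q_b)$, the line $\ell_{ab}$ through them partitions the remaining points into two halves, and since $D_{2m}$ is in convex position, the boundary points on each side form a consecutive arc. This partition can be read off from $\CCC$ alone: test, over all consecutive arcs $R^+\subseteq\{r_1,\dots,r_{2m}\}$ and all subsets $Q^+$ of the remaining $q_j$'s, whether both $\{q_a,q_b\}\cup R^+\cup Q^+$ and its complement $\{q_a,q_b\}\cup R^-\cup Q^-$ are closed in $\CCC$; the unique such pair recovers the split by $\ell_{ab}$. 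The cyclic sequence of slot positions obtained as the pair varies is a rotation of $s_1,\dots,s_m$, so rotating an oriented line across $P'$ and recording the permutation at each slot yields a realization of $A$.

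The main obstacle is showing that $(V,\CCC)$, which forgets orientations and records only which subsets are convex-closed, still uniquely determines the slot through which each interior pair-line passes. This rests on the rigidity of the outer double ring provided by Lemma~\ref{lem:doublering}, together with the fact that any chord of a convex polygon cuts its boundary into two consecutive arcs, so the partition of all points of $P'$ into two closed sets sharing $\{q_a,q_b\}$ and covering every other point is unique and hence directly manifest in $\CCC$.
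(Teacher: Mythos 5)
Your overall strategy matches the paper's: reduce from allowable-sequence realizability, surround the interior points by a double ring whose convex-hull order is rigid by Lemma~\ref{lem:doublering}, and use the ring to certify the cyclic order of the directions of the lines spanned by pairs of interior points. The forward direction and the polynomial-size claim are fine. The gap is in the backward direction, at the one step that carries all the weight: showing that the convex geometry alone determines which ``slot'' of the ring each pair-line passes through. You assert that the pair of complementary closed sets $C_1=\{q_a,q_b\}\cup R^+\cup Q^+$ and $C_2=\{q_a,q_b\}\cup R^-\cup Q^-$ is unique and therefore recovers the split by $\ell_{ab}$, but this is neither proved nor true without substantial extra argument: already for four points $q_a,q_b,x,y$ in convex position in the cyclic order $q_a,q_b,y,x$, the sets $\{q_a,q_b,x\}$ and $\{q_a,q_b,y\}$ are complementary closed sets meeting exactly in $\{q_a,q_b\}$, even though $x$ and $y$ lie on the same side of $\ell_{ab}$. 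Whether requiring the $r_j$-part to be a consecutive arc restores uniqueness would need a careful analysis of when $\conv(R^+\cup\{q_a,q_b\})$ and $\conv(R^-\cup\{q_a,q_b\})$ can avoid each other's points (note also that your test silently ignores the inner ring points $r'_j$, which lie in the hulls of long arcs of the outer ring). As written, the rigidity claim that makes the reduction work is unsupported.

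The paper avoids this issue by a different encoding of the switch directions: the points reversed in the $i$-th switch are placed \emph{on} the line through the opposite ring pair $r_i,r_{i+k}$, i.e., inside $\conv(\{r_i,r_{i+k}\})$. This collinearity is recorded directly by the convex geometry (those points belong to the closure of $\{r_i,r_{i+k}\}$), so in every realization each switch line is forced to be a diameter of the rigid double ring; since all these diameters cross in the interior of the hull, their directions occur in the cyclic order of the hull vertices, which is exactly the order of switches of $A$. Replacing your slot-recovery argument by this collinearity device turns your proof into the paper's; without it, the backward direction is incomplete.
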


\begin{proof}
We will reduce from realizability of allowable sequences.
Therefore, we define an order type $O_A$ from an allowable sequence $A$ in the following way.
Let $P_A$ be the order type that is induced by the allowable sequence $A$.
To define $O_A$ we add a double ring $D_{2k}$ to $P_A$, where $k$ is the number of switches in the allowable
sequence (e.g., $k=\binom{n}{2}$ if $A$ is simple).
This is done such that the double ring forms the two outer layers of the point set.
Furthermore, the line spanned by opposite pairs $p_i,p_{i+k}$ of $D$ contains the points of $P_A$ that are reversed in the $i$-th switch of $A$.
To complete the definition of $O_A$ we have to define the orientation of triples containing two points of $P$ and one point of $D$ and vice versa.
A triple $(p_h,p_i,r_j)$ with $j\leq k$ is oriented clockwise if $p_h$ appears before $p_j$ in the
$j$-th permutation of $A$, and counterclockwise otherwise.
The orientation of $(p_h,p_i,r_{j+k})$ is the reverse of $(p_h,p_i,r_j)$.
A triple $(r_h,r_i,p_j)$ is oriented clockwise if $h-i<k$ (cyclically) and counterclockwise if $h-i>k$.
Each triple $(r_h,r_{h+k},p_i)$ (with $h<k$) is oriented clockwise if $p_i$ lies before the substring that is reversed in the $h$-th switch of $A$ (and counterclockwise otherwise).
The triple including $r'_h$ has the same orientation as the triple where $r'_h$ is replaced by $r_h$. If $r_h$ and $r_h'$
are included in a triple then it has the same orientation as the triple where $r'_h$ is replaced by $r_{h+k}$.

By Lemma~\ref{lem:doublering} we know that each realization of the abstract geometry $C_A$ induced by $A$
contains a copy of the double ring $D$ with the same order of vertices on the convex hull.
The line spanned by $r_k$ and $r_{k+n}$ contains the points of $P$ that are reversed in switch $k$.
All those lines spanned by opposite pairs of the double ring already intersect in the interior of the convex hull,
thus the lines intersect the line at infinity in the same order as the convex hull of the double ring.
This shows that the order of switches is the same as in $A$, hence we obtain a realization of $A$ by considering
the sub-realization of $P_A$ in $R_{O_A}$ (a realization of $O_A$). On the other hand, $O_A$ is realizable if $C_A$ is realizable as we will sketch in Lemma~\ref{lem:realOT}.

Since all proofs in this section are constructive and can be implemented in polynomial time,
we have reduced allowable sequence realizability to convex geometry realizability which concludes the proof of Theorem~\ref{thm:convexG}.
\end{proof}

\begin{lemma}\label{lem:realOT}
  The order type $O_A$ is realizable if $A$ is realizable, where $O_A$ and $A$ are defined as in the proof of Theorem~\ref{thm:convexG2d}.
\end{lemma}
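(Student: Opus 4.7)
The plan is to construct a realization of $O_A$ by augmenting any realization of $A$ with a double ring placed ``at infinity'' in the directions of the switch lines. Fix a realization $P\subset\R^2$ of $A$. For each switch $i\in[k]$ let $L_i$ be the line supporting the substring of points of $P$ reversed at the $i$-th switch, and let $\mathbf{u}_i$ be a unit direction along $L_i$ chosen so that the $2k$ unit vectors $\mathbf{u}_1,\dots,\mathbf{u}_k,-\mathbf{u}_1,\dots,-\mathbf{u}_k$ appear in counterclockwise cyclic order around the unit circle in the same order as $r_1,\dots,r_{2k}$ in the double ring. This choice is possible because the switches of $A$ are cyclically ordered by slope on the line at infinity, and $\mathbf{u}_i$ can additionally be oriented so that the projection along $\mathbf{u}_i^\perp$ agrees with the $i$-th permutation of $A$.

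Next, I would pick a radius $R$ with $P\subset B(0,R)$, an even larger $R'\gg R$, and a small generic $\varepsilon>0$, and place $r_i$ on $L_i$ at distance $R'$ from the origin in direction $\mathbf{u}_i$, $r_{i+k}$ at distance $R'$ in direction $-\mathbf{u}_i$, $r'_i$ at distance $R'-\varepsilon$ in direction $\mathbf{u}_i$, and $r'_{i+k}$ at distance $R'-\varepsilon$ in direction $-\mathbf{u}_i$. By construction $r_i,r'_i,r_{i+k},r'_{i+k}$ are collinear on $L_i$, and $L_i$ contains exactly the prescribed points of $P$; for generic $\varepsilon$ these are the only new collinearities, and (as noted after the definition of $D_{2k}$) this perturbed non-concurrent line arrangement still realizes the abstract double ring with the required cyclic order on the convex hull.

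It remains to verify the prescribed orientations of mixed triples. For $(p_h,p_i,r_j)$ with $j\le k$, as $R'\to\infty$ the sign of $\det(p_i-p_h,\,r_j-p_h)$ is dominated by $R'\det(p_i-p_h,\mathbf{u}_j)$, which by our choice of $\mathbf{u}_j$ matches the order of $p_h$ and $p_i$ in the $j$-th permutation of $A$; the rule for $r_{j+k}$ follows by sign reversal. For $(r_h,r_i,p_j)$ with $p_j$ inside $B(0,R)$, the origin (and hence $p_j$, up to lower-order terms) lies on the prescribed side of the chord $r_hr_i$ because $r_1,\dots,r_{2k}$ sit on a convex polygon in counterclockwise order, so the orientation depends only on the cyclic distance between the indices in the expected way. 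For $(r_h,r_{h+k},p_i)$, the orientation is determined by which side of $L_h$ the point $p_i$ lies on, matching the ``before/after the reversed substring'' distinction in the $h$-th permutation. The substitution rules for $r'_h$ hold because $r'_h$ lies on the same ray from the origin as $r_h$ (so replacing $r'_h$ by $r_h$ preserves orientations of any triple not already containing $r_h$), and because the directed pairs $(r_h,r'_h)$ and $(r_h,r_{h+k})$ both point in direction $-\mathbf{u}_h$ along $L_h$ (so triples containing both $r_h$ and $r'_h$ agree with those where $r'_h$ is replaced by $r_{h+k}$). The main obstacle is purely uniform bookkeeping: $R'$ must be chosen large enough and $\varepsilon$ small enough so that all orientation comparisons become exact simultaneously across all $\Theta(k)$ switches, which is achievable since only finitely many strict inequalities are involved.
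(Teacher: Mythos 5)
Your construction is essentially identical to the paper's: the paper realizes $O_A$ by placing the double-ring points at the intersections of the lines spanned by $P$ with two very large concentric circles of almost equal radius that contain the realization near their center, which is exactly your placement at distances $R'$ and $R'-\varepsilon$ along the switch lines $L_i$. The paper's proof is a two-sentence sketch that leaves the orientation checks implicit, so your detailed verification of the mixed triples and the $r'_h$ substitution rules simply fills in what the paper omits.
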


\begin{proof}
  Consider a realization $R_A$ of $A$. We can realize $O_A$ by placing the points of the double ring $D$ on the intersection
  points of the lines spanned by $P$ with two ``very large'' circles of almost the same radius that contain the realization
  $R_A$ ``close'' to their center point.
\end{proof}

Thus it remains to show that any realization of the convex geometry $C_A$ induced by $O_A$ gives a realization of $A$. Increase the dimension to obtain the following proof. 

\begin{proof}[Proof of Theorem~\ref{thm:convexG}]
  To prove Theorem~\ref{thm:convexG} it is sufficient to show that in each representation of the abstract
  convex geometry $C_A$ in $\R^d$ all points lie in one plane.
  Therefore, notice that there is a triangle of the double ring that contains all points of $P_A$.
  This implies that all points of $P_A$ are coplanar in each representation of $C_A$ in $\R^d$.
  All points of the double ring $D$ lie on a line spanned by points of $P$.
  Consequently, all points of $C_A$ are coplanar and $C_A$ is realizable in $\R^2$ if and only if it is realizable in $\R^d$.
\end{proof}

\section{Polygon visibility graphs with holes}\label{sec:pv}

In this section we show that the recognition of visibility graphs of polygons with holes is \ER-complete.
First we show that the problem is \ER-complete if we know the cycles of the
graph that bound the holes. In a second step we show that this condition can be
dropped.

The general idea is similar to the reduction in the last section. We use the
outer face cycle of the polygon to fix the allowable sequence, the holes
represent the points realizing the allowable sequence.

In this section we denote the open straight-line segment between two vertices of the polygon as \emph{sightline}. We say a sightline is blocked if the line segments intersects the polygon.

\subsection{Reduction with given boundary cycles}

\begin{theorem}
  The recognition problem for visibility graphs of polygons with holes with given boundary cycles is \ER-complete.
\end{theorem}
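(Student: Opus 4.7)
The plan is to reduce from the realizability of simple allowable sequences (\ER-hard by Theorem~\ref{thm:simple}). Given a simple allowable sequence $A$ on $[n]$ with $k=\binom{n}{2}$ switches $s_1,\ldots,s_k$ appearing in a fixed cyclic order, I will construct in polynomial time a graph $G_A$ together with a designation of one cycle $C\subseteq G_A$ as the outer boundary and cycles $H_1,\ldots,H_n\subseteq G_A$ as the hole boundaries, so that the resulting instance of visibility-graph recognition is positive if and only if $A$ is realizable. The cycle $C$ will play the role of the double ring of Section~\ref{sec:cg}: it contains $2k$ distinguished \emph{slot} vertices $u_1,u_1',\ldots,u_k,u_k'$ arranged cyclically (interleaved with enough filler vertices to keep the outer polygon simple), with the intention that $u_s$ and $u_s'$ are roughly antipodal and that the $k$ diagonals $u_su_s'$ realize the $k$ distinct switch-slopes of $A$ in the prescribed cyclic order. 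Each hole $H_i$ will be a small triangle playing the role of the point $p_i$.

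The graph $G_A$ will encode the following visibility pattern. All pairs of vertices lying on different holes are edges, which in any realization forces the holes to behave like a generic point set, with no hole occluding another. For each switch $s$ reversing the pair $\{i,j\}$, the diagonal pair $(u_s,u_s')$ is a non-edge, which in a realization must be witnessed by the holes $H_i$ and $H_j$ both lying on the segment $u_su_s'$; the pattern of slot--hole edges around $u_s$, $u_s'$ and the vertices of $H_i$, $H_j$ is designed so that precisely this pair of holes is identified as the blocker. All other slot--slot, slot--hole, and slot--filler pairs are edges. In the forward direction I start from a realization $p_1,\ldots,p_n$ of $A$, replace each $p_i$ by a sufficiently small triangular hole, and enclose the configuration in a large nearly-convex outer polygon whose slot vertices are placed far enough out that each diagonal $u_su_s'$ passes through precisely the correct pair of holes and no unintended blocking occurs.

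The backward direction is where the reduction earns its keep: given a polygon with the prescribed boundary cycles and visibility graph $G_A$, I extract candidate coordinates $\tilde p_i$ from representative vertices of the holes. The prescribed combinatorics of $C$ will force, by an argument analogous to Lemma~\ref{lem:doublering}, that the slopes of the $k$ diagonals $u_su_s'$ appear in the cyclic order dictated by $A$. For each switch $s$, the non-edge $(u_s,u_s')$ combined with the all-pairs hole--hole visibility constraints forces exactly the two holes assigned to $s$ to lie on the segment $u_su_s'$, so that $\tilde p_i$ and $\tilde p_j$ are collinear in the correct direction; consequently the extracted points realize $A$. The main obstacle I expect is the tightness of this forcing: the graph must simultaneously force the slot pairs of $C$ to realize the correct cyclic order of slopes, force the correct pair of holes (and no third hole) to be the blocker of each diagonal, and yet not over-constrain the geometry so that the forward construction remains faithful. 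The first requirement is handled by the prescribed combinatorial identity of each slot vertex on the outer face; the second will require a careful design of the slot--hole visibility pattern around each diagonal; and the third will follow from making the holes arbitrarily small relative to the distance to the slots.
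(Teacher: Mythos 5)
Your high-level plan coincides with the paper's: reduce from simple allowable sequence realizability (Theorem~\ref{thm:simple}), use the outer boundary cycle to mark the switch directions, and use small triangular holes as the points. However, the forcing mechanism you propose for the backward direction has two concrete problems, and the part you defer (``a careful design of the slot--hole visibility pattern'') is precisely where all the work lies. First, requiring \emph{all} pairs of vertices on different holes to be edges is not realizable in the forward direction: a vertex $v$ of a small triangle $T$ is occluded by $T$ itself from every faraway point whose direction from $T$ lies in the cone spanned by the two edges of $T$ at $v$, and these three ``bad'' cones cover directions of total measure $\pi$. Since the allowable sequence prescribes essentially arbitrary directions between pairs of holes, some pair will land in a bad cone, so some hole--hole vertex pair must be a non-edge. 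The paper builds exactly this self-occlusion into $G_A$ (the obtuse vertex of the right hole is not seen from the left hole), and that asymmetry is later reused to recover the cycle partition. Second, and more seriously, a single non-edge $(u_s,u_s')$ cannot force \emph{both} $H_i$ and $H_j$ to meet the segment $u_su_s'$: one blocker suffices for a non-edge, and nothing in the graph identifies which hole it is. Any fix requires many non-edges per switch plus a counting argument showing that the blockers cannot be reassigned; this is the content of Lemma~\ref{lem:visBlock} (each hole vertex misses the $n-1$ switch vertices of its own switches, each other hole can block at most two of the resulting $3(n-1)$ extreme sightlines, and the ``crossing'' reassignment is excluded because it would leave two holes unseparated from some switch vertex) and of Lemma~\ref{lem:diagonals} (the diagonal vertex misses $n$ alternating opposite vertices, pairwise separated by unblocked sightlines, so all $n$ holes are needed as blockers, yielding a sightline separating $U$ from $V$). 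Your proposal contains no analogue of either count.

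There is also a gap in how you read off the sequence at the end. You invoke ``an argument analogous to Lemma~\ref{lem:doublering}'' to force the slopes of the diagonals $u_su_s'$ into the prescribed cyclic order, but that lemma exploits convex position forced by the convex geometry; a polygon boundary with a prescribed cyclic vertex order need not be convex, and the cyclic positions of the $u_s$ along it do not control the slopes of the chords $u_su_s'$. The paper sidesteps this entirely: it never constrains the slopes of boundary diagonals, but instead derives for each switch a double wedge bounded by the two \emph{unblocked} diagonal sightlines $D_{UV}$ and $D_{VU}$, each of which separates $U$ from $V$; every line through one representative of $U$ and one of $V$ therefore exits the polygon inside two prescribed disjoint boundary intervals, and the disjointness of these intervals across switches is what recovers $A$. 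You would need to replace your slope-order argument by something of this local, per-switch form. In short: right skeleton, but the blocking-identification counts and the interval/double-wedge extraction argument --- the actual proof --- are missing or replaced by steps that fail as stated.
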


  We reduce the realizability of simple allowable sequence to the recognition of visibility graphs of polygons with holes.
  So, given an allowable sequence of $n$ points we construct a graph $G=(V,E)$
  along with a partition of $V=\{v_1,\dots,v_{|V|}\}$ into $n+1$ sets.
  Each of the sets gives the vertices of one boundary cycle.
  The order along the cycle respects the order of the indices of the vertices.
  The size of $G$ is polynomial in $n$.

  The general idea is the following.
  The holes reserve an area for the points of the point set.
  We use the outer face cycle of the polygon to mark the position
  of the switches of the allowable sequence, i.e., we construct a line through each pair of points and
  where this line hits a circle of very large radius we place a point.
  On both sides of each point $s$ on the circle we place two more points, which bound an interval containing $I_s$.
  The intervals are disjoint.
  
\begin{figure}[!h]
\centering
\includegraphics[width=.9\textwidth]{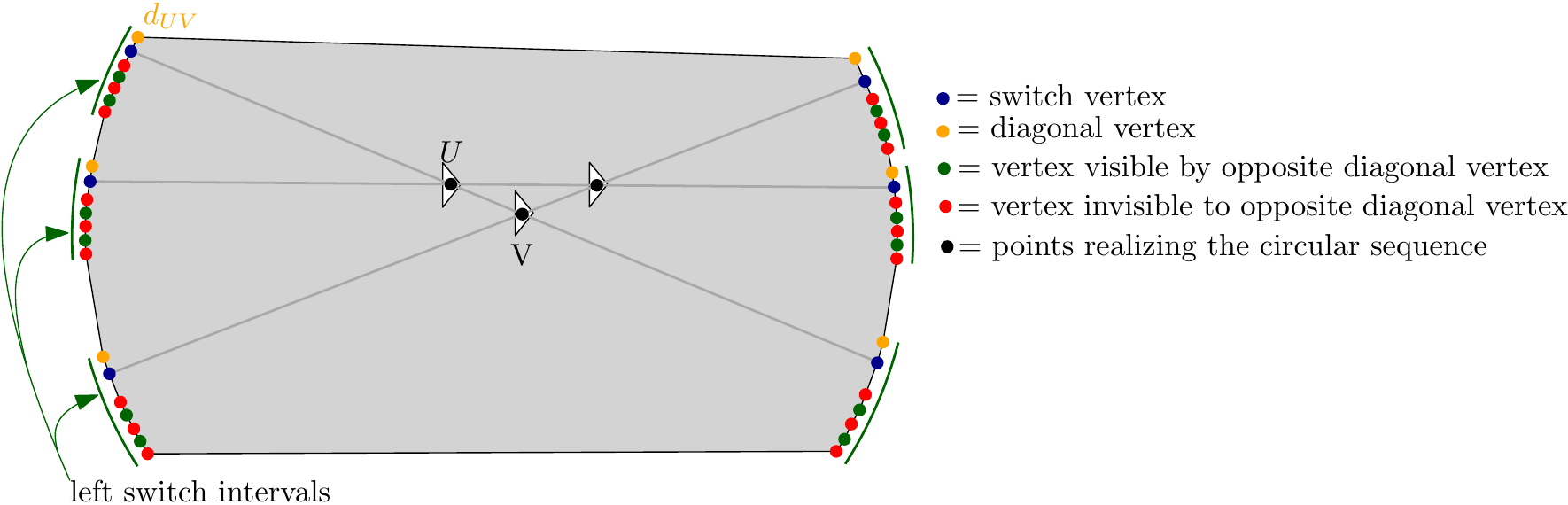}
 \caption{\label{fig:constr_PVG} An overview of the construction of a visibility graph from an allowable sequence.}
\end{figure}
  The \emph{diagonals} connecting the endpoints of two opposite intervals (intervals of the same switch $s_{a,b}$)
  separate the points $a,b$. Now we replace the points of the point set by holes of the polygon.
  The holes are still separated by the diagonals. This has the effect that we can pick one representative point
  per hole. These representatives still have the original allowable sequence.
  This is due to the fact that the line though each pair of points still lies
  in the interval of the correct switch since it is framed by the diagonals.
  Thus it remains to define a graph $G_A$ and a partition into boundary cycles from the
  allowable sequence $A$ that has the described \emph{unique representation} as polygon visibility graph.

  The outer face cycle consists of $2\binom{n}{2}(2n+1)$ vertices, $2\binom{n}{2}$ intervals (two opposite intervals per switch) of $(2n+1)$ vertices.
  The first $\binom{n}{2}$ of the intervals we call left intervals, the
  remaining ones right intervals as shown in
  Figure~\ref{fig:constr_PVG}.
  We describe the missing edges of the outer face cycle.
  The second highest point of each interval is the switch vertex and is marked blue in Figure~\ref{fig:hiding}. It has an edge with every vertex of the outer face cycle except for its opposite switch vertex.
  The highest vertex of each interval shares no edge with every second of the lower $2n-1$ opposite vertices, starting with the lowest (i.e., $n$ edges are missing).
  Those are all missing edges of the outer boundary cycle.
  
  Each of the $n$ holes consists of three vertices that form a triangle. 
  One vertex we call the \emph{obtuse vertex} since it has an obtuse angle when we construct a representation from an
  allowable sequence\footnote{This angle is not obtuse in every representation.
  It can be made non-obtuse by a linear transformation of the representation.}.  
  The allowable sequence (with our definition of left and right switches) gives a total ordering of the holes from left to right.
  The visibility among the hole vertices will be defined as follows.
  For two holes all three vertices of the left hole see the non-obtuse vertices of the right hole.
  All points of the right intervals see all vertices of the holes, except for the switch vertices which do not see any vertex of the left hole of its switch.
  Almost symmetrically, all vertices of the left intervals see all non-obtuse vertices of the holes,
  except for the switch vertices that do not see any vertex of the right hole of its switch.
  
  \begin{lemma}\label{lem:polyvisRealizable}
    Let $A$ be a realizable allowable sequence. Then $G_A$ is realizable as polygon visibility graph.  
  \end{lemma}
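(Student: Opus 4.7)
The plan is to start from a realization of the allowable sequence $A$ by a point set $P = \{p_1, \dots, p_n\}$ in general position and build a polygon with holes whose visibility graph, together with the natural partition of its vertices into boundary cycles, coincides with $G_A$. I would enclose $P$ in a circle $\mathcal{C}$ of radius $R$ much larger than the diameter of $P$. For each switch between $p_i$ and $p_j$ the spanning line $\ell_{ij}$ meets $\mathcal{C}$ in two antipodal points, which I designate as the switch vertices $s^L_{ij}$ and $s^R_{ij}$, using a global left/right convention so that exactly $\binom{n}{2}$ switches lie on each side. Around each switch vertex I place a short arc of $2n+1$ boundary vertices, making the switch vertex the second from the top and pulling the topmost vertex slightly radially inward; choosing $R$ large enough guarantees that these arcs are pairwise disjoint and that the outer cycle is a simple closed polygon. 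Finally, I replace each $p_i$ by a small triangular hole $H_i$ oriented so that its obtuse vertex lies on the right side of $p_i$ in the fixed global orientation.

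The verification that the resulting visibility graph equals $G_A$ then decomposes into four families of edges and non-edges. For outer-boundary pairs that do not involve a switch or highest vertex the sightlines stay away from the holes, which are confined to a region of bounded radius inside a circle of radius $R$, so all such edges are present. The two switch vertices of a given switch are blocked from each other by $H_i$ and $H_j$, both of which lie on $\ell_{ij}$. For hole-to-hole visibility between $H_i$ and $H_j$ with $H_i$ to the left of $H_j$, the obtuse vertex of $H_j$ lies on the far side of its own triangle and is hidden from every viewer to the left, while the two non-obtuse vertices are visible, giving the required left-sees-non-obtuse condition. For hole-to-boundary edges, every right-interval vertex sees all three vertices of every hole because it lies to the right of all of them, except for $s^R_{ij}$, where the three vertices of $H_i$ are blocked because $H_j$ sits between $s^R_{ij}$ and $H_i$ and, being strictly closer to $s^R_{ij}$, has strictly larger angular footprint and therefore completely occludes $H_i$; the symmetric argument handles left-interval vertices.

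The main obstacle is the highest-vertex blocking pattern: achieving exactly $n$ missing edges in the prescribed alternating pattern requires the $n$ blocker directions from the highest vertex (one per hole) to interleave correctly with the $2n-1$ opposite interval vertices. I would address this by placing the $2n-1$ opposite vertices so that every other one lies in a direction that, after the highest vertex has been pulled radially inward by a small $\varepsilon$, falls inside the angular footprint of some hole, while the intermediate vertices lie in the gaps between consecutive footprints; a continuity argument then allows $\varepsilon$ and the hole sizes to be tuned simultaneously for every interval. Once this is set up, the remaining visibility relations reduce to elementary facts about convex triangles and lines meeting a bounded point set inside a large circle, and Lemma~\ref{lem:polyvisRealizable} follows.
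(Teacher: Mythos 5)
Your proposal is correct and follows essentially the same route as the paper: realize $A$ by a point set, surround it with a sufficiently large circle, place the switch vertices at the intersections of the spanning lines with the circle, determine the blocked lower interval vertices by the lines through the opposite diagonal vertex and the points, interleave the unblocked ones, and finally thicken each point into a small congruent triangular hole. The paper merely streamlines your ``continuity/tuning'' step by first constructing the polygon with point-holes (so the blocked vertices sit exactly on the relevant sightlines) and only then replacing the points by translates of one small triangle, but this is the same construction.
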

 
    \begin{proof}
    	Consider a point set $P_A$ realizing $A$.
    	From this realization we construct a polygon visibility representation of $G_A$.
    	Let us first construct the outer polygon boundary.
    	Therefore we introduce a bounding stripe for each switch, i.e., an area containing the point set,
    	which is only bounded by two parallel lines parallel to the line of the switch.
    	Furthermore, we assume that the point set is strictly contained in the top half of the stripe.
    	We consider a large circle around the point set.
    	On the upper endpoint of each interval we place the diagonal point.
    	If we choose the circle large enough, then we can assume that
    	\begin{itemize}
    		\item the intervals on the circle are disjoint,
    		\item the lines through the diagonal points and a point of
    		$P_A$ intersect the opposite interval below the switch vertex.
    	\end{itemize}
    	We mark the lower points of each interval on the intersection points of the lines through the opposite diagonal vertex and the points of $P_A$.
    	Between each two of those lower points we add the remaining $n-1$ vertices.
    	Note that this construction already leads to a polygon visibility with points instead of triangles as holes.
    	We replace the holes by small triangles each of the same shape and size: a vertical segment gives the position of two of the vertices,
    	the third vertex is placed close to the midpoint of the segment, perturbed by a small $\epsilon$ to the right.
    	We pick one point $p$ of the interior of the triangle and put translates of the triangle onto the points of $P_A$, such that $p$ lies on the point of $P_A$
    	We slightly increase the size of the triangles, such that no additional visibility is blocked.
    \end{proof}
    
  In the following we use the expression \emph{a vertex sees a hole} and \emph{a vertex does not see a hole}
  instead of \emph{a vertex sees a vertex of the hole} and \emph{a vertex does not see any vertex of the hole}.

  \begin{lemma}\label{lem:visBlock}
    Let $R_A$ be an arbitrary polygon visibility representation of $G_A$. The visibility of the switch vertex $s_{UV}$ is blocked from $V$ entirely by $U$ in $R_A$. 
  \end{lemma}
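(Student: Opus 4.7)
The plan is as follows. From the construction of $G_A$, the switch vertex $s_{UV}$ has no edge to any vertex of hole $V$, has edges to (at least) the non-obtuse vertices of hole $U$ and of every other hole, and has edges to every vertex of the outer face cycle except its opposite switch vertex $\bar{s}_{UV}$. Hence in any representation $R_A$, each of the three segments from $s_{UV}$ to a vertex of $V$ is blocked by the polygon boundary, while $U$ is visible from $s_{UV}$ and so is essentially the entire outer boundary. The goal is to show that the obstruction in each of the three blocked segments is the hole $U$.

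First I would rule out the outer boundary as the blocker. Because $s_{UV}$ sees every outer face cycle vertex except $\bar{s}_{UV}$, the polygon region around $s_{UV}$ is ``open'' in almost every direction; a short geometric argument shows $V$ cannot lie in a pocket cut off from $s_{UV}$ by the outer boundary without forcing a non-visibility to some outer-cycle vertex, which the graph forbids. Next I would argue that a single hole is responsible for all three blocked sight lines. Suppose two distinct holes $H_1, H_2$ (neither equal to $V$) intervene on different segments. Since $s_{UV}$ sees several vertices of each, an angular sweep from $s_{UV}$ across the view cone toward $V$ forces an edge of $H_1$ or $H_2$ to pass between $s_{UV}$ and one of its own prescribed-visible vertices, contradicting the graph. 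Finally I would identify the unique blocker with $U$ by combining the symmetric role of $\bar{s}_{UV}$ with the visibility constraints imposed by the remaining vertices of the two opposite intervals; these constraints place $U$ and $V$ in the narrow strip between the two switch vertices, with $U$ on the $s_{UV}$-side, which together with the single-blocker step forces $U$ to obstruct every sight line from $s_{UV}$ to $V$.

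The main obstacle will be the angular-sweep step. Full visibility of a hole's vertices does not by itself forbid that hole from intersecting arbitrary segments from $s_{UV}$, so the argument must exploit the triangular shape of the holes, the fact that $s_{UV}$ sees long stretches of the outer boundary on either side of the cone toward $V$, and the combinatorial visibilities between the hole vertices themselves to eliminate joint blocking by two or more holes. I expect this to be the most delicate part of the proof, and to require a case analysis over the possible relative positions of the supposed blockers with respect to the view cone from $s_{UV}$ to $V$.
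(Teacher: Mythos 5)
There is a genuine gap, and it sits exactly where you predicted: the ``angular sweep'' step. Your argument for a single blocker is that if two holes $H_1,H_2$ both intruded on the view cone from $s_{UV}$ toward $V$, an edge of one of them would have to pass between $s_{UV}$ and one of its own prescribed-visible vertices. But that inference is false: a triangular hole all three of whose vertices are visible from $s_{UV}$ can still cast a shadow from $s_{UV}$, since vertex visibility only constrains the three segments to the hole's corners, not the region occluded by the hole's interior. So $H_1$ could occlude the segment to one vertex of $V$ and $H_2$ the segments to the other two, with every prescribed visibility from $s_{UV}$ intact; nothing local to the cone forbids this. Your third step --- identifying the unique blocker as $U$ via ``the narrow strip between the two switch vertices'' --- is also unsupported at this stage: that $U$ and $V$ sit in the strip of the switch $s_{UV}$ is part of the intended geometry that this lemma (together with Lemma~\ref{lem:diagonals}) is supposed to \emph{derive} from the combinatorial data, so invoking it here is circular.

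The paper closes both gaps with a global counting argument rather than a local one. Hole $V$ fails to be seen by $n-1$ switch vertices; for each such switch vertex, the sightlines to the two ``extreme'' vertices of $V$ (as seen from that switch vertex) cannot be self-blocked by $V$, giving $2(n-1)$ sightlines that need an external blocker. Blocked sightlines of the same extremal type at different switch vertices are separated by \emph{unblocked} sightlines (the non-switch interval vertices do see $V$), so a single hole can block at most two of the $2(n-1)$ sightlines, and those two must either cross or emanate from the same switch vertex. Since only $n-1$ other holes are available, the count is tight: each hole blocks exactly two, the crossing case is excluded by a separation argument, and the perfect matching between holes and switch vertices pins down $U$ as the sole blocker at $s_{UV}$. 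This exact-supply-equals-demand pigeonhole is the mechanism your local case analysis is missing; without it, I do not see how to complete your outline.
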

   See Figure~\ref{fig:blockHole}.
   
 \begin{proof}
 	Each vertex of the hole $V$ does not see exactly $n-1$ vertices of the outer boundary cycle.
 	There is a set of $3(n-1)$ sightlines that have to be blocked,
 	namely the sightlines from the vertices of $V$ to the switch vertices of a switch of $V$.
 	One of the three sightlines of $V$ to a single switch vertex may be blocked by $V$ itself,
 	but two of the sightlines -- the sightlines to the extreme vertices of the hole -- require another hole as a blocker.
 	Those sightlines can be partitioned into two sets (purple and red in Figure~\ref{fig:hiding} left);
 	the sightlines between the clockwise extreme and counterclockwise extreme vertices seen from a switch vertex.
 	We observe that the sightlines of neither set are crossing.
 	Which of the vertices of the hole are the extreme vertices depends on the region
 	bounded by the extended sides of the polygon (Figure~\ref{fig:hiding} right) the switch vertex lies in.
 	Two non-crossing sightlines that are not starting from the same switch vertex are separated by a sightline that is not blocked
 	(the points in the intervals around the switch vertices).
 	Thus each other hole can block at most two sightline, and those sightlines are either crossing or incident
 	to the same switch vertex. We show that the latter is the case.
 	
 	Now from each switch vertex each pair of holes is separated by a sightline as shown in the proof of Lemma~\ref{lem:diagonals}.
 	This is not the case for a switch vertex that lies on the boundary between the ends of the blocked crossing sightline, see Figure~\ref{fig:hiding} right.
 	\begin{figure}[!h]
 		\centering
 		\includegraphics[width=.9\textwidth]{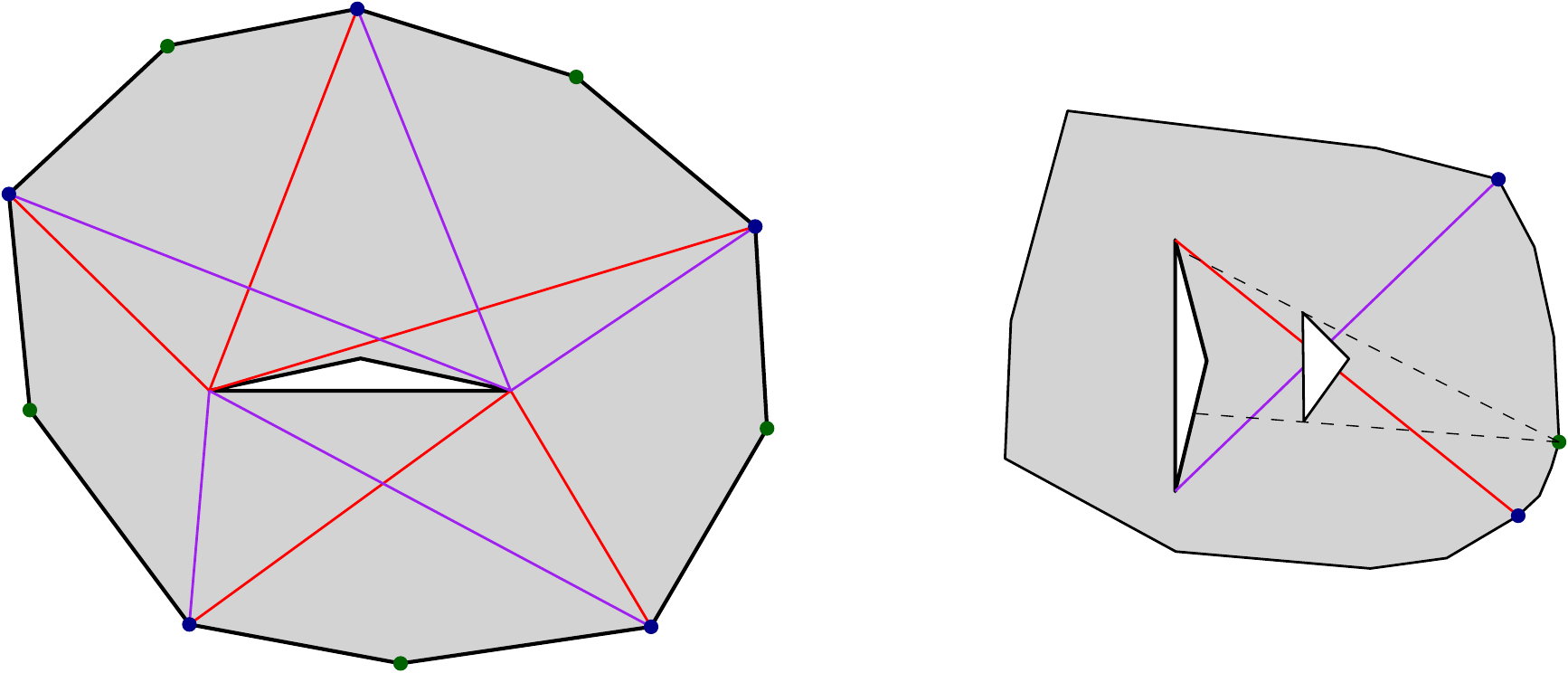}
 		\caption{\label{fig:hiding} Left: sightlines to the extreme vertices. Sightlines of the same type are non-crossing. Right: Two holes that cannot be separated by a non-crossing sightline if two crossing sightlines are blocked by the same hole. }
 	\end{figure}
 \end{proof}

\begin{figure}[!h]
\centering
 \includegraphics[width=.5\textwidth]{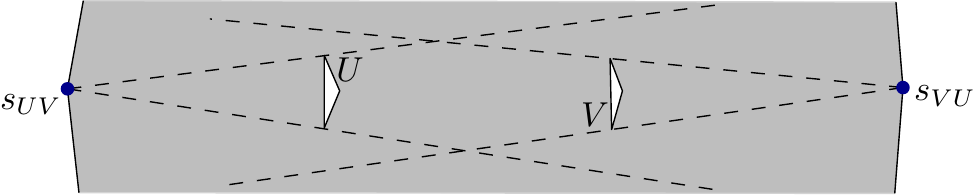}
 \caption{\label{fig:blockHole} The $s_{UV}$ is blocked from $V$ entirely by $U$.}
\end{figure}
  \begin{observation}
    The line $\ell_{UV}$ that connects $s_{UV}$ and $s_{VU}$ intersects $U$ and $V$. 
  \end{observation}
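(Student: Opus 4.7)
The plan is to upgrade Lemma~\ref{lem:visBlock} from a statement about vertices of the triangles to a statement about the whole triangles, and then apply a Brouwer fixed-point argument to pin down a common line.

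First, I would observe that since $U$ is convex (a triangle), the set of points whose sightline from $s_{UV}$ is blocked by $U$ is itself convex: if $p_1$ and $p_2$ both see their segments to $s_{UV}$ crossing the convex region $U$, then the chord of $U$ between those two crossing points witnesses the same blocking for every convex combination of $p_1, p_2$. Lemma~\ref{lem:visBlock} hands us this blocking for the three vertices of $V$, so convexity promotes it to the whole triangle $V$: every segment $[s_{UV}, v]$ with $v \in V$ meets $U$. By the symmetric application of Lemma~\ref{lem:visBlock} with the roles of $U, V$ and of $s_{UV}, s_{VU}$ swapped, every segment $[s_{VU}, u]$ with $u \in U$ meets $V$.

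Second, I would define two continuous maps. For $v \in V$, let $u(v)$ be the first point at which the segment $[s_{UV}, v]$ enters $U$, starting from $s_{UV}$; by the previous step, $u(v) \in \partial U$ is well-defined and depends continuously on $v$. Symmetrically, for $u \in U$, let $v(u)$ be the first point at which $[s_{VU}, u]$ enters $V$ starting from $s_{VU}$. Composing gives a continuous map $f \colon V \to V$, $f(v) = v(u(v))$, from the compact convex triangle $V$ to itself.

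Third, Brouwer's fixed-point theorem produces a point $v^* \in V$ with $f(v^*) = v^*$. Writing $u^* = u(v^*) \in U$, we have $u^* \in [s_{UV}, v^*]$ (by definition of $u$) and $v^* \in [s_{VU}, u^*]$ (by the fixed-point equation). Since the holes $U$ and $V$ are disjoint triangles, $u^* \neq v^*$, so the unique line through $u^*$ and $v^*$ contains both $s_{UV}$ and $s_{VU}$, which is exactly the line $\ell_{UV}$. Hence $\ell_{UV}$ meets $U$ at $u^*$ and $V$ at $v^*$, as desired.

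The main obstacle is continuity of the maps $u(\cdot)$ and $v(\cdot)$ at configurations where a segment meets a triangle tangentially; a short transversality argument (perturbing the triangles slightly and passing to the limit, or choosing ``first entry point'' which varies lower-semicontinuously and coincides with a continuous selection on the closed triangle) takes care of this. Once that is settled, the fixed-point step is immediate and yields the claimed collinearity.
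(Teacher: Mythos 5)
Your argument is correct, but note that the paper offers \emph{no} proof of this observation at all: it is stated as an immediate consequence of Lemma~\ref{lem:visBlock}, so any complete argument here is ``different from the paper.'' Your first step --- promoting the blocking of the vertices of $V$ to blocking of all of $V$ via convexity of the shadow of a convex set from a point --- is the right reading of that lemma, and the Brouwer fixed-point argument on $f = v\circ u\colon V\to V$ is sound: a fixed point yields collinear points $s_{UV}, u^*, v^*, s_{VU}$ with $u^*\in U$ and $v^*\in V$, so the line (indeed the segment) through the two switch vertices meets both holes; the continuity of the ``first entry point'' maps is the only delicate spot and your handling of it is adequate. That said, the topological machinery can be avoided entirely. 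Suppose $\ell_{UV}$ missed $U$; normalize so that $\ell_{UV}$ is the $x$-axis with $s_{UV}$ at the origin and $U$ in the open upper half-plane. The shadow of $U$ from $s_{UV}$ then also lies in the open upper half-plane, hence so does $V$. Let $h_U,h_V>0$ be the minimal $y$-coordinates of $U$ and $V$. The blocker $u=tv$ (with $0<t<1$) on the segment from $s_{UV}$ to a lowest point $v$ of $V$ gives $h_U\le t h_V<h_V$, while the blocker $v'=s_{VU}+t'(u'-s_{VU})$ on the segment from $s_{VU}\in\ell_{UV}$ to a lowest point $u'$ of $U$ gives $h_V\le t'h_U<h_U$, a contradiction; by symmetry $\ell_{UV}$ also meets $V$. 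This two-line height comparison yields the same conclusion for arbitrary disjoint compact convex holes with no fixed-point theorem and no continuity discussion, whereas your route additionally produces an explicit common transversal point in each hole, which is not needed elsewhere in the paper.
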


  \begin{lemma}\label{lem:diagonals}
    There is a sightline line $D_{UV}$ connecting $d_{UV}$ and one of the vertices in the interval below $s_{VU}$ that lies above $U$ and below $V$. 
  \end{lemma}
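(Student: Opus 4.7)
The plan is to count sightlines and exploit the alternation pattern forced by the graph definition. By construction $d_{UV}$ is the highest vertex of the $s_{UV}$ interval, and its stated non-edges are precisely the $n$ vertices at odd positions among the $2n-1$ lower vertices of the opposite interval (those below $s_{VU}$); the $n-1$ even-position vertices are its neighbors. So along the opposite interval we see an alternating pattern \emph{blocked, unblocked, blocked, \ldots, blocked} when looking from $d_{UV}$.

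First I would argue that every blocked sightline in this fan must be blocked by a hole. Since $d_{UV}$ sits on the outer face cycle and the targets lie on the opposite side of the same outer cycle, no piece of the outer boundary can cut across a straight segment between them; only the holes can obstruct. Next I would show that a single triangular hole can block at most one sightline of the fan. Indeed, a hole blocks a contiguous angular interval of rays from $d_{UV}$, and if that interval covered two consecutive blocked rays it would also cover the unblocked ray strictly between them, contradicting that the intermediate sightline is an edge of $G_A$. Because there are exactly $n$ blocked rays and exactly $n$ holes, each hole is responsible for blocking exactly one ray, and this gives a \emph{linear order} of the holes as they appear along the fan from $d_{UV}$.

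Now I would use Lemma~\ref{lem:visBlock} to pin down where $U$ and $V$ sit in that linear order. That lemma says the entire hole $V$ is hidden from the switch vertex $s_{UV}$ by the hole $U$, so $U$ lies between $s_{UV}$ and $V$ along every sightline from $s_{UV}$ to $V$. Since $d_{UV}$ is adjacent to $s_{UV}$ on the outer boundary and the opposite interval is far away on the other side of the polygon, the fan from $d_{UV}$ is a small perturbation of the fan from $s_{UV}$, so $U$ and $V$ appear consecutively in the $d_{UV}$-ordering, with $U$ preceding $V$. Consequently the unblocked ray of the fan that separates $U$'s blocked ray from $V$'s blocked ray is a sightline from $d_{UV}$ to some vertex below $s_{VU}$ (an even-position vertex), and by construction it passes on the $s_{UV}$-side of $U$ (above $U$) and on the far side of $V$ (below $V$). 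This is the desired sightline $D_{UV}$.

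The main obstacle will be the second step: ruling out that a single hole blocks two of the fan's rays, or that two different holes jointly block one ray while another ray is unblocked by coincidence. Formalizing the convexity argument (a hole blocks a convex angular range from the apex $d_{UV}$, hence its blocked rays form a contiguous block in the fan, and contiguity in an alternating pattern forces at most one) and confirming that the outer cycle itself cannot obstruct rays in the $d_{UV}$-to-opposite-interval fan are where the technical care is needed; the remaining combinatorial bookkeeping follows once those facts are in place.
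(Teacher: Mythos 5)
Your counting argument is essentially the paper's: the $n$ blocked sightlines from $d_{UV}$ into the opposite interval alternate with unblocked ones, a hole obstructs a contiguous angular range from $d_{UV}$ and hence can account for at most one blocked ray, so all $n$ holes are used, one per ray, and some unblocked ray of the fan separates $U$ from $V$. Up to that point you match the paper (and in fact spell out the ``at most one ray per hole'' step more explicitly than the paper does).

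The gap is in how you orient the separating sightline, i.e.\ in concluding that it lies \emph{above} $U$ and \emph{below} $V$ rather than the other way around. You argue that $d_{UV}$ is adjacent to $s_{UV}$ on the outer cycle, so ``the fan from $d_{UV}$ is a small perturbation of the fan from $s_{UV}$,'' and deduce that $U$ and $V$ are consecutive in the fan with $U$ first. In an arbitrary representation of $G_A$ there is no metric control: adjacency on the outer cycle does not make $d_{UV}$ geometrically close to $s_{UV}$, so ``small perturbation'' is not available, and the consecutiveness of $U$ and $V$ in the fan is neither needed nor justified (other holes may lie angularly between them). The paper instead gets the orientation from the Observation preceding the lemma: the line $\ell_{UV}$ through $s_{UV}$ and $s_{VU}$ meets both $U$ and $V$, with $U$ between $s_{UV}$ and $V$ by Lemma~\ref{lem:visBlock}; since $D_{UV}$ runs from $d_{UV}$ (above $s_{UV}$) to a vertex below $s_{VU}$, it crosses $\ell_{UV}$ from above, and because it separates $U$ from $V$ the crossing point lies between them on $\ell_{UV}$, forcing $U$ below and $V$ above $D_{UV}$. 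You should replace the perturbation step with this crossing argument (or an equivalent projective one); the rest of your proof then goes through.
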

\begin{proof}
    The point $d_{UV}$ does not see $n$ of the points below $s_{vu}$.
    The $n$ sightlines that have to be blocked are separated by sightlines that are not blocked, thus we require all $n$ holes as blockers.
    Consequently, there is one non-blocked sightline that separates $U$ and $V$.
    Since $D_{UV}$ intersects $\ell_{UV}$ from above as shown in Figure~\ref{fig:diagonal1}, we know that $U$ lies below and $V$ above $D_{UV}$.  
  \end{proof}
  
\begin{figure}[!h]
\centering
 \includegraphics[width=.5\textwidth]{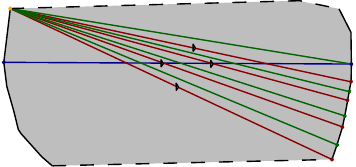}
  \caption{\label{fig:diagonal1} There is a diagonal sightline from $d_{UV}$ separating
  the holes $U$ and $V$.}
\end{figure}

\begin{lemma}
    Let $S$ be a point set that contains exactly one point of each hole.
    The allowable sequence of $S$ is $A$.
  \end{lemma}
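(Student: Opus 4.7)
The plan is to show that for every pair of holes $U$ and $V$, the line through any representatives $u \in U$ and $v \in V$ has a slope that lies strictly inside the slope interval bracketed by the two diagonal sightlines $D_{UV}$ and $D_{VU}$, which in turn corresponds to switch vertex $s_{UV}$ on the outer boundary. Combined with the fact that the cyclic order of the switch vertices $s_{UV}$ around the outer boundary was set to be the order of switches in $A$, this identifies the slope of $\ell(u,v)$ with the switch $s_{UV}$ of $A$ and so yields the allowable sequence of $S$.

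First I would apply Lemma~\ref{lem:visBlock} to get a coarse localization of the slope. Since $U$ blocks the sightline from $v$ to $s_{UV}$, the segment $vs_{UV}$ meets $U$, so extending $\ell(u,v)$ from $v$ through $u$ reaches the outer boundary inside the interval around $s_{UV}$; the symmetric statement places the opposite extension inside the interval around $s_{VU}$. Next I would sharpen this using Lemma~\ref{lem:diagonals}: the diagonal sightline $D_{UV}$ passes above $U$ and below $V$, so $\ell(u,v)$ crosses $D_{UV}$. Together with the opposite diagonal $D_{VU}$, which crosses $\ell(u,v)$ on the other side, the slope of $\ell(u,v)$ is forced into the open interval of slopes bracketed by these two diagonals. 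Since the diagonal brackets belonging to distinct switches occupy disjoint arcs on the bounding circle (they are separated by unblocked sightlines between the intervals, as in the proof of Lemma~\ref{lem:diagonals}), each pair $(u,v)$ determines a line whose slope lies in the unique interval assigned to the switch $s_{UV}$.

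Finally, by the construction of $G_A$ the cyclic order of switch vertices $s_{UV}$ along the outer boundary matches the order of switches in $A$, so the cyclic order of slopes of the lines $\ell(u,v)$ for $u \in U$, $v \in V$ in $S$ agrees with the order of the corresponding switches in $A$; hence the allowable sequence of $S$ is exactly $A$. The main obstacle is the slope-separation step: I must rule out that $\ell(u,v)$ slips into the bracket of some other switch $s_{U'V'}$. This reduces to verifying that the diagonal sightlines together with the separating unblocked sightlines between consecutive switch intervals do partition the circle of slopes into genuinely disjoint open arcs, one per switch, which follows from the visibility pattern forced on the outer boundary by the previous two lemmas applied to every pair of holes simultaneously.
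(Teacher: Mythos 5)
Your proposal is correct and follows essentially the same route as the paper: the key step in both is that the diagonal sightlines $D_{UV}$ and $D_{VU}$ from Lemma~\ref{lem:diagonals} separate $U$ from $V$, forcing the line through any pair of representatives to end in the disjoint intervals around $s_{UV}$ and $s_{VU}$, whose cyclic order on the boundary encodes $A$. The preliminary ``coarse localization'' via Lemma~\ref{lem:visBlock} is an extra (and not strictly needed) step, but the decisive bracketing argument matches the paper's proof.
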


  \begin{proof}
    The line $\ell_{UV}$, which is spanned by the representatives of hole $U$ and $V$, ends in the intervals
    around $s_{UV}$ and $s_{VU}$ which are defined by the points $D_{UV}$ and
    $D_{VU}$ end in, see Figure~\ref{fig:pockets_diagonals} left.
    Since those intervals around the switch vertices are disjoint (by Lemma~\ref{lem:diagonals}), we know that the
    order of the endpoints on the boundary of the polygon are as given by $A$.
  \end{proof}

\begin{figure}[!h]
\centering
 \includegraphics[width=.85\textwidth]{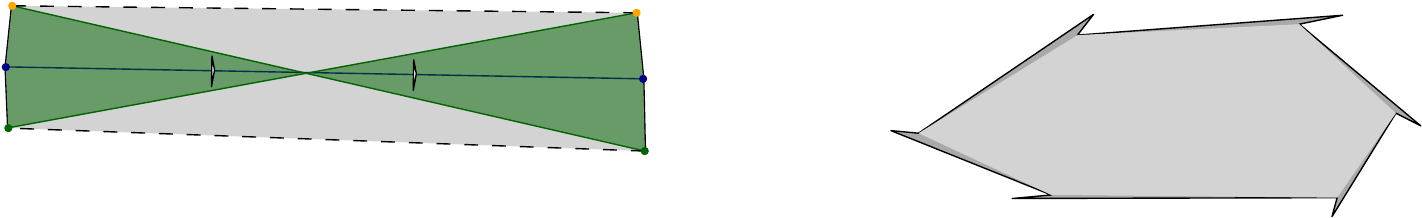}
 \caption{\label{fig:pockets_diagonals} Left: With each choice of representative point of a
  hole the spanned line lies in the green double wedge bounded by the diagonal
  sightlines. Right: Subdividing the edges on the outer boundary of a convex polygon.}
\end{figure}

\subsection{Fixing the boundary cycles}
  In this section we give a small modification to the reduction above to show that we do not
  need to fix the boundary cycles to obtain \ER-hardness.

  We modify the graph $G_A$ by subdividing each edge of the outer face cycle and call the graphs $G_A'$.

  \begin{lemma}
    The graph $G_A'$ is realizable as polygon visibility graph if and only if $G_A$ is realizable.
  \end{lemma}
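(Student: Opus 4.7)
My approach is to prove both directions by constructive arguments, with the main technical difficulty concentrated in the reverse direction.

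For the forward direction, suppose $G_A$ has a realization $P$ as a polygon with holes respecting the prescribed boundary cycles. For each edge $uv$ of the outer boundary I would replace the straight segment $uv$ by a very thin outward triangular spike whose apex is the new subdivision vertex $w$, placed on the exterior side of $P$. By making the spike long and narrow, the interior angle at $w$ becomes arbitrarily small, so that the visibility cone from $w$ through the spike opening contains $u$ and $v$ on its two boundary rays and no other polygon vertex in its interior. Since the spikes are tiny and lie outside the original polygon, they do not obstruct any sightline of $P$, and the interior angles at $u$ and $v$ only become more reflex (which cannot create new visibilities either). The resulting polygon has visibility graph exactly $G_A'$.

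For the reverse direction, suppose $G_A'$ has a realization $P'$ as a polygon with holes, with no boundary cycles prescribed. The first step is combinatorial: the subdivision vertices of $G_A'$ are precisely the vertices of degree exactly $2$, because every original vertex has degree at least $3$ by direct inspection of the construction of $G_A$ (each outer-face-cycle vertex sees several hole vertices, and each hole vertex sees the other two vertices of its own hole plus a number of interval vertices). For each degree-$2$ vertex $w$ with neighbors $u$ and $v$, the two edges of $G_A'$ at $w$ must coincide with the two polygon-boundary edges at $w$ in $P'$; otherwise $w$ would have a visibility outside $\{u,v\}$. I then contract each path $u\text{-}w\text{-}v$ by deleting $w$ and joining $u$ to $v$ by a single boundary segment, producing a polygon $P$ whose outer boundary cycle is the original outer cycle of $G_A$ while the holes are inherited unchanged.

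The main obstacle is to verify that the visibility graph of $P$ is exactly $G_A$, i.e., that contracting the small \emph{ear} triangle bounded by $uw$, $wv$, and the chord $uv$ neither removes nor adds any other visibility. Since $w$ has no neighbors other than $u$ and $v$, the ear contains no polygon vertex other than $w$ and no sightline from another vertex terminates inside it. One then argues that, after possibly performing a small local perturbation to replace each concave (\emph{notch}) configuration at $w$ by a convex (\emph{spike}) configuration without changing the visibility graph, the ear is a tiny region outside the main polygon; collapsing it affects no sightline between two surviving vertices and only introduces the single new boundary edge $uv$. Iterating this over all subdivisions yields a realization of $G_A$ with the prescribed boundary cycles, completing the reduction.
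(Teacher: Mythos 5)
Your reverse direction starts the same way as the paper (degree-$2$ vertices must be the subdivision vertices, and their two graph edges must be the two polygon edges at that vertex, which chains the outer cycle together), but both directions have genuine gaps. The clearest one is the forward construction. If you subdivide the boundary edge $uv$ by erecting an outward triangular spike with apex $w$ on the \emph{whole} edge $uv$, then the visibility window of $w$ into the polygon is the entire open segment $uv$: making the spike longer only narrows the angle at $w$, while the region of the polygon visible from $w$ converges to the full slab of width $|uv|$ behind the edge. In the polygon built in the previous subsection that slab generically contains interval vertices on the opposite side of the large circle and possibly hole vertices, so $w$ would have degree larger than $2$ and the resulting polygon would not realize $G_A'$. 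The paper avoids this by placing each subdivision vertex in a small pocket next to a convex corner (Figure~\ref{fig:pockets_diagonals}, right), so that its visibility is cut off immediately by the adjacent edge; ``long and narrow spike on the whole edge'' is not a correct substitute.

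The second gap is that your reverse direction does not actually recover the \emph{prescribed boundary cycles}, which is the entire point of this lemma: realizability of $G_A$ in the previous subsection means realizability with the given partition into an outer cycle and $n$ triangular holes. After you contract the ears you assert that ``the holes are inherited unchanged,'' but in an arbitrary realization of $G_A'$ nothing a priori forces the remaining $3n$ vertices to be partitioned into the intended $n$ triangles, nor forces the large subdivided cycle to be the \emph{outer} boundary rather than a hole. The paper closes both points: the big cycle must be the outer one because every triple of formerly consecutive boundary vertices is mutually visible, hence every angle on that cycle is convex, and only the outer cycle can consist solely of convex angles; and the $3n$ remaining vertices must form exactly $n$ three-vertex holes because $n$ distinct holes are needed to block the $n$ sightlines of each diagonal vertex (the argument of Lemma~\ref{lem:diagonals}), after which an iterative degree-two argument on the residual graph pins down the exact triangle partition. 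Your ear-contraction idea is salvageable for showing that deleting a degree-$2$ convex vertex does not change visibilities among the surviving vertices (no sightline between two other vertices can cross the two boundary edges of the ear, and it can cross the chord $uv$ at most once), but the reflex-ear case is waved away by an unspecified ``local perturbation,'' and without the cycle-identification argument the lemma is not established.
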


  \begin{proof}
    Each subdivision vertex we added to $G_A'$ is adjacent to its two neighbors
    in each partition into boundary cycles,
    thus the face cycle of the outer boundary is as in $G_A$.
    This cycle is the outer boundary cycle since each triple of formerly consecutive
    outer boundary vertices can see each other, thus gives a convex angle.
    The only boundary cycle that can be formed of convex angles is the outer face cycle.

    The remaining $3n$ vertices have to form $n$ holes, because we need $n$ holes to block
    the visibilities of a diagonal vertex of an interval from the points in the opposite
    interval as shown in the proof of Lemma~\ref{lem:diagonals}.
    Thus each hole consists of exactly three vertices.
    The obtuse vertex of the rightmost vertex has degree two in the induced subgraph of
    the vertices that are not assigned into a cycle yet,
    thus forms a boundary cycle together with its neighbors.
    Iteratively, the obtuse vertex of the rightmost hole has degree
    two in the induced subgraph of vertices that is not assigned to a cycle yet,
    which gives exactly the cycle partition as constructed.
    We obtain a realization of $G_A'$ from a realization of $G_A$ by adding a subdivision vertex in
    a pocket next to a convex corner as shown in
    Figure~\ref{fig:pockets_diagonals} right.
    The uniqueness of the representation of $G_A'$ holds just as for $G_A$.
  \end{proof}

\subsection*{Acknowledgments}
 We thank Jean Cardinal and Stefan Felsner for fruitful discussions.

\bibliographystyle{plain}
\bibliography{alles}

\end{document}